\newcommand{\true}{\mathit{true}}
\newcommand{\false}{\mathit{false}}
\newcommand{\dom}{\textit{dom}}
\DeclareFontFamily{OT1}{pzc}{}
\DeclareFontShape{OT1}{pzc}{m}{it}{<-> s * [1.10] pzcmi7t}{}
\DeclareMathAlphabet{\mathpzc}{OT1}{pzc}{m}{it}
\title{Propositional Dynamic Logic for Hyperproperties}
\author{Jens Oliver Gutsfeld}{Institut für Informatik, Westfälische Wilhelms-Universität Münster, Germany}{jens.gutsfeld@uni-muenster.de}{}{}
\author{Markus Müller-Olm}{Institut für Informatik, Westfälische Wilhelms-Universität Münster, Germany}{markus.mueller-olm@uni-muenster.de}{}{}
\author{Christoph Ohrem}{Institut für Informatik, Westfälische Wilhelms-Universität Münster, Germany}{christoph.ohrem@uni-muenster.de}{}{}
\authorrunning{J. O. Gutsfeld \and M. Müller-Olm \and C. Ohrem}
\keywords{Hyperlogics, Hyperproperties, Model Checking, Automata}
\begin{document}
	\maketitle
	\begin{abstract}
Information security properties of reactive systems like non-interference often require relating different executions of the system to each other and following them simultaneously.
Such \textit{hyperproperties} can also be useful in other contexts, e.g., when analysing properties of distributed systems like linearizability.
Since common logics like LTL, CTL, or the modal $\mu$-calculus cannot express hyperproperties, the hyperlogics HyperLTL and HyperCTL$^*$ were developed to cure this defect.
However, these logics are not able to express arbitrary $\omega$-regular properties.
In this paper, we introduce HyperPDL-$\Delta$, an adaptation of the Propositional Dynamic Logic of Fischer and Ladner for hyperproperties, in order to remove this limitation.
Using an elegant automata-theoretic framework, we show that HyperPDL-$\Delta$ model checking is asymptotically not more expensive than HyperCTL$^*$ model checking, despite its vastly increased expressive power.
We further investigate fragments of HyperPDL-$\Delta$ with regard to satisfiability checking.
\end{abstract}

	\section{Introduction}

Temporal logics like LTL, CTL or CTL$^*$ have been used successfully in verification.
These logics consider paths of a structure (in linear time logics) or  paths and their possible extensions (in branching time logics).
Notably, since they cannot refer to multiple paths at once, they cannot express \textit{hyperproperties} that relate multiple paths to each other.
Examples of hyperproperties include information security properties like non-interference \cite{Clarkson2010} or properties of distributed systems like linearizablity \cite{Bonakdarpour2018}.
In order to develop a dedicated logic for these properties, Clarkson et. al. \cite{Clarkson2014,Finkbeiner2015} introduced HyperLTL and HyperCTL$^*$, which extend LTL and CTL$^*$ by path variables. 
However, just like LTL and CTL \cite{Wolper1983}, they cannot express arbitrary $\omega$-regular properties of traces \cite{Rabe2016}, a desirable property of specification logics \cite{Armoni2002,Kupferman2001}.
Logics like Propositional Dynamic Logic (PDL) \cite{Fischer1979,Lange2006} and Linear Dynamic Logic (LDL)
\cite{DeGiacomo2013,Faymonville2017} are able to do so for single traces. 
As we seek to extend this ability to hyperproperties,  we introduce a variant of PDL for hyperproperties called HyperPDL-$\Delta$ in this paper.
HyperPDL-$\Delta$ properly extends logics like HyperLTL, HyperCTL$^*$, LDL and PDL-$\Delta$ and can express all $\omega$-regular properties over hypertraces in a handy formalism based on regular expressions over programs.
We develop a model checking algorithm for HyperPDL-$\Delta$ inspired by one for HyperCTL$^*$ \cite{Finkbeiner2015} and show that the model checking problem for HyperPDL-$\Delta$ is decidable at no higher asymptotic cost than the corresponding problem for HyperCTL$^*$ despite the vastly increased expressive power.
Our algorithm non-trivially differs from the algorithm in \cite{Finkbeiner2015} in two ways: first of all, it handles more general, regular modalities that subsume the modalities of HyperCTL$^*$ and require different constructions.
Then, we use a different notion of alternation depth (called \textit{criticality}) which conservatively extends their notion, but requires handling structurally different operators and regular expressions.
We also show that for fragments of HyperPDL-$\Delta$ similar to the fragments of HyperLTL considered in \cite{Finkbeiner2016}, the satisfiability problem is decidable. 

This paper is structured as follows: \autoref{section:preliminaries} introduces Kripke Transition Systems and (alternating) Büchi automata.
In \autoref{section:HyperPDL}, we define our new logic HyperPDL-$\Delta$ and describe properties expressible in it.
Afterwards, in \autoref{Section:ModelChecking}, we outline a model checking algorithm for HyperPDL-$\Delta$ and show that it is asymptotically optimal by providing a precise complexity classification. In \autoref{section:satisfiability}, we consider fragments of HyperPDL-$\Delta$ for which the satisfiability problem is decidable. Then, in \autoref{section:expressivity}, we show that HyperPDL-$\Delta$ can express all $\omega$-regular properties over sets of traces and compare it to existing hyperlogics with regard to expressivity. Finally, in \autoref{section:conclusion}, we provide a summary of this paper.
The appendices contain two constructions only sketched in the main body of this paper.
Due to lack of space, some proofs can be found in the appendix of this extended version only. 

\textbf{Related Work.} 
Hyperproperties were systematically analysed in \cite{Clarkson2010} and dedicated temporal logics for hyperproperties, HyperLTL and HyperCTL$^*$, were introduced in \cite{Clarkson2014}. 
An overview of temporal hyperlogics and discussion of their expressive power can be found in \cite{Coenen2019}.
Efficient model checking algorithms for these logics were introduced in \cite{Finkbeiner2015} by Finkbeiner et al. and our model checking algorithm for HyperPDL-$\Delta$ builds on ideas from their construction.
Our satisfiability algorithm, on the other hand, is inspired by the corresponding algorithm for HyperLTL \cite{Finkbeiner2016}.
Recently, Bonakdapour et. al.  proposed \emph{regular hyperlanguages} and a corresponding automata model \cite{Bonakdarpour2020}.
In contrast to our work, their model is concerned with hyperproperties over finite instead of infinite words and does not concern branching-time properties.
Moreover, they study automata-theoretic questions while our focus here is on verification of hyperproperties specified by logical means.
A different line of research for properties involving multiple traces at once is given by \textit{epistemic temporal logics} \cite{Halpern2004}.
An attempt to unify epistemic temporal logics and hyperlogics is given by Bozzelli et. al in \cite{Bozzelli2015} via a variant of HyperCTL$^*$ with past modalities.
PDL was originally introduced in \cite{Fischer1979} by Fischer and Ladner and has 
been extended in multiple ways \cite{Lange2006}.
There are several attempts to extend temporal logic by regular properties: a variant of PDL for linear time properties of finite traces, LDL$_f$, was introduced in \cite{DeGiacomo2013} and was extended upon for the infinite setting in \cite{Faymonville2017} by introducing parametrised operators.
Other regular extensions of temporal logics were studied e.g. by Wolper \cite{Wolper1983} or by Kupferman et. al \cite{Kupferman2001}. 
However, all these extensions do not concern hyperproperties.

	\section{Preliminaries}\label{section:preliminaries}


Let $AP$ be a finite set of atomic propositions and $\Sigma$ a finite set of atomic programs.
A \textit{Kripke Transition System} (KTS) is a tuple $\mathcal{T} = (S, s_0, \{\delta_{\sigma}\mid\sigma \in \Sigma\}, L)$ where $S$ is a finite set of states, $s_0 \in S$ is an initial state, $\delta_\sigma \subseteq S \times S$ is a transition relation for each $\sigma \in \Sigma$ and $L : S \to 2^{AP}$ is a labeling function.
We assume that there are no states without outgoing edges, that is for each $s \in S$, there is $s' \in S$ with $(s,s') \in \delta_{\sigma}$ for some $\sigma \in \Sigma$.
A KTS is a combination of a Kripke Structure and a labeled transition system (LTS) where a Kripke Structure $K$ is a KTS for $|\Sigma| = 1$ and an LTS $T$ is a KTS with the labelling function $s \mapsto \emptyset$ for all $s \in S$.
A path in a KTS $\mathcal{T}$ is an infinite alternating sequence $s_0 \sigma_0 s_1 \sigma_1 ... \in (S\Sigma)^{\omega}$ where $s_0$ is the initial state of $\mathcal{T}$ and $(s_{i}, s_{i+1})\in \delta_{\sigma_i}$ for all $i \geq 0$.
We denote by $\mathit{Paths}(\mathcal{T},s)$ the set of paths in $\mathcal{T}$ starting in $s$ and by $\mathit{Paths}^{*}(\mathcal{T},s)$ the set of corresponding path suffixes $\{p[i,\infty] \mid p \in \mathit{Paths}(\mathcal{T},s), i \in \mathbb{N}_0\}$ where $p[i,\infty]$ is the path suffix of $p$ starting at index $i$.
A trace is an alternating infinite sequence $t \in (2^{AP} \Sigma)^{\omega}$. For a path $\pi = s_0 \sigma_0 s_1 \sigma_1 ...$, the induced trace is given by $L(s_0) \sigma_0 L(s_1) \sigma_1 ...$. For a KTS $\mathcal{T}$ and a state $s \in S$, we write $\mathit{Traces}(\mathcal{T}, s)$ to denote the traces induced by paths of $\mathcal{T}$ starting in $s$. 

An alternating Büchi automaton (ABA) is a tuple $\mathcal{A} = (Q,q_0,\Sigma,\rho,F)$ where $Q$ is a finite set of states, $q_0 \in Q$ is an initial state, $\Sigma$ is a finite alphabet, $\rho : Q \times \Sigma \to \mathbb{B} ^{+}(Q)$ is a transition function mapping each pair of state and input symbol to a non-empty positive boolean combination of successor states and $F \subseteq Q$ is a set of accepting states.
We assume that every ABA has two distinct states $\true \in F$ and $\false \in Q \setminus F$ with $\rho(\true,\sigma) = true$ and $\rho(\false,\sigma) = false$ for all $\sigma \in \Sigma$.
Thus, all maximal paths in an ABA are infinite.
A tree $T$ is a subset of $\mathbb{N}^{*}$ such that for every node $t \in \mathbb{N}^{*}$ and every positive integer $n \in \mathbb{N}$: $t \cdot n \in T$ implies (i) $t \in T$ (we then call $t \cdot n$ a child of $t$), and (ii) for every $0 < m < n$, $t \cdot m \in T$. 
We assume every node has at least one child.
A path in a tree $T$ is a sequence of nodes $t_0 t_1 ...$ such that $t_0 = \varepsilon$ and $t_{i+1}$ is a child of $t_i$ for all $i \in \mathbb{N}_{0}$.
A run of an ABA $\mathcal{A}$ on an infinite word $w \in \Sigma^{\omega}$ is defined as a $Q$-labeled tree $(T,r)$ where $r: T \to Q$ is a labelling function such that $r(\varepsilon) = q_{0}$ and for every node $t \in T$ with children $t_1,...,t_k$, we have $1 \leq k \leq |Q|$ and the valuation assigning true to the states $r(t_1),...,r(t_k)$ and false to all other states satisfies $\rho(r(t),w(|t|))$.
A run $(T,r)$ is an accepting run iff for every path $t_1 t_2 ...$ in $T$, there are infinitely many $i$ with $r(t_i) \in F$.
A word $w$ is accepted by $\mathcal{A}$ iff there is an accepting run of $\mathcal{A}$ on $w$.
The set of infinite words accepted by $\mathcal{A}$ is denoted by $\mathcal{L}(\mathcal{A})$. A nondeterministic Büchi automaton is an ABA in which every transition rule consists only of disjunctions.

We will make use of two well-known theorems about ABA:

\begin{proposition}[\cite{Miyano1984}]\label{dealternation}
	For every ABA $\mathcal{A}$ with $n$ states, there is a nondeterministic Büchi automaton $\text{MH}(\mathcal{A})$ with $2^{\mathcal{O}(n)}$ states that accepts the same language.
\end{proposition}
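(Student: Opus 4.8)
The plan is to establish the classical Miyano--Hayashi construction, which converts an alternating B\"uchi automaton into an equivalent nondeterministic one via a subset-based simulation that simultaneously tracks the set of currently active states and an ``obligation set'' recording which branches still owe a visit to an accepting state. First I would define the state space of $\text{MH}(\mathcal{A})$ to be pairs $(U,V)$ with $V \subseteq U \subseteq Q$, where $U$ is the set of states appearing at the current level of a run tree of $\mathcal{A}$ and $V \subseteq U$ is the subset of those states that have not yet passed through $F$ since the last ``reset''. Since there are at most $3^{n}$ such pairs, this immediately yields the claimed bound of $2^{\mathcal{O}(n)}$ states.

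Next I would specify the transition relation. The idea is that a nondeterministic successor of $(U,V)$ on input $\sigma$ must choose, for each state $q \in U$, a set of successors satisfying the positive boolean formula $\rho(q,\sigma)$; the union of these choices over all $q \in U$ forms the new active set $U'$. For the obligation component, when $V \neq \emptyset$ the new $V'$ is obtained by propagating the successor choices only for states in $V$ and then removing any state that lies in $F$, so that accepting visits discharge obligations. When $V = \emptyset$, all obligations have been met, so we reset $V'$ to be $U' \setminus F$ and declare the transition accepting. Formally, the accepting states of $\text{MH}(\mathcal{A})$ are exactly the pairs with $V = \emptyset$, and the initial state is $(\{q_0\}, \{q_0\} \setminus F)$. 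Because every transition rule of the resulting automaton is a disjunction over the nondeterministic choices, $\text{MH}(\mathcal{A})$ is a genuine nondeterministic B\"uchi automaton.

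The core of the argument is the language equivalence $\mathcal{L}(\text{MH}(\mathcal{A})) = \mathcal{L}(\mathcal{A})$, which I would prove by two inclusions. For the inclusion $\mathcal{L}(\mathcal{A}) \subseteq \mathcal{L}(\text{MH}(\mathcal{A}))$, given an accepting run tree $(T,r)$ I would read off the level sets $U_i = \{r(t) \mid t \in T, |t| = i\}$ and show that the obligation-set dynamics force infinitely many resets: by K\"onig's Lemma, if resets stopped occurring then some state would remain in the obligation set forever, which traces out an infinite path in $T$ that never visits $F$, contradicting acceptance. For the converse inclusion, from an accepting run of $\text{MH}(\mathcal{A})$ I would reconstruct a run tree of $\mathcal{A}$ by unfolding the recorded successor choices, and argue that every infinite path in this tree visits $F$ infinitely often, precisely because the obligation set attached to that path is cleared (forcing a passage through $F$) between consecutive resets.

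The main obstacle I anticipate is the second inclusion, specifically the bookkeeping needed to show that \emph{every} path of the reconstructed run tree---not merely some path---visits $F$ infinitely often. The subtlety is that the obligation component $V$ tracks a \emph{set} of pending states collectively rather than individual branches, so I would need a careful argument, typically via K\"onig's Lemma applied to the subtree of branches that have not yet discharged their obligation, to rule out an infinite branch that perpetually escapes the obligation set by spawning fresh successors. Establishing that the reset mechanism genuinely certifies acceptance along all branches simultaneously is the delicate point; the remaining verifications (well-definedness of transitions, the state count, and the disjunctive form) are routine.
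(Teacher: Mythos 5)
You should first note that the paper does not prove this statement at all: it is quoted as a known result with a citation to Miyano--Hayashi, so there is no in-paper proof to compare against. Your construction itself is the correct classical one — states $(U,V)$ with $V \subseteq U \subseteq Q$ (at most $3^{n} = 2^{\mathcal{O}(n)}$ of them), the breakpoint transition rule, acceptance at $V = \emptyset$, initial state $(\{q_0\},\{q_0\}\setminus F)$. However, your proof sketch has a genuine gap, and it sits in the inclusion you treat as the easy one, $\mathcal{L}(\mathcal{A}) \subseteq \mathcal{L}(\text{MH}(\mathcal{A}))$. Reading off the level sets $U_i$ of an accepting run tree and taking, for each $q \in U_i$, the \emph{union} of the successor choices of all level-$i$ nodes labelled $q$ does not in general produce an accepting run of $\text{MH}(\mathcal{A})$. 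The König step fails exactly here: the infinite chain through the obligation sets lives in the $(\text{level},\text{state})$ graph induced by the union choices, and such a chain need not be a path of the tree $T$, because two nodes at the same level carrying the same label may have chosen different successor sets, and the chain can mix them. Concretely, take states $x \notin F$, $y \in F$ with $\rho(x,\sigma) = (x \wedge y) \vee y$ and $\rho(y,\sigma) = x$, and the run tree in which every $x$-node whose parent is labelled $y$ (or is the root) takes children $\{x,y\}$, while every $x$-node whose parent is labelled $x$ takes the single child $y$. Every branch of this tree visits $y$ infinitely often, so it is accepting; but from some level on both kinds of $x$-nodes coexist, the union choices are $S_x = \{x,y\}$ and $S_y = \{x\}$, and the obligation set then stays equal to $\{x\}$ forever — the run you construct has only finitely many breakpoints and is rejecting, even though the word is in the language (a run with the choice $S_x = \{y\}$ accepts).

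So the first inclusion needs an idea your proposal is missing. Either invoke the existence of \emph{memoryless} (positional) accepting runs, i.e., runs whose successor choice depends only on the pair $(\text{level},\text{state})$ — for instance via positional determinacy of Büchi games on the acceptance game — after which union choices are faithful, tree paths coincide with chain paths, and your König argument goes through; or argue directly: assign to each tree node $t$ not labelled in $F$ the rank $d(t)$, the least $m$ such that every path from $t$ reaches $F$ within $m$ steps (finite by König's Lemma together with acceptance), and let the run of $\text{MH}(\mathcal{A})$ choose, for each label $q \in U_i$, the children of a level-$i$ node labelled $q$ of \emph{minimal rank}; then $\max_{q \in V_i} d$ strictly decreases along the run, forcing a breakpoint, and this repeats forever. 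Conversely, the direction you flag as the delicate one is in fact routine and needs no König argument: the run tree you unfold from an accepting $\text{MH}(\mathcal{A})$-run is memoryless by construction, and a direct induction shows that after a breakpoint any branch that avoids $F$ remains inside the obligation set, so it must pass through $F$ no later than the next breakpoint; the "escaping by spawning fresh successors" phenomenon you worry about cannot occur there — it is precisely the phenomenon that breaks the naive argument in the other direction.
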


\begin{proposition}[\cite{Miyano1984},\cite{Kupferman2001a}]\label{complementation}
	For every ABA $\mathcal{A}$ with $n$ states, there is an ABA $\overline{\mathcal{A}}$ with $\mathcal{O}(n^{2})$ states that accepts the complement language, i.e., $\mathcal{L}(\overline{\mathcal{A}})= \overline{\mathcal{L}(\mathcal{A})}$.
\end{proposition}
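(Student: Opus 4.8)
The plan is to complement $\mathcal{A}$ in two stages: first dualise the automaton to obtain an \emph{alternating co-Büchi} automaton for the complement language while keeping the number of states at $n$, and then re-express that co-Büchi automaton as an alternating Büchi automaton at only a quadratic cost.

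First I would dualise. Given $\mathcal{A} = (Q, q_0, \Sigma, \rho, F)$, define $\mathcal{A}^{d} = (Q, q_0, \Sigma, \rho^{d}, F)$ where $\rho^{d}(q,\sigma)$ is the De Morgan dual of $\rho(q,\sigma)$, i.e. the positive boolean formula obtained by exchanging $\wedge$ with $\vee$ and the constants $true$ and $false$; under this operation the two sink states $\true$ and $\false$ exchange their roles. A standard induction on run trees shows that a word is accepted by $\mathcal{A}^{d}$ under the dual acceptance condition iff it is rejected by $\mathcal{A}$. Since the dual of the Büchi condition ``every branch visits $F$ infinitely often'' is the co-Büchi condition ``every branch visits $F$ only finitely often'', $\mathcal{A}^{d}$ is an alternating co-Büchi automaton with $\mathcal{L}(\mathcal{A}^{d}) = \overline{\mathcal{L}(\mathcal{A})}$ and $n$ states.

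Then I would turn the co-Büchi automaton $\mathcal{A}^{d}$ into an alternating Büchi automaton using the rank-based construction of Kupferman and Vardi. The state set becomes $Q \times \{0,1,\dots,2n\}$, which has $\mathcal{O}(n^{2})$ elements. Intuitively, a rank bounds how long a branch may still visit $F$: reading $\sigma$ from $(q,i)$, every state $q'$ occurring in $\rho^{d}(q,\sigma)$ is replaced by a disjunction $\bigvee_{j \le i}(q',j)$, subject to the side condition that $j$ must be even whenever $q' \in F$. Ranks are thus non-increasing along every branch and hence eventually stabilise; declaring exactly the states $(q,i)$ with $i$ odd to be accepting yields a weak, and therefore Büchi, acceptance condition, because a branch stabilises at an odd rank precisely when, from some point on, it avoids $F$.

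The hard part will be establishing the two directions of correctness of the ranking step, and in particular the bound $2n$ on the ranks. The easy direction, that any accepting run of the ranked automaton projects onto a co-Büchi-accepting run of $\mathcal{A}^{d}$, follows from the side condition on $F$-states together with the fact that ranks stabilise. The converse requires extracting a valid ranking from an accepting run tree of $\mathcal{A}^{d}$: one analyses the directed acyclic graph of the run level by level and repeatedly removes nodes that have only finitely many descendants and nodes from which $F$ is no longer reachable, assigning ranks according to the round in which a node disappears. Bounding the number of such rounds by $2n$ is the genuinely combinatorial core of the argument, and it is exactly what keeps the blow-up quadratic rather than exponential; everything else is bookkeeping about positive boolean combinations and run trees.
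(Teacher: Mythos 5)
Your proposal is correct and is essentially the argument behind the paper's citations: the paper gives no proof of its own for this proposition, deferring to Miyano--Hayashi and Kupferman--Vardi, and your two-stage argument (De Morgan dualisation into an alternating co-B\"uchi automaton, followed by the Kupferman--Vardi rank-based translation with ranks in $\{0,\dots,2n\}$ and odd-rank states accepting) is precisely the construction in the cited work \cite{Kupferman2001a}, including the correct side condition that states in $F$ only receive even ranks and the level-by-level DAG analysis that bounds the ranks by $2n$. Nothing further is needed.
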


	\section{Propositional Dynamic Logic for Hyperproperties}\label{section:HyperPDL}
In this section, we define our new logic, HyperPDL-$\Delta$.
Structurally, it consists of formulas $\varphi$ referring to state labels and programs $\alpha$ referring to transition labels.
We use the syntax of HyperCTL* as a basis for formulas but replace the modalities $\bigcirc, \mathcal{U}$ and $\mathcal{R}$ by PDL-like expressions $\langle \alpha \rangle \varphi, [\alpha] \varphi$ and $\Delta \alpha$ constructed from programs.
These programs $\alpha$ are regular expressions over tuples of atomic programs $\tau$ capturing the transition behaviour on the considered paths.
Additionally, we allow test-operators $\varphi ?$ in $\alpha$ in order to enable constructions like conditional branching.
\begin{definition}[Syntax of HyperPDL-$\Delta$]
	Let $N = \{\epsilon, \pi_1, \pi_2 \ldots \}$ be a set of path variables with a special path variable $\epsilon \in N$. A formula $\varphi$ is a HyperPDL-$\Delta$ formula if it is built from the following context-free grammar:
	\begin{align*}
		\varphi ::= &\; \exists \pi . \varphi \mid \forall \pi . \varphi \mid a_{\pi} \mid \lnot \varphi \mid \varphi \land \varphi \mid \varphi \lor \varphi \mid \langle \alpha \rangle \varphi \mid [ \alpha ]\varphi \mid \Delta \alpha \\
		\alpha ::= &\; \tau \mid \varepsilon \mid \alpha + \alpha \mid \alpha \cdot \alpha \mid \alpha^{*} \mid \varphi ?
	\end{align*}
	where $\pi \in N \setminus \{\epsilon\}$, $a \in AP$ and $\tau \in (\Sigma \cup \{\cdot\})^n$ for the number $n > 0$ of path quantifiers that $\alpha$ is in scope of.
	The constructs $\langle \alpha \rangle \varphi, [ \alpha ]\varphi$ and $\Delta \alpha$ are only allowed in scope of at least one quantifier.
\end{definition}

We call a HyperPDL-$\Delta$ formula $\varphi$ \textit{closed} iff all occurrences of path variables $\pi$ in $\varphi$ (as indices of atomic propositions or in atomic programs) are bound by a quantifier.
In this paper, we only consider closed formulas $\varphi$.
In programs $\alpha$, each component of tuples $\tau \in (\Sigma \cup \{\cdot\})^n$ corresponds to one of the path variables bound by a quantifier $\alpha$ is in scope of.
We assume that a quantifier that is in scope of $i-1$ other quantifiers quantifies path variable $\pi_i$.

Connectives inherited from HyperCTL$^*$ are interpreted analogously:
quantifiers $\exists$ and $\forall$ should be read as ``along some path''  and ``along all paths''.
Using different path variables $\pi$ enables us to refer to multiple paths at the same time.
For example, with $\forall \pi_1. \exists \pi_2. \exists \pi_3. \varphi$, one can express that for all paths $\pi_1$, there are paths $\pi_2$ and $\pi_3$ such that $\varphi$ holds along these three paths.
Boolean connectives are defined in the usual way.
Atomic propositions $a \in AP$ express information about a state and have to be indexed by a path variable $\pi$ to express on which path we expect $a$ to hold.

Intuitively, a program $\alpha$ explores all paths it is in scope of synchronously and thus allows us to pose a regular constraint on the sequence of atomic programs visited on path prefixes of equal length. 
In addition, properties of infinite suffixes can be required at certain points during the exploration using tests $\varphi ?$.
In formulas $\varphi$, atomic propositions $a_{\pi}$ on single paths suffice to relate behavior on different paths because boolean connectives are available.
Referring to occurences of atomic programs on single paths in programs $\alpha$ in a similar way, however, would reduce expressivity since neither negation nor conjunction are present in $\alpha$. 
As a remedy, we use tuples $\tau \in (\Sigma \cup \{\cdot\})^n$ to refer to atomic programs on paths $\pi_1, ..., \pi_n$, where $\sigma$ in position $i$ means that on path $\pi_i$, we expect a use of $\sigma$ to reach the next state.
A wildcard symbol $\cdot$ expresses that any atomic program is allowed on the corresponding path.

The constructs using $\alpha$ can be interpreted as follows:
the diamond operator $\langle \alpha \rangle \varphi$ means that $\alpha$ matches a prefix of the current paths after which $\varphi$ holds.
The box operator $[\alpha] \varphi$ is the dual of $\langle \alpha \rangle \varphi$, meaning $\varphi$ holds at the end of all prefixes matching $\alpha$.
The last construct $\Delta \alpha$ is of a different kind and expresses $\omega$-regular rather than regular properties.
It says that $\alpha$ occurs repeatedly, i.e., the currently quantified paths can be divided into infinitely many segments matching $\alpha$.
$\Delta \alpha$ expresses a variant of a Büchi condition.
Instead of moving from accepting states to accepting states in a Büchi automaton, one moves from initial states to accepting states repeatedly.

Using our logic, common hyperproperties can be expressed easily and intuitively.
Let us consider two examples: the first, \textit{observational determinism} \cite{Clarkson2010}, states that if two executions of a system receive equal low security inputs, they are indistinguishable for a low security observer all the time.
It can be expressed by $\forall \pi_1 . \forall \pi_2 . (\bigwedge_{a \in L} (a_{\pi_1} \leftrightarrow a_{\pi_2})) \rightarrow [\bullet^*] \bigwedge_{a \in L} (a_{\pi_1} \leftrightarrow a_{\pi_2})$, where low security observable behaviour is modelled by the atomic propositions in $L$.
Here, we use the common boolean abbreviations $\rightarrow$ for implication and $\leftrightarrow$ for equivalence as well as $\bullet$ as an abbreviation the program $\tau = (\cdot,...,\cdot)$.
The second example, \textit{generalized noninterference} \cite{Clarkson2010}, states that high security injections do not interfere with low security observable behaviour. 
It can be expressed by stating that for all pairs of executions $\pi_1,\pi_2$ there is a third execution $\pi_3$ agreeing with $\pi_1$ on high security injections and is indistinguishable from $\pi_2$ for a low security observer: $\forall \pi_1 . \forall \pi_2 . \exists \pi_3 . [\bullet^*] \bigwedge_{a \in H} (a_{\pi_1} \leftrightarrow a_{\pi_3}) \land \bigwedge_{a \in L} (a_{\pi_2} \leftrightarrow a_{\pi_3})$.
Since these hyperproperties can already be expressed in HyperLTL \cite{Clarkson2014}, which is subsumed by our logic by encoding $\varphi_1 \mathcal{U} \varphi_2$ formulas with $\langle (\varphi_1 ? \cdot \bullet)^{*} \rangle \varphi_2$, it is no surprise, that they can be expressed in HyperPDL-$\Delta$ as well.

The ability to express arbitrary $\omega$-regular properties however, allows a much more fine-grained analysis of a system than HyperLTL.
For example, by replacing the program $\bullet^*$ with $(\bullet \cdot \bullet)^* \cdot \bullet \cdot (\sigma_1,\sigma_1)$ in the observational determinism formula, we can restrict the requirement on low security outputs to only apply for every other state with the additional constraint that some specific program $\sigma_1$ was last executed in both $\pi_1$ and $\pi_2$.
As was argued in \cite{Armoni2002} for linear time specification logics, the ability to express $\omega$-regular properties can indeed become a practical issue when in an assume-guarantee setting detailed information about the behaviour of the context has to be taken into account in order to prove properties of interest.
This indicates that availability of $\omega$-regular properties is not just a theoretical issue in specification logics.

While there are hyperlogics with the ability to express all $\omega$-regular languages \cite{Coenen2019}, these lack properties desirable for verification:
HyperQPTL obtains the ability to express $\omega$-regular properties from the addition of propositional quantification, which complicates its use for specification purposes since the user has to keep track of heterogeneous types of quantifiers.
The simple property that all executions $\pi_1$ and $\pi_2$ agree on propositions from a set $P$ in every other state for example is expressed by the HyperQPTL formula $\forall \pi_1. \forall \pi_2.\exists t : t \land \square (\bigcirc t \leftrightarrow \lnot t) \land \square(t \rightarrow \bigwedge_{a \in P}a_{\pi_1} \leftrightarrow a_{\pi_2})$ \cite{Kesten2002}.
The specification of this property in HyperPDL-$\Delta$ is much more direct:  $\forall \pi_1.\forall \pi_2. [(\bullet \cdot \bullet)^*] \bigwedge_{a \in P} a_{\pi_1} \leftrightarrow a_{\pi_2}$.
This example also illustrates another problem: due to the additional quantifier alternation, the only known model checking algorithm for HyperQPTL \cite{Rabe2016} is exponentially more expensive than that of HyperPDL-$\Delta$ for such formulas.
S1S[E] on the other hand, while being even more expressive than HyperPDL-$\Delta$, has an undecidable model checking problem \cite{Coenen2019}.

Before formally defining our logic's semantics, we introduce some notation.
We call a partial function $\Pi: N \leadsto \mathit{Paths}^{*}(\mathcal{T},s_0)$ with $\dom(\Pi) = \{\epsilon,\pi_1,...,\pi_n\}$ a path assignment and denote by $\textit{PA}$ the set of all path assignments.
In the context of a subformula $\varphi$, $\dom(\Pi)$ contains exactly the variables it is in scope of as well as $\epsilon$.
The path variable $\epsilon$ refers to the most recently \textit{assigned} path in a path assignment and is used to ensure that paths induced by quantifiers branch from the most recently quantified path.
We use $\{\epsilon \to p\}$ for a path $p$ to denote the path assignment $\Pi$ with $\dom(\Pi) = \{\epsilon\}$ and $\Pi(\epsilon) = p$.
We introduce $\Pi[i,\infty]$ as a notation to manipulate path assignments $\Pi$ such that $\Pi[i,\infty](\pi) = \Pi(\pi)[i,\infty]$ holds for all $\pi \in \dom(\Pi)$.
Also, $\Pi[\pi_i \to p]$ is a notation for a path assignment $\Pi'$ where $\Pi'(\pi_i) = p$ and $\Pi'(\pi_j) = \Pi(\pi_j)$ for all $j \neq i$.
As a convention, we do not count $\epsilon$ when determining $|\dom(\Pi)|$.
For a tuple $\tau = (\sigma_1,...,\sigma_n)$, we write $\tau|_i$ to refer to $\sigma_i$.

We write $\Pi \models_{\mathcal{T}} \varphi$ to denote that in the context of a KTS $\mathcal{T}$, a path assignment $\Pi$ fulfills a formula $\varphi$.
We also write $(\Pi,i,k) \in R(\alpha)$ for a path assignment $\Pi$ and two even numbers $i \leq k$ to denote that the transition labels on the paths in $\Pi$ between $i$ and $k$ match $\alpha$.
Formally, we define these two relations as follows:

\begin{definition}[Semantics of HyperPDL-$\Delta$]
	Given a KTS $\mathcal{T} = (S,s_0,\{\delta_{\sigma} \mid \sigma \in \Sigma\},L)$, we inductively define both satisfaction of formulas $\varphi$ and programs $\alpha$ on path assignments $\Pi$. 
	\begin{align*}
		&\Pi \models_{\mathcal{T}} \exists \pi. \varphi &\text{iff } & \text{there is } p \in \mathit{Paths}(\mathcal{T},\Pi(\epsilon)(0)) \text{ s.t. } \Pi[\pi \to p, \epsilon \to p] \models_{\mathcal{T}} \varphi \\
		&\Pi \models_{\mathcal{T}} \forall \pi. \varphi &\text{iff } & \text{for all } p \in \mathit{Paths}(\mathcal{T},\Pi(\epsilon)(0)) : \Pi[\pi \to p, \epsilon \to p] \models_{\mathcal{T}} \varphi \\
		&\Pi \models_{\mathcal{T}} a_{\pi} &\text{iff } & a \in L(\Pi(\pi)(0)) \\
		&\Pi \models_{\mathcal{T}} \lnot \varphi &\text{iff }  &\Pi \not\models_{\mathcal{T}} \varphi \\
		&\Pi \models_{\mathcal{T}} \varphi_{1} \land \varphi_{2} &\text{iff } & \Pi\models_{\mathcal{T}} \varphi_{1} \text{ and } \Pi\models_{\mathcal{T}} \varphi_{2} \\
		&\Pi \models_{\mathcal{T}} \varphi_{1} \lor \varphi_{2} &\text{iff } & \Pi\models_{\mathcal{T}} \varphi_{1} \text{ or } \Pi\models_{\mathcal{T}} \varphi_{2} \\
		&\Pi \models_{\mathcal{T}} \langle \alpha \rangle \varphi &\text{iff } & \text{there is } i \geq 0 \text{ s.t. } \Pi[i,\infty] \models_{\mathcal{T}} \varphi \text{ and } (\Pi,0,i) \in R(\alpha) \\
		&\Pi \models_{\mathcal{T}} [ \alpha ] \varphi &\text{iff } & \text{for all } i \geq 0 \text{ with } (\Pi,0,i) \in R(\alpha) : \Pi[i,\infty] \models_{\mathcal{T}} \varphi \\
		&\Pi \models_{\mathcal{T}} \Delta \alpha &\text{iff }  & \text{there are } 0 = k_1 \leq k_2 \leq ... \text{ s.t. for all } i \geq 1 : \\
		& & &(\Pi,k_{i},k_{i+1}) \in R(\alpha)
		\end{align*}
		\begin{align*}
		&(\Pi,i,k) \in R(\tau) &\text{iff }  & k = i+2 \text{ and for all } 1 \leq l \leq |\dom(\Pi)|: \\
		& & &\tau|_l = \cdot \text{ or }\Pi(\pi_l)(i+1) = \tau|_l \\
		&(\Pi,i,k) \in R(\varepsilon) &\text{iff }  & i = k \\
		&(\Pi,i,k) \in R(\alpha_{1}+\alpha_{2}) &\text{iff }  & (\Pi,i,k) \in R(\alpha_{1}) \text{ or } (\Pi,i,k) \in R(\alpha_{2}) \\
		&(\Pi,i,k) \in R(\alpha_{1} \cdot \alpha_{2}) &\text{iff }  & \text{there is } j \text{ s.t. } i \leq j \leq k, (\Pi,i,j) \in R(\alpha_{1})\text{ and } \\
		& & & (\Pi,j,k) \in R(\alpha_{2})\\
		&(\Pi,i,k) \in R(\alpha^{*}) &\text{iff }  & \text{there are } l \geq 0,i = j_{0} \leq j_{1} \leq ... \leq j_{l} = k \text{ s.t.} \\
		& & &\text{ for all } 0 \leq m < l: (\Pi,j_{m},j_{m+1}) \in R(\alpha) \\
		&(\Pi,i,k) \in R(\varphi ?) &\text{iff }   & i = k \text{ and } \Pi[i,\infty] \models_{\mathcal{T}} \varphi
	\end{align*}
\end{definition}

A KTS $\mathcal{T}$ satisfies a formula $\varphi$, denoted by $\mathcal{T} \models \varphi$, iff $\{\epsilon \to p\} \models_{\mathcal{T}} \varphi$ holds for an arbitrary $p \in \mathit{Paths}(\mathcal{T},s_0)$.
Note that the choice of $p$ ensures that the outermost quantified paths in a formula always branch from the starting state of $\mathcal{K}$, i.e. $s_0$.

	\section{Model Checking HyperPDL-$\Delta$}\label{Section:ModelChecking}
In order to tackle the model checking problem for HyperPDL-$\Delta$, that is to check whether $\mathcal{T} \models \varphi$ holds for arbitrarily given KTS $\mathcal{T}$ and closed HyperPDL-$\Delta$ formulas $\varphi$, we develop a new algorithm inspired by the HyperCTL$^*$ model checking algorithm from \cite{Finkbeiner2015}.
A crucial idea is to represent a path assignment $\Pi$ with $\dom(\Pi) = \{\epsilon,\pi_1,...,\pi_n\}$ by an $\omega$-word over $(S^n \times \Sigma^n)$.
Formally, we define a translation function $\nu: \textit{PA} \to \bigcup_{n \in \mathbb{N}_0}(S^n \times \Sigma^n)^{\omega}$ such that a path assignment $\Pi$ with $\Pi(\pi_i) = s_i^0 \sigma_i^0 s_i^1 \sigma_i^1 ...$ is mapped to $\nu(\Pi) = ((s_0^0,...,s_n^0),(\sigma_0^0,...,\sigma_n^0))((s_0^1,...,s_n^1),(\sigma_0^1,...,\sigma_n^1))... \in (S^n \times \Sigma^n)^{\omega}$ for $n = |\dom(\Pi)|$.
Note that $\epsilon$ need not be encoded separately in $\nu(\Pi)$ since $\Pi(\pi_n) = \Pi(\epsilon)$ always holds.
Similar to the notation for $\tau$ used before, we use the notation $\mathpzc{s}$ to refer to tuples $(s_1,...,s_n)$ and write $\mathpzc{s}|_i$ to refer to $s_i$.
Then, given a formula $\varphi$ and a KTS $\mathcal{T}$, we construct an ABA $\mathcal{A}_{\varphi}$ recognising $\nu(\Pi)$ iff $\Pi \models_{\mathcal{T}} \varphi$ holds.

First, we transform all formulas into a variation of negation normal form, where only existential quantifiers are allowed and negation can only occur in front of existential quantifiers, atomic propositions or $\Delta$s.
This form differs from conventional NNF in that negated existential instead of universal quantifiers are used, because they can be handled more efficiently in our setup.
The transformation is done by driving all negations in the formula inwards using De Morgan's laws and the duality $\lnot \langle \alpha \rangle \varphi \equiv [ \alpha ] \lnot \varphi$ while also replacing universal quantifiers $\forall$ with $\lnot \exists \lnot$ and cancelling double negations successively.
For example, the formula $\forall \pi . [\alpha] \lnot a_{\pi}$ is transformed into $\lnot \exists \pi . \lnot [\alpha] \lnot a_{\pi}$, $\lnot \exists \pi . \langle \alpha \rangle \lnot \lnot a_{\pi}$ and then $\lnot \exists \pi . \langle \alpha \rangle a_{\pi}$ successively.

Then, as another preprocessing step, all programs $\alpha$ appearing in the formula are inductively translated to an intermediate automaton representation $M_{\alpha} = (Q, q_0, \Sigma^n, \rho, q_f, \Psi)$.
The automaton $M_{\alpha}$ can be seen as a nondeterministic finite automaton (NFA) with access to oracles for the tests in $\alpha$ which recognises all prefixes up to index $j$ of the $\omega$-word $\nu(\Pi[i,\infty])$ such that $(\Pi,i,j) \in R(\alpha)$.
This is formalised in \autoref{constructionlemma}.
The only differences in syntax when compared to a conventional NFA are (i) there is exactly  one final state $q_f$ instead of a set $F$, (ii) $\varepsilon$-edges are not eliminated and (iii) we have a state marking function $\Psi$ mapping every state $q \in Q$ to a singleton or empty set of formulas $\Psi(q)$.
State markings $\Psi(q)$ are introduced to tackle tests $\psi ?$ and are later replaced by transitions to automata $\mathcal{A}_{\psi}$, which we define in the construction for formulas.
These state markings make the standard elimination of $\varepsilon$-edges impossible, which is why we delay the elimination until the markings are eliminated as well.
This approach is similar to the one used in \cite{Faymonville2017} to transform LDL formulas into automata.
However, we apply the construction in a hyperlogic context and make it more succinct by  offering an alternative approach to constructing the transition function, thus avoiding an exponential blowup.

\textbf{Construction of \boldmath$M_{\alpha}$\unboldmath:} We now describe the construction of $M_{\alpha}$.
When an expression $\alpha$ has one or more subexpressions $\alpha_i$, we assume that automata $M_{\alpha_i} = (Q_i, q_{0,i}, {\Sigma}^n, \rho_i, q_{f,i}, \Psi_i)$ are already constructed.
Intuitively, most cases are analogous to the translation from regular expressions to NFA.
For the case of a test $\psi ?$, a state marked with $\psi$ is introduced in between starting and final state.
Detailed constructions are shown in \autoref{construction:alpha}.

\begin{figure}[t]
	\begin{tabular}{l l}
		\boldmath$\tau$\unboldmath & $M_{\alpha} = (\{q_0, q_1\}, q_0, {\Sigma}^n, \rho, q_1, \Psi)$, $\Psi(q_i) = \emptyset$ \\
		& $\rho(q,(\sigma_1,\dots,\sigma_n)) = \begin{cases}
		\{q_1\} &\text{if } \forall i . \tau|_i = \cdot \lor \tau|_i = \sigma_i \text{ and } q = q_0 \\
		\emptyset &\text{else}
		\end{cases}$\\
		& $\rho(q,\varepsilon) = \emptyset$ \\
		\boldmath$\varepsilon$\unboldmath & $M_{\alpha} = (\{q_0,q_1\},q_0,{\Sigma}^n, \rho,q_1,\Psi)$, $\Psi(q_i) = \emptyset$ \\
		& $\rho(q_i,\tau) = \emptyset$ \\
		& $\rho(q_i,\varepsilon) = \begin{cases}
		\{q_{i+1}\} &\text{if } i = 0 \\
		\emptyset &\text{else}
		\end{cases}$\\
		\boldmath$\alpha_1 + \alpha_2$\unboldmath & $M_{\alpha} = (Q_1 \dot\cup Q_2 \dot\cup \{q_0,q_f\}, q_0, {\Sigma}^n, \rho, q_f, \Psi)$,\\
		& $\Psi(q_0) = \Psi(q_f) = \emptyset$, $\Psi(q) = \Psi_i(q)$ for $q \in Q_i$ \\
		& $\rho(q,\tau) = \begin{cases}
		\rho_i(q,\tau) &\text{if } q \in Q_i \\
		\emptyset &\text{else}
		\end{cases}$\\
		& $\rho(q,\varepsilon) = \begin{cases}
		\{q_{0,1},q_{0,2}\} &\text{if } q = q_0 \\
		\rho_i (q,\varepsilon) &\text{if } q \in Q_i \setminus \{q_{f,i}\} \\
		\rho_i (q,\varepsilon) \cup \{q_f\} &\text{if } q = q_{f,i}
		\end{cases}$ \\
		\boldmath$\alpha_1 \cdot \alpha_2$\unboldmath & $M_{\alpha} = (Q_1 \dot\cup Q_2, q_{0,1}, {\Sigma}^n, \rho, q_{f,2}, \Psi)$, $\Psi(q) = \Psi_i(q)$ for $q \in Q_i$ \\
		& $\rho(q,\tau) = \rho_i(q,\tau)$ for $q \in Q_i$ \\
		& $\rho(q,\varepsilon) = \begin{cases}
		\rho_i(q,\varepsilon) &\text{if } q \in Q_i, q \neq \{q_{f,1}\} \\
		\rho_1(q,\varepsilon) \cup \{q_{0,2}\} &\text{if } q = q_{f,1}
		\end{cases}$\\
		\boldmath$(\alpha_1)^*$\unboldmath & $M_{\alpha} = (Q_1 \dot\cup \{q_0,q_f\}, q_0, {\Sigma}^n, \rho, q_f, \Psi)$ ,$\Psi(q_0) = \Psi(q_f) = \emptyset$, $\Psi(q) = \Psi_1(q)$, for $q \in Q_1$ \\
		& $\rho(q,\tau) = \begin{cases}
		\rho_1(q,\tau) & \text{if } q \in Q_1 \\
		\emptyset & \text{else}
		\end{cases}$\\
		& $\rho(q,\varepsilon) = \begin{cases}
		\rho_1(q,\varepsilon) &\text{if } q \not\in \{q_0,q_f,q_{f,1}\} \\
		\{q_{0,1},q_f\} &\text{if } q = q_0 \\
		\{q_0\}& \text{if } q = q_f\\
		\rho_1(q,\varepsilon) \cup \{q_f\} &\text{if } q = q_{f,1}
		\end{cases}$\\
		\boldmath$\psi ?$\unboldmath & $M_{\alpha} = (\{q_0, q_1, q_2\}, q_0, {\Sigma}^n, \rho, q_2, \Psi)$,$\Psi(q_0) =\Psi(q_2) = \emptyset$, $\Psi(q_1) = \{\psi\}$ \\
		& $\rho(q,\tau) = \emptyset$ \\
		& $\rho(q_i,\varepsilon) = \begin{cases}
		\{q_{i+1}\} &\text{if } i = 0,1\\
		\emptyset &\text{else}
		\end{cases}$
	\end{tabular}
	\caption{Construction of $M_{\alpha}$}
	\label{construction:alpha}
\end{figure}

Let $\xRightarrow{\varepsilon}_X \subseteq Q \times Q$ for a set of formulas $X$ be the smallest relation such that (i) $q \xRightarrow{\varepsilon}_{\Psi(q)} q$ and (ii) $q' \xRightarrow{\varepsilon}_{X} q''$ and $q' \in \rho(q,\varepsilon)$ imply $q \xRightarrow{\varepsilon}_{X \cup \Psi(q)} q''$.
Then, for $\tau \in \Sigma^n$, let $\xRightarrow{\tau}_X$ be the smallest relation such that $q \xRightarrow{\tau}_{X} q''$ iff there is a $q' \in Q$ such that $q \xRightarrow{\varepsilon}_{X} q'$ and $q'' \in \rho(q',\tau)$.
These relations capture the encountered markings along $\varepsilon$-paths in $M_{\alpha}$ in the following way: $q \xRightarrow{\varepsilon}_{X} q'$ holds if there is an $\varepsilon$-path from $q$ to $q'$ in $M_{\alpha}$ that encounters exactly the state markings in the set $X$.
$q \xRightarrow{\tau}_{X} q'$ is used to describe the same behaviour, but requires an additional $\tau$-step at the end.

We obtain the following Lemma:

\begin{lemma}\label{constructionlemma}
	Let $\Pi$ be a path assignment with $\nu(\Pi) = (\mathpzc{s}_0\tau_1)(\mathpzc{s}_2\tau_3) ...$ and $\alpha$ a program, then $(\Pi,i,k) \in R(\alpha)$ iff there is a state sequence $q_0 q_1 ... q_m$ with $m = \frac{k-i}{2}$ in $M_{\alpha}$ and sets  of formulas $X_0,...,X_m$ such that \\
	(i) $q_0$ is the initial state of $M_{\alpha}$, \\
	(ii) $q_m \xRightarrow{\varepsilon}_{X_m} q_f$ for the final state $q_f$ of $M_{\alpha}$, \\
	(iii) $q_{l} \xRightarrow{\tau_{i+2l+1}}_{X_l} q_{l+1}$ for all $l<m$ and \\
	(iv) $\psi \in X_l$ implies $\Pi[i+2l,\infty] \models_{\mathcal{T}} \psi$ for all $l \leq m$.
\end{lemma}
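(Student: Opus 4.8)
The plan is to prove both directions of the biconditional by structural induction on the program $\alpha$, since both $R(\alpha)$ and the automaton $M_\alpha$ are defined by recursion on the structure of $\alpha$. Throughout, the truth values $\Pi[i',\infty] \models_{\mathcal{T}} \psi$ of the test formulas occurring in $\alpha$ are treated as given oracles; condition (iv) merely records that every marking encountered along the run is satisfied at the index where it is read, so the induction never needs to unfold the meaning of a test. This keeps the argument entirely at the level of the program structure, even though the full construction of $\mathcal{A}_\varphi$ is defined by mutual recursion.

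For the base cases $\tau$, $\varepsilon$ and $\psi?$, I would inspect the gadgets defined in \autoref{construction:alpha} and match them against the defining clauses of $R$. For $\tau$ the only run has $m=1$, reads one program tuple via the single $\rho$-edge whose guard is exactly the matching condition of $R(\tau)$, and collects no markings. For $\varepsilon$ and $\psi?$ the only run has $m=0$ and consists of a pure $\varepsilon$-closure $q_0 \xRightarrow{\varepsilon}_{X_0} q_f$; here $X_0 = \emptyset$ for $\varepsilon$ and $X_0 = \{\psi\}$ for $\psi?$, so (iv) collapses precisely to the side condition $\Pi[i,\infty] \models_{\mathcal{T}} \psi$ appearing in $R(\psi?)$.

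For the inductive cases I would invoke the induction hypothesis for each subprogram. The union case $\alpha_1 + \alpha_2$ is straightforward: the fresh states $q_0, q_f$ carry empty markings, the only way to leave $q_0$ is an $\varepsilon$-step into one of $q_{0,1}, q_{0,2}$, and there are no edges connecting the two copies, so every run stays entirely inside one $M_{\alpha_i}$ and, after stripping the two framing $\varepsilon$-steps, is exactly a run witnessing $(\Pi,i,k) \in R(\alpha_i)$; conversely such a run is reassembled by prepending and appending the framing steps. The delicate point, and the main obstacle, is concatenation and star, where the boundary between subautomata is crossed by an $\varepsilon$-edge ($q_{f,1} \to q_{0,2}$ for $\cdot$, and the loop-back path $q_{f,1} \to q_f \to q_0 \to q_{0,1}$ for ${}^*$) that may lie in the interior of the $\varepsilon$-closure embedded inside a single step $q_l \xRightarrow{\tau}_{X_l} q_{l+1}$. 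The key observation that resolves this is that $\varepsilon$-steps do not advance the position in $\nu(\Pi)$, so all markings collected inside one such step, on either side of any boundary crossing within it, are read at the single index $i+2l$. I can therefore define the split index $j$ (for $\cdot$) or the sequence $i = j_0 \leq \dots \leq j_l = k$ (for ${}^*$) as the indices at which the boundary-crossing $\varepsilon$-edges are taken, and split the offending marking set $X_l = X' \dot\cup X''$ into the parts preceding and following the crossing. Since both parts are evaluated at the same index, condition (iv) transfers verbatim to the two sub-runs: $X'$ completes the $\varepsilon$-closure to $q_{f,1}$, supplying clause (ii) for the $\alpha_1$-run ending at $j$, while $X''$ forms the opening $\varepsilon$-closure of the $\alpha_2$-run starting at $j$. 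Applying the induction hypothesis to each piece yields $(\Pi,i,j)\in R(\alpha_1)$ and $(\Pi,j,k)\in R(\alpha_2)$, respectively $(\Pi,j_m,j_{m+1}) \in R(\alpha)$ for every iteration of the star, and the converse direction reverses this decomposition, gluing the witnessing sub-runs together through the boundary $\varepsilon$-edges. The only remaining bookkeeping is the degenerate iterations of the star, where the run takes the direct $\varepsilon$-edge $q_0 \to q_f$ and contributes an empty segment $j_m = j_{m+1}$, matching the $l=0$ possibility in $R(\alpha^*)$.
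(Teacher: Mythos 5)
Your proposal is correct and follows essentially the same route as the paper's proof: structural induction on $\alpha$, direct gadget inspection for $\tau$, $\varepsilon$ and $\psi?$, confinement to one copy for $+$, and for $\cdot$ and ${}^*$ the splitting/merging of the marking set at the boundary-crossing $\varepsilon$-edge, justified by the observation that $\varepsilon$-moves do not advance the input position so all markings in one step are evaluated at the same index. The paper phrases this as inserting or removing the state $q_{0,2}$ (resp. $q_{0,1}$) and "splitting the set $X_{m_1}$ in an appropriate way," which is exactly your decomposition; just note that the crossing may also sit inside the final closure of condition (ii), not only inside a $\tau$-step, but your argument applies there verbatim.
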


As mentioned, we transform $\varphi$ into an ABA $\mathcal{A}_{\varphi}$ recognising $\nu(\Pi)$ iff $\Pi \models_{\mathcal{T}} \varphi$ holds.
Formally, a language $\mathcal{L} \subseteq (S^n \times {\Sigma}^n)^{\omega}$ is called $\mathcal{T}$-equivalent to a formula $\varphi$, if for each $\Pi$ the statements $\nu(\Pi) \in {\mathcal{L}}(\mathcal{A}_{\varphi})$ and $\Pi \models_{\mathcal{T}} \varphi$ are equivalent; we say that an ABA $\mathcal{A}$ is $\mathcal{T}$-equivalent to $\varphi$ iff its language $\mathcal{L}(\mathcal{A})$ is $\mathcal{T}$-equivalent to $\varphi$.
As a closed formula $\varphi$ is a boolean combination of quantified subformulas, the model checking problem can be solved by performing separate nonemptiness checks on $\mathcal{A}_{\psi}$ for all maximal quantified subformulas $\psi$ and combining the results in accordance with the global structure of $\varphi$.
For example, if $\varphi$ is given as $\psi_1 \land \lnot \psi_2$ for quantified formulas $\psi_1$ and $\psi_2$, one performs emptiness tests on $\mathcal{A}_{\psi_1}$ as well as $\mathcal{A}_{\psi_2}$ and accepts iff the first test is positive and the second test is negative.

\textbf{Construction of \boldmath$\mathcal{A}_{\varphi}$\unboldmath:} We construct the ABA $\mathcal{A}_{\varphi}$ inductively.
The alphabet of $\mathcal{A}_{\varphi}$ is given as $\Sigma_{\varphi} = S^n \times {\Sigma}^n$ where $n$ is the number of path quantifiers the formula $\varphi$ is in scope of.
When constructing an automaton for $\varphi$ with subformulas $\varphi_i$, we assume that the automata $\mathcal{A}_{\varphi_{i}} = (Q_{\varphi_{i}},q_{0,\varphi_{i}},\Sigma_{\varphi_i},\rho_{\varphi_{i}},F_{\varphi_{i}})$ are already constructed.
Similarly, when a formula contains an expression $\alpha$, we assume that not only $M_{\alpha} = (Q_{\alpha},q_{0,\alpha},{\Sigma}^n, \rho_{\alpha}, q_{f,\alpha}, \Psi)$ but also $\mathcal{A}_{\psi_i}$ in the case of $\langle \alpha \rangle \varphi$ and $\Delta \alpha$ or $\mathcal{A}_{\bar{\psi_i}}$ in case of $[\alpha]\varphi$ for each construct $\psi_i ?$ in $\alpha$ are already constructed.
Here, $\bar{\psi_i}$ is the negation normal form of $\lnot \psi_i$.
Recall that states $\true$ and $\false$ are always part of an ABA in our definition, so we will not mention them explicitly.
Furthermore, let $\mathcal{T} = (S,s_0,\{\delta_{\sigma}\mid\sigma \in \Sigma\},L)$ be the KTS to be checked.
\begin{figure}[t]
	\begin{tabular}{l l}
		\boldmath$a_{\pi_k}$\unboldmath & $\mathcal{A}_{\varphi} = (\{q_0\}, q_0, \Sigma_{\varphi}, \rho , \emptyset)$ \\
		& $\rho(q_0,(\mathpzc{s},\tau)) = \begin{cases}
		\true &\text{if } a \in L(\mathpzc{s}|_k)\\
		\false &\text{else}
		\end{cases}$ \\
		\boldmath$\lnot a_{\pi_k}$\unboldmath & $\mathcal{A}_{\varphi} = (\{q_0\}, q_0, \Sigma_{\varphi}, \rho , \emptyset)$ \\
		& $\rho(q_0,(\mathpzc{s},\tau)) = \begin{cases}
		\false &\text{if } a \in L(\mathpzc{s}|_k)\\
		\true &\text{else}
		\end{cases}$ \\
		\boldmath$\varphi_1 \land \varphi_2$\unboldmath & $\mathcal{A_{\varphi}} = (Q_1 \dot\cup Q_2 \dot\cup \{q_0\},q_{0},\Sigma_{\varphi},\rho,F_1 \dot\cup F_2)$ \\
		& $\rho(q,(\mathpzc{s},\tau)) = \begin{cases}
		\rho_1(q_{0,1},(\mathpzc{s},\tau)) \land \rho_2(q_{0,2},(\mathpzc{s},\tau)) &\text{if } q = q_0 \\
		\rho_i(q,(\mathpzc{s},\tau)) &\text{if } q \in Q_i, i \in \{1,2\}
		\end{cases}$ \\
		\boldmath$\varphi_{1} \lor \varphi_{2}$\unboldmath & $\mathcal{A_{\varphi}} = (Q_1 \dot\cup Q_2 \dot\cup \{q_0\},q_{0},\Sigma_{\varphi},\rho,F_1 \dot\cup F_2)$ \\
		& $\rho(q,(\mathpzc{s},\tau)) = \begin{cases}
		\rho_1(q_{0,1},(\mathpzc{s},\tau)) \lor \rho_2(q_{0,2},(\mathpzc{s},\tau)) &\text{if } q = q_0 \\
		\rho_i(q,(\mathpzc{s},\tau)) &\text{if } q \in Q_i, i \in \{1,2\}
		\end{cases}$
	\end{tabular}
	\caption{Construction of $\mathcal{A}_{\varphi}$ in basic cases}
	\label{construction:phi:basic}
\end{figure}

The cases of the constructions shown in \autoref{construction:phi:basic} are straightforward:
for $a_{\pi_k}$ (resp. $\lnot a_{\pi_k}$), only those words are accepted where the state first read for path $\pi_k$ is labelled with $a$ (resp. not labelled with $a$).
For the connectives $\land$ and $\lor$, one checks whether both automata $\mathcal{A}_{\varphi_{1}}$ and $\mathcal{A}_{\varphi_{2}}$ accept (resp. at least one automaton accepts) the word.

\begin{figure}
	\begin{tabular}{l l}
		\boldmath$\langle \alpha \rangle \varphi_1$\unboldmath & $\mathcal{A}_{\varphi} = (Q_1 \dot\cup Q_{\alpha} \dot\cup \bigcup_i Q_{\psi_i},q_{0,\alpha},\Sigma_{\varphi},\rho,F_1 \dot\cup \bigcup_i F_{\psi_i})$ \\
		& $\rho(q,(\mathpzc{s},\tau)) = \begin{cases}
		\rho_1(q,(\mathpzc{s},\tau)) &\text{if } q \in Q_1 \\
		\rho_{\psi_i}(q,(\mathpzc{s},\tau)) &\text{if } q \in Q_{\psi_i} \\
		\bigvee \{q' \land \bigwedge_{\psi_i \in X} \rho_{\psi_i}(q_{0,\psi_i},(\mathpzc{s},\tau)) \mid q \xRightarrow{\tau}_{X} q'\}\ \cup \\
		\quad \{\rho(q_{0,1}, (\mathpzc{s}, \tau)) \land \\
		\qquad\qquad \bigwedge_{\psi_i \in X} \rho_{\psi_i}(q_{0,\psi_i},(\mathpzc{s},\tau)) \mid q \xRightarrow{\varepsilon}_X q_{f,\alpha} \} &\text{if } q \in Q_{\alpha} 
		\end{cases}$ \\
		\boldmath$[\alpha] \varphi_{1}$\unboldmath & $\mathcal{A}_{\varphi} = (Q_1 \dot\cup Q_{\alpha} \dot\cup \bigcup_i Q_{\bar{\psi_i}},q_{0,\alpha},\Sigma_{\varphi},\rho,F_1 \dot\cup Q_{\alpha} \dot\cup \bigcup_i F_{\bar{\psi_i}})$ \\
		& $\rho(q,(\mathpzc{s},\tau)) = \begin{cases}
		\rho_1(q,(\mathpzc{s},\tau)) &\text{if } q \in Q_1 \\
		\rho_{\bar{\psi_i}}(q,(\mathpzc{s},\tau)) &\text{if } q \in Q_{\bar{\psi_i}} \\
		\bigwedge \{q' \lor \bigvee_{\psi_i \in X} \rho_{\bar{\psi_i}}(q_{0,\bar{\psi_i}},(\mathpzc{s},\tau)) \mid q \xRightarrow{\tau}_{X} q'\}\ \cup \\ 
		\quad \{\rho(q_{0,1}, (\mathpzc{s}, \tau)) \lor \\
		\qquad\qquad \bigvee_{\psi_i \in X} \rho_{\bar{\psi_i}}(q_{0,\bar{\psi_i}},(\mathpzc{s},\tau)) \mid q \xRightarrow{\varepsilon}_X q_{f,\alpha} \} &\text{if } q \in Q_{\alpha} 
		\end{cases}$ \\
		\boldmath$\Delta \alpha$\unboldmath & $\mathcal{A}_{\varphi} = (Q_{\alpha} \dot\cup \{q_0\} \dot\cup \bigcup_i Q_{\psi_i},q_0,\Sigma_{\varphi},\rho,\{q_0\} \dot\cup \bigcup_i F_{\psi_i})$ \\
		& $\rho(q,(\mathpzc{s},\tau)) = \begin{cases}
		\rho_{\psi_i}(q,(\mathpzc{s},\tau)) &\text{if } q \in Q_{\psi_i} \\
		\bigvee \{q' \land \bigwedge_{\psi_i \in X} \rho_{\psi_i}(q_{0,\psi_i},(\mathpzc{s},\tau)) \mid q_{0,\alpha} \xRightarrow{\tau}_{X} q'\} \cup \\
		\quad \{ \bigwedge_{\psi_i \in X} \rho_{\psi_i}(q_{0,\psi_i},(\mathpzc{s},\tau)) \mid q_{0,\alpha} \xRightarrow{\varepsilon}_X q_{f,\alpha} \} &\text{if } q = q_0 \\
		\bigvee \{q' \land \bigwedge_{\psi_i \in X} \rho_{\psi_i}(q_{0,\psi_i},(\mathpzc{s},\tau)) \mid q \xRightarrow{\tau}_{X} q'\}\ \cup \\ \quad \{q_0 \land \bigwedge_{\psi_i \in X} \rho_{\psi_i}(q_{0,\psi_i},(\mathpzc{s},\tau)) \land \\
		\qquad\qquad \bigwedge_{\psi_i \in Y} q_{0,\psi_i} \mid q \xRightarrow{\tau}_X q' \xRightarrow{\varepsilon}_Y q_{f,\alpha} \} &\text{if } q \in Q_{\alpha}
		\end{cases}$\\
		\boldmath$\lnot \Delta \alpha$\unboldmath & $\mathcal{A}_{\varphi} = \overline{\mathcal{A}_{\Delta \alpha}}$
	\end{tabular}
	\caption{Construction of $\mathcal{A}_{\varphi}$ for $\alpha$ formulas}
	\label{construction:phi:alpha}
\end{figure}

\begin{figure}
	\centering
	\begin{tikzpicture}
	
	\node[draw,dashed,ellipse,minimum height=30pt] at (-3,-1) (palpha1){$M_{\alpha}$, $\bigvee$};
	\node[draw,circle,fill=white!10] at (-3,-1.5) (palphas){};
	\node[draw,double,circle,fill=white!10] at (-2.3,-0.6) (palphaf){};
	\node[draw,circle,fill=white!10] at (-3.7,-0.6) (palphapsi){};
	
	\node[draw,ellipse,minimum height=30pt] at (-4,1) (ppsi){$\mathcal{A}_{\psi}$};
	\node[draw,circle,fill=white!10] at (-4,0.5) (ppsis){};
	
	\node[draw,ellipse,minimum height=30pt] at (-2,1) (pvarphi){$\mathcal{A}_{\varphi}$};
	\node[draw,circle,fill=white!10] at (-2,0.5) (pvarphis){};
	
	\node[] at (-4.1,-0.7) (pmark){$\psi$};
	\node[] at (-4.1,-0.3) (pedge1){$\land$};
	\node[] at (-1.9,-0.3) (pedge2){$\lor$};
	
	\path[->] (-3,-2) edge (palphas);
	\path[->] (palphapsi) edge[bend left=20pt] (ppsis);
	\path[->] (palphaf) edge[bend right=20pt] (pvarphis);
	
	
	\node[draw,dashed,double,ellipse,minimum height=30pt] at (3,-1) (qalpha1){$M_{\alpha}$, $\bigwedge$};
	\node[draw,circle,fill=white!10] at (3,-1.5) (qalphas){};
	\node[draw,double,circle,fill=white!10] at (3.7,-0.6) (qalphaf){};
	\node[draw,circle,fill=white!10] at (2.3,-0.6) (qalphapsi){};
	
	\node[draw,ellipse,minimum height=30pt] at (2,1) (qpsi){$\mathcal{A}_{\lnot\psi}$};
	\node[draw,circle,fill=white!10] at (2,0.5) (qpsis){};
	
	\node[draw,ellipse,minimum height=30pt] at (4,1) (qvarphi){$\mathcal{A}_{\varphi}$};
	\node[draw,circle,fill=white!10] at (4,0.5) (qvarphis){};
	
	\node[] at (1.9,-0.7) (qmark){$\psi$};
	\node[] at (1.9,-0.3) (qedge1){$\lor$};
	\node[] at (4.1,-0.3) (qedge2){$\land$};
	
	\path[->] (3,-2) edge (qalphas);
	\path[->] (qalphapsi) edge[bend left=20pt] (qpsis);
	\path[->] (qalphaf) edge[bend right=20pt] (qvarphis);
	\end{tikzpicture}
	\caption{Illustration of the constructions for $\langle \alpha \rangle \varphi$ and $[\alpha] \varphi$.
	In automata shown with a single dashed line all states are non-final.
	In automata shown with a double dashed line all states are final.}
	\label{illustration:phi:alpha}
\end{figure}

We now discuss how to handle the PDL-like modalities $\langle \alpha \rangle \varphi$, $[\alpha]\varphi$ and $\Delta \alpha$.
In the correctness statement for the automata $M_{\alpha}$ constructed for this purpose, we rely on oracle requests for tests $\psi?$ (\autoref{constructionlemma} (iv)). 
As mentioned, we eliminate these oracle requests by transitioning into the automaton $\mathcal{A}_{\psi}$ whenever we reach a state marked with $\psi ?$.

In the construction for $\langle \alpha \rangle \varphi_{1}$, we want to recognise a single path where after a prefix satisfying $\alpha$, $\varphi_{1}$ holds.
This is achieved by enabling a move into $\mathcal{A}_{\varphi_{1}}$ whenever the final state $q_{f,\alpha}$ of $M_{\alpha}$ is reached.
Since none of the states of $M_{\alpha}$ are declared final in $\mathcal{A}_{\varphi}$, an accepting run in $\mathcal{A}_{\varphi}$ cannot stay in $M_{\alpha}$ forever and thus eventually has to move into $\mathcal{A}_{\varphi_{1}}$ in this way.
Moves into $\mathcal{A}_{\psi}$ for tests $\psi ?$ are made conjunctively since all tests on an accepting run have to be successful.
For the dual case $[\alpha] \varphi_{1}$, we want $\varphi_{1}$ to hold after all prefixes satisfying $\alpha$.
Thus, dual to the previous construction, whenever reaching the state $q_{f,\alpha}$, we are not only given the possibility to, but have to move into $\mathcal{A}_{\varphi_{1}}$ as well.
Since  all transitions in this construction are combined with $\land$ to ensure that all prefixes satisfying $\alpha$ are considered, we have to take care of paths in $\mathcal{A}_{\varphi}$ that never leave $M_{\alpha}$ by declaring all states of $M_{\alpha}$ accepting.
On the other hand, paths not satisfying $\alpha$ due to a violation of a test $\psi ?$ are ruled out by a disjunctive test for the negation of $\psi$.
An illustration of these two cases can be found in \autoref{illustration:phi:alpha}.
To handle formulas of the form $\Delta \alpha$, we transform $M_{\alpha}$ into a Büchi automaton with a distinguished new initial state $q_0$ which ensures that in between two visits of $q_0$, $\alpha$ is satisfied. 
The state $q_0$ acts like the initial state $q_{0,\alpha}$ for outgoing, and like the final state $q_{f,\alpha}$ for incoming transitions and is the only accepting state (apart from those in the $\mathcal{A}_{\psi_i}$ automata).
Moves into test-automtata are handled just as in the $\langle \alpha \rangle \varphi_1$ case.
This ensures that an accepted input word consists of repeated segments matched by $\alpha$.
Negated $\Delta \alpha$ constructions can be handled by complementation using \autoref{complementation}.

Note that when we translate state markings in $M_{\alpha}$ into transitions in $\mathcal{A}_{\varphi}$, $q \xRightarrow{\varepsilon}_X q'$ may hold for exponentially many sets $X$, resulting in disjunctions of exponential size (in $|\alpha|$) in the transition functions for $\langle \cdot \rangle$ and $\Delta$, and conjunctions of exponential size for $[\cdot]$.
This can, however, be avoided by constructing formulas equivalent to these disjunctions and conjunctions during the inductive construction of $M_{\alpha}$, the size of which is not larger than $3 \cdot |\alpha| + 2$.
We discuss the case of disjunctions; the other case can be handled in a dual way.
Two observations are exploited.
First of all, the formula need not be written in a disjunctive form but can mix disjunctions and conjunctions freely. 
Thus, one source of exponential increase can be avoided by constructing the formulas for nested sums and concatenations inductively using that their contribution is directly captured by disjunction and conjunction, respectively.
The second observation is that the conjunction resulting from a subpath of a path subsumes the conjunction for that path.
This can be exploited to show that only paths using backwards edges (i.e., edges from $q_f$ to $q_0$ in *-constructions) at most once and using only particular backward edges have to be considered for treating iteration.
We refer the interested reader to \autoref{appendix:alphaalt} for a more detailed look at this alternative construction.


\begin{figure}
	\begin{tabular}{l l}
		\boldmath$\exists \pi. \varphi_1$\unboldmath & $\mathcal{A}_{\varphi_{1}}$ dealternised: $\textit{MH}(\mathcal{A}_{\varphi_1}) = (Q_1,q_{0,1},\Sigma_{\varphi_1}, \rho_1, F_1)$ \\
		& $\mathcal{A}_{\varphi} = (Q_1 \times S \times \Sigma \dot\cup \{q_0\},q_0,\Sigma_{\varphi},\rho,F_1 \times S \times \Sigma)$ \\
		& $\rho(q_0,(\mathpzc{s},\tau)) = \{(q',s',\sigma ')\mid q' \in \rho_1(q_{0,1},\mathpzc{s} + \mathpzc{s}|_n , \tau + \sigma),s' \in \delta_{\sigma}(\mathpzc{s}|_n), \sigma, \sigma' \in \Sigma\}$ \\
		& $\rho((q,s,\sigma),(\mathpzc{s},\tau)) = \{(q',s',\sigma ')\mid q' \in \rho_1(q,\mathpzc{s} + s , \tau + \sigma), s' \in \delta_{\sigma}(s), \sigma ' \in \Sigma\}$ \\
		\boldmath$\lnot \exists \pi. \varphi_1$\unboldmath & $\mathcal{A}_{\varphi} = \overline{\mathcal{A}_{\exists \pi. \varphi_{1}}}$
	\end{tabular}
	\caption{Construction of $\mathcal{A}_{\varphi}$ for quantifier formulas}
	\label{construction:phi:quantifier}
\end{figure}

In the constructions for path quantifiers (\autoref{construction:phi:quantifier}), we eliminate one component of the alphabet $\Sigma_{\varphi_1} = S^{n+1} \times \Sigma^{n+1}$ and switch to $\Sigma_{\varphi} = S^n \times \Sigma^n$.
The eliminated component is now simulated by the state space of $\mathcal{A}_{\varphi}$, which ensures that said component is indeed a path in $\mathcal{T}$ by using the additional components from $S$ and $\Sigma$.
The rule for the initial state guarantees that this path from $\mathcal{T}$ starts in the state it is branching from.
Negated existential quantifiers that by our definition can occur in formulas in negation normal form are handled straightforwardly by complementation.

For the construction, we obtain the following theorem via induction:

\begin{theorem}\label{construction}
	The automaton $\mathcal{A}_{\varphi}$ is $\mathcal{T}$-equivalent to $\varphi$.
\end{theorem}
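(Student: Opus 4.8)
The plan is to proceed by structural induction on $\varphi$ in negation normal form, proving in each case that $\nu(\Pi) \in \mathcal{L}(\mathcal{A}_\varphi)$ iff $\Pi \models_\mathcal{T} \varphi$ for every path assignment $\Pi$ with $|\dom(\Pi)| = n$. The base cases $a_{\pi_k}$ and $\lnot a_{\pi_k}$ are immediate: the single transition inspects $L(\mathpzc{s}|_k)$ of the first symbol read, which is precisely $\Pi(\pi_k)(0)$, matching the semantic clause for atomic propositions. For $\varphi_1 \land \varphi_2$ and $\varphi_1 \lor \varphi_2$, the fresh initial state dispatches conjunctively resp. disjunctively into the disjoint copies of $\mathcal{A}_{\varphi_1}$ and $\mathcal{A}_{\varphi_2}$; because the state sets are disjoint, an accepting run splits into independent accepting subruns of the two sub-automata, so the claim reduces to the induction hypotheses and the semantics of alternation in ABAs.

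The core of the argument is the modal cases, where I would rely throughout on \autoref{constructionlemma}. For $\langle \alpha \rangle \varphi_1$, a run starts in $q_{0,\alpha}$, and since no state of $M_\alpha$ is accepting it cannot stay inside $Q_\alpha$ forever; it must eventually take the transition that reaches $q_{f,\alpha}$ along an $\varepsilon$-path collecting a test set $X$ and then enters $\mathcal{A}_{\varphi_1}$. Reading the input at the even positions drives a state sequence $q_0 q_1 \dots q_m$ as in \autoref{constructionlemma}, and each $\xRightarrow{\tau}_X$ step launches, conjunctively, the automata $\mathcal{A}_{\psi_i}$ for the tests $\psi_i \in X$ on the current suffix; by the induction hypothesis their acceptance is equivalent to the matching instance of condition (iv) of the lemma. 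Hence an accepting run exists iff there is an index with $(\Pi,0,i) \in R(\alpha)$ and $\Pi[i,\infty] \models_\mathcal{T} \varphi_1$, as required. The case $[\alpha]\varphi_1$ is dual: all states of $M_\alpha$ are accepting, so runs that never leave $M_\alpha$ (witnessing that no prefix matches $\alpha$) are vacuously accepting; branches are joined conjunctively so that every matching prefix forces a move into $\mathcal{A}_{\varphi_1}$, while prefixes invalidated by a failing test are discharged disjunctively through the complement automata $\mathcal{A}_{\bar{\psi_i}}$, invoking the induction hypothesis for $\bar{\psi_i}$.

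For $\Delta \alpha$ I would lift the $\langle\alpha\rangle$ analysis to the infinitary setting: the distinguished state $q_0$ acts like $q_{0,\alpha}$ on outgoing and like $q_{f,\alpha}$ on incoming edges and is the only accepting state outside the test automata, so a run is accepting exactly when it returns to $q_0$ infinitely often. Each sojourn between consecutive visits of $q_0$ corresponds, via \autoref{constructionlemma}, to one segment $(\Pi,k_i,k_{i+1}) \in R(\alpha)$ with its tests verified through the spawned $\mathcal{A}_{\psi_i}$, so an accepting run yields exactly a factorisation $0 = k_1 \leq k_2 \leq \dots$ and conversely. The negated cases $\lnot\Delta\alpha$ and $\lnot\exists\pi.\varphi_1$ then follow from \autoref{complementation}, since complementing a $\mathcal{T}$-equivalent automaton is $\mathcal{T}$-equivalent to the negation. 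Finally, for $\exists\pi.\varphi_1$ I would first apply \autoref{dealternation} to obtain $\text{MH}(\mathcal{A}_{\varphi_1})$ and then check that guessing the eliminated $(n+1)$-st path component in the state space---initialised at $\mathpzc{s}|_n = \Pi(\epsilon)(0)$ and advanced along $\delta_\sigma$---ranges exactly over $\mathit{Paths}(\mathcal{T},\Pi(\epsilon)(0))$, so the projected word is accepted iff some $p$ makes $\Pi[\pi \to p, \epsilon \to p] \models_\mathcal{T} \varphi_1$.

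I expect the $\Delta\alpha$ case to be the principal obstacle. Unlike the finitary modalities, it requires reconciling the Büchi acceptance condition (infinitely many visits of $q_0$) with the existence of an infinite factorisation into $\alpha$-segments, while simultaneously guaranteeing that the conjunctively spawned test subcomputations of every segment accept. The bookkeeping is delicate because the tree structure of ABA runs interleaves infinitely many such test branches with the spine that returns to $q_0$, and one must rule out that degenerate empty segments (arising from $\varepsilon$-only loops with $k_i = k_{i+1}$) spuriously fulfil the acceptance condition.
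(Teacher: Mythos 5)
Your proposal follows essentially the same route as the paper's proof: structural induction on $\varphi$ in negation normal form, with \autoref{constructionlemma} carrying the modal cases $\langle\alpha\rangle\varphi_1$, $[\alpha]\varphi_1$ and $\Delta\alpha$, \autoref{complementation} handling $\lnot\Delta\alpha$ and $\lnot\exists\pi.\varphi_1$, and \autoref{dealternation} together with the path-simulation argument handling $\exists\pi.\varphi_1$. The obstacle you flag for $\Delta\alpha$ is real but already discharged by the construction and semantics: the semantic clause allows $k_i = k_{i+1}$, and the transition at $q_0$ explicitly contains the option $\{\bigwedge_{\psi_i \in X} \rho_{\psi_i}(q_{0,\psi_i},(\mathpzc{s},\tau)) \mid q_{0,\alpha} \xRightarrow{\varepsilon}_X q_{f,\alpha}\}$ for stationary (empty-segment) factorisations, so the paper's characterisation with non-strict indices goes through without additional bookkeeping.
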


For the complexity analysis, we introduce a notion of criticality.

\begin{definition}[Criticality]
	The \emph{criticality} of a HyperPDL-$\Delta$ formula $\varphi$ in negation normal form equals the highest number of critical quantifiers along any path in the formula's syntax tree.
	A quantifier is called \emph{critical} iff it is a non-outermost quantifier, fulfills at least one of the following three conditions: \\
	(i) it is a negated quantifier, \\
	(ii) it is an outermost quantifier in a test $\varphi ?$ in some program $\alpha$, \\
	(iii) it is an outermost quantifier in the subformula $\varphi$ of $[\alpha]\varphi$,\\
        and is not a negated outermost quantifier in a test $\varphi ?$ occuring in a modality $[\alpha]$ where the automaton $M_{\alpha}$ is deterministic.\\
	We call a HyperPDL-$\Delta$ formula with criticality $0$ \emph{uncritical}.
\end{definition}

This definition is designed carefully in order to ensure that the alternation depth of HyperCTL$^*$ formulas coincides with the criticality of their direct translation to HyperPDL-$\Delta$.
Intuitively, an \textit{uncritical} quantifier is one where the next dealternation construction $\textit{MH}(\mathcal{A})$ does not cause another exponential blowup on this part of the automaton.
Accordingly, a critical quantifier is one where this exponential blowup for the next dealternation construction cannot be avoided in general.
Thus, the criticality of a formula accounts for the number of times an exponential blowup may happen during the construction.


Since exponential blowups occur in a nested manner, we can only bound the size of the resulting automaton by an exponential tower.
As argued in the last paragraph, its height is determined by criticality rather than quantifier depth.
Formally, we define a function $g$ as $g_{p,c}(0,n) = p(n)$ and $g_{p,c}(k+1,n) = c^{g(k,n)}$ for a constant $c > 1$ and a polynomial $p$.
We use $\mathcal{O}(g(k,n))$ as an abbreviation for $\mathcal{O}(g_{p,c}(k,n))$ for some $c > 1$ and polynomial $p$.

Two remarks are in order about the next Lemma.
Firstly, the statement refers to the number of states of the construction, disregarding the number of transitions.
However, this is harmless for our complexity analysis: while the alphabet size increases exponentially with the nesting depth of quantifiers, alphabets need not be represented explicitly and the number of transitions of the automata can be kept polynomial by delaying the substitution of the wildcard symbol $\cdot$ by concrete programs until the intersection with the system $\mathcal{T}$.
We refrain from explicating this in our construction in order to increase readability.
Secondly, the construction's size can also increase exponentially in the nesting level of negated $\Delta \alpha$ constructions.
Since we expect that negated $\Delta \alpha$ constructions are rarely nested in formulas, we assume for the remainder of this paper a bound on this nesting level in order to simplify our complexity statement.
Indeed, in \autoref{section:expressivity}, we will show that a bound of $1$ suffices to express $\omega$-regular properties, thus making this a reasonable constraint.
The dependency from the nesting depth is reflected in the proof of \autoref{complexitylemma} in \autoref{appendix:complexitylemma}.

\begin{lemma}\label{complexitylemma}
	The automaton $\mathcal{A}_{\varphi}$ has size $\mathcal{O}(g(k+1,|\varphi| + log(|\mathcal{T}|)))$ for formulas $\varphi$ with criticality $k$.
\end{lemma}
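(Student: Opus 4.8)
The plan is to prove the bound by structural induction on $\varphi$ (equivalently, by following the inductive construction of $\mathcal{A}_\varphi$ from the innermost subformulas outward), using a strengthened induction hypothesis that, besides the size bound in terms of $g$, also records whether the automaton built so far is \emph{genuinely alternating} or merely \emph{nondeterministic}. The observation driving the whole argument is a dichotomy for the dealternation step $\mathit{MH}(\cdot)$ performed at existential quantifiers: by \autoref{dealternation} it costs at most an exponential in the number of states, but when its input is already a nondeterministic automaton the Miyano--Hayashi construction only has to track a single run branch and hence incurs no blow-up. Consequently the number of exponentials appearing in the size of $\mathcal{A}_\varphi$ equals, along each syntax-tree path, the number of existential quantifiers whose argument automaton is genuinely alternating, and the heart of the proof is to show that this number is exactly $k+1$.

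First I would classify the construction steps by their effect on size and on the alternating/nondeterministic status. The atomic, Boolean and modality cases ($a_{\pi_k}$, $\land$, $\lor$, $\langle\alpha\rangle$, $[\alpha]$, $\Delta\alpha$) only combine the already-built sub-automata with $M_\alpha$, whose size is linear in $|\alpha|$, so they grow the size at most additively and add no tower level; a disjunctive combination preserves nondeterminism, whereas a conjunctive combination of a nondeterministic automaton with another automaton creates genuine alternation. Complementation (\autoref{complementation}) squares the size and turns a nondeterministic automaton into a genuinely alternating one. Reading off the construction tables, genuine alternation is introduced exactly at: (a) a complementation, i.e.\ a negated quantifier (condition (i)) or a $\lnot\Delta\alpha$; (b) the conjunctive insertion of a test automaton in the $\langle\alpha\rangle$ or $\Delta\alpha$ construction, i.e.\ at the outermost quantifier of such a test (condition (ii)); and (c) the conjunctive insertion of the body automaton in the $[\alpha]\varphi$ construction, i.e.\ at the outermost quantifier of the body (condition (iii)). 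Each such event, when later consumed by an enclosing existential quantifier, forces one genuine dealternation and hence one extra exponential, while an uncritical quantifier dealternates an automaton that is still nondeterministic and so stays at the same tower level. Since combining automata takes the maximum of the two tower heights (a sum of two towers of equal height is absorbed into $\mathcal{O}$, and the base polynomial's argument only grows from $|\varphi_i|$ to $|\varphi|$) and criticality is likewise a maximum over syntax-tree paths, the induction yields a tower of height equal to the maximal number of critical quantifiers on a path plus one for the single unavoidable dealternation of the innermost genuinely-alternating automaton, i.e.\ height $k+1$.

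The main obstacle is to make ``stays nondeterministic'' precise and to verify it in the uncritical cases, in particular the exception carved out in the definition of criticality. For this I would prove, as part of the strengthened hypothesis, that $\mathcal{A}_\varphi$ is nondeterministic whenever no genuine alternation has been introduced since the most recent dealternation, and check that each construction preserves this: the diamond and $\Delta$ cases feed test automata in conjunctively (breaking nondeterminism, hence condition (ii)), whereas $[\alpha]\varphi$ inserts the negated test automata $\mathcal{A}_{\bar{\psi_i}}$ disjunctively, so that when $M_\alpha$ is moreover deterministic the conjunction over matching $M_\alpha$-transitions collapses to a single branch and the resulting automaton is still nondeterministic; this is precisely why such a negated test quantifier need not be counted. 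The $\mathcal{T}$-factors contributed by the product with $S\times\Sigma$ at every quantifier multiply the size by $|\mathcal{T}|^{\mathcal{O}(1)}$ per quantifier, and since multiplying any level-$j\geq 1$ value of $g$ by $2^{\mathcal{O}(\log|\mathcal{T}|)}$ can be absorbed by enlarging the base polynomial's argument from $|\varphi|$ to $|\varphi|+\log(|\mathcal{T}|)$, these factors add no level and merely justify the $\log(|\mathcal{T}|)$ summand. Finally, nested $\lnot\Delta\alpha$ are the one place where a complementation of a genuinely alternating automaton is iterated without an intervening dealternation to absorb the squaring, and since $\lnot\Delta$ introduces alternation without a corresponding critical quantifier, each nesting level squares the size; under the assumed constant bound on this nesting depth this contributes only a fixed polynomial factor, again absorbed into $\mathcal{O}(g(k+1,\cdot))$. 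Assembling these observations along every root-to-leaf path and taking maxima gives the claimed bound.
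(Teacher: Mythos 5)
Your high-level plan---align each exponential with a dealternation applied to an automaton that is ``really'' alternating, and show that uncritical quantifiers only ever dealternate harmless automata---is the same idea that drives the paper's proof. However, the invariant you use to make this precise is too coarse, and the gap already appears at criticality $0$. Your strengthened hypothesis asserts that $\mathcal{A}_{\psi}$ is \emph{nondeterministic} whenever no critical construct has been applied since the last dealternation. This is false for conjunction: the construction for $\varphi_1 \land \varphi_2$ issues the conjunctive transition $\rho_1(q_{0,1},(\mathpzc{s},\tau)) \land \rho_2(q_{0,2},(\mathpzc{s},\tau))$ from its initial state, so for an uncritical formula such as $\exists \pi . (\exists \pi_1 . \psi_1 \land \exists \pi_2 . \psi_2)$ the outer $\textit{MH}$ construction must be applied to a genuinely alternating automaton whose two conjuncts are themselves exponentially large nondeterministic automata; your argument would then either charge an extra (illegal) exponential or have nothing to say. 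Your classification is in fact internally inconsistent on this point: you state that ``a conjunctive combination of a nondeterministic automaton with another automaton creates genuine alternation,'' yet your list (a)--(c) of alternation-creating steps omits plain $\land$, which is not critical. The same issue arises from quantifier-free tests $\psi?$ inside $\langle \alpha \rangle$ and $\Delta \alpha$: they are inserted conjunctively and are permitted in uncritical formulas, so even there the automaton is not nondeterministic.

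The paper closes exactly this hole with a finer invariant: not that the automaton stays nondeterministic, but that in every run of an automaton built by uncritical constructions, \emph{at most one state of each already dealternised subautomaton} occurs at any level of the run tree. For $\land$ this holds because the two subautomata are disjoint and unconnected---states of both are tracked, but only one state of each---and for $\langle \alpha \rangle \varphi_1$ the conjunctively inserted test automata contain no dealternised parts precisely because quantifiers in tests are critical by condition (ii). Consequently, a later $\textit{MH}$ application needs subsets only over the small, not-yet-dealternised superstructure, while each dealternised component contributes its size as a \emph{factor} rather than in an exponent, preserving the tower height. Your nondeterministic-versus-alternating dichotomy cannot deliver this conclusion; to repair the proof you would have to replace it by this one-state-per-dealternised-component bookkeeping (which is the content of the paper's inner induction over the constructions applied on top of dealternised automata). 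The remaining ingredients of your proposal---the treatment of the $[\alpha]\varphi$ exemption via the disjunctively inserted negated tests together with determinism of $M_{\alpha}$, the absorption of the per-quantifier $|\mathcal{T}|$ factors into the $log(|\mathcal{T}|)$ summand, the bounded nesting of $\lnot \Delta \alpha$, and the inductive step that spends one exponential at the outermost critical quantifier---do match the paper's argument.
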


\begin{proof}(Sketch)
	By induction on the criticality $k$.
	In the base case, the dealternation constructions from \autoref{dealternation} overall cause a single exponential blowup at most since in all constructions not increasing the criticality, one has to keep track of only a single state of each already dealternised subautomaton in further dealternations.
	Since the only dealternation is done before the system $\mathcal{T}$ is folded around the automaton, the automaton's size is only exponential in the size of $\varphi$, but not in the size of $\mathcal{T}$.
	
	In the inductive step, the construction for the outermost critical quantifier increases the size of the automaton exponentially.
	For all further constructions, it can be argued that the automaton's size is asymptotically not further increased, just like in the base case.
\end{proof}

\begin{theorem}\label{complexity}
	The problem to decide whether $\mathcal{T} \models \varphi$ holds for KTS $\mathcal{T}$ and HyperPDL-$\Delta$ formulas $\varphi$ with criticality $k$ is in NSPACE$(g(k,|\varphi| + log(|\mathcal{T}|))$.
\end{theorem}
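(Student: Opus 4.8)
The plan is to reduce $\mathcal{T}\models\varphi$ to a bounded boolean combination of (non)emptiness tests on nondeterministic Büchi automata, each run \emph{on the fly} so that only the logarithm of the automaton's size is ever stored; recovering one exponential this way is exactly what turns the $g(k{+}1,\cdot)$ bound of \autoref{complexitylemma} into the $g(k,\cdot)$ bound claimed here. Write $\ell := |\varphi| + \log(|\mathcal{T}|)$. First I would put $\varphi$ into the negation normal form described before \autoref{construction}, so that $\varphi$ is a boolean combination of maximal quantified subformulas $\psi_1,\dots,\psi_m$ with $m\le|\varphi|$, each of the shape $\exists\pi.\chi_j$ or $\lnot\exists\pi.\chi_j$. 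By the definition of $\models$ and the $\mathcal{T}$-equivalence of \autoref{construction}, $\mathcal{T}\models\exists\pi.\chi_j$ holds iff $\mathcal{A}_{\exists\pi.\chi_j}$ is nonempty (being a closed sentence, its language is either empty or all $\omega$-words over the trivial alphabet), and the case $\lnot\exists\pi.\chi_j$ is the complementary test on the very same automaton. Hence it suffices to decide each $\mathcal{T}\models\exists\pi.\chi_j$ in $\mathrm{NSPACE}(g(k,\ell))$ and to combine the $m$ results along the boolean skeleton of $\varphi$.

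The key observation is that by \autoref{construction:phi:quantifier} the automaton $\mathcal{A}_{\exists\pi.\chi_j}$ is a \emph{nondeterministic} Büchi automaton: it arises by folding $\mathcal{T}$ around the dealternation $\mathrm{MH}(\mathcal{A}_{\chi_j})$ of \autoref{dealternation}. The leading quantifier of $\psi_j$ is outermost and therefore not critical, so all critical quantifiers of $\psi_j$ lie inside $\chi_j$; tracing the induction in the proof of \autoref{complexitylemma}, the single exponential charged by $g(k{+}1,\cdot)$ beyond $g(k,\cdot)$ is precisely the one caused by this outermost dealternation. Consequently the ABA $\mathcal{A}_{\chi_j}$ fed into $\mathrm{MH}$ already has only $\mathcal{O}(g(k_j,\ell))$ states with $k_j\le k$, while $\mathcal{A}_{\exists\pi.\chi_j}$ itself has $2^{\mathcal{O}(g(k_j,\ell))}=\mathcal{O}(g(k_j{+}1,\ell))=\mathcal{O}(g(k{+}1,\ell))$ states.

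I would then perform the nonemptiness check without ever materialising this $g(k{+}1,\ell)$-sized NBA. We construct and store $\mathcal{A}_{\chi_j}$ explicitly, at space polynomial in its size and hence $\mathcal{O}(g(k,\ell))$; keeping the transition relation polynomial by delaying substitution of the wildcard $\cdot$ until the product with $\mathcal{T}$ (as remarked before \autoref{complexitylemma}) is what lets us stay within this budget. A state of $\mathrm{MH}(\mathcal{A}_{\chi_j})$ together with its folding component is a pair of subsets of $Q_{\chi_j}$ plus one state and one program of $\mathcal{T}$, describable in $\mathcal{O}(g(k,\ell))$ bits, whose successors under a given letter are computable on demand from the stored $\mathcal{A}_{\chi_j}$ and $\mathcal{T}$. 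Nonemptiness of an NBA reduces to reachability of an accepting state lying on a cycle, which a nondeterministic machine decides by guessing a lasso while retaining only a constant number of such states; this costs $\mathcal{O}(\log|\mathcal{A}_{\exists\pi.\chi_j}|)=\mathcal{O}(g(k,\ell))$ space. Thus each positive test is in $\mathrm{NSPACE}(g(k,\ell))$; each negated test is its complement and stays in the class by Immerman--Szelepcs\'enyi; and the $m\le|\varphi|$ tests are evaluated along the boolean structure of $\varphi$ with space reuse (nondeterministic choice for $\lor$, sequential verification for $\land$, complementation for $\lnot$), keeping the whole computation in $\mathrm{NSPACE}(g(k,|\varphi|+\log(|\mathcal{T}|)))$.

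The main obstacle is the precise size accounting of the second paragraph: one must extract from the proof of \autoref{complexitylemma} that the final $\mathrm{MH}$ dealternation is the only exponential left \emph{uncharged} by criticality, so that the pre-dealternation ABA $\mathcal{A}_{\chi_j}$ genuinely fits into $\mathcal{O}(g(k,\ell))$ and the saved exponential is recovered through the logarithmic-space lasso search on the dealternated automaton.
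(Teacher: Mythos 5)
Your overall strategy is the same as the paper's: put $\varphi$ in the special negation normal form, reduce $\mathcal{T}\models\varphi$ to separate nonemptiness tests on the nondeterministic Büchi automata $\mathcal{A}_{\psi_j}$, run each test as a space-bounded lasso search, and combine the answers along the boolean skeleton. However, your size accounting contains a genuine error, and it is load-bearing. You claim that the ABA $\mathcal{A}_{\chi_j}$ fed into the final dealternation has only $\mathcal{O}(g(k_j,\ell))$ states, "because the single exponential separating $g(k{+}1,\cdot)$ from $g(k,\cdot)$ is the one caused by the outermost dealternation." This is false as soon as $\chi_j$ contains any quantifier at all. The quantifier construction of \autoref{construction:phi:quantifier} itself applies $\mathrm{MH}$ and folds in $\mathcal{T}$, so for example with $\varphi = \exists\pi_1.\exists\pi_2.\psi$ and $\psi$ quantifier-free (criticality $0$), the subautomaton $\mathcal{A}_{\exists\pi_2.\psi}$ inside $\mathcal{A}_{\chi_1}$ already has $2^{\mathcal{O}(\mathrm{poly})}$ states, while your budget at $k=0$ is polynomial. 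The exponential is not "located" at the last dealternation: it arises already at the innermost quantifiers, and the actual content of \autoref{complexitylemma} is that the later, uncritical dealternations do not \emph{compound} it, because after a subautomaton has been dealternised, only a single state of it ever needs to be tracked. Consequently both of your concrete steps break: "construct and store $\mathcal{A}_{\chi_j}$ explicitly" needs $\mathcal{O}(g(k{+}1,\ell))$ space, and encoding a state of $\mathrm{MH}(\mathcal{A}_{\chi_j})$ as an arbitrary pair of subsets of $Q_{\chi_j}$ needs $\mathcal{O}(g(k{+}1,\ell))$ bits, both exceeding the claimed bound.

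The repair is to never materialise any automaton containing a dealternised part, and instead to use a \emph{recursive} succinct state encoding mirroring the structure exploited in the proof of \autoref{complexitylemma}: a state of $\mathcal{A}_{\psi_j}$ is a pair of subsets of the polynomial-size top-level portion together with at most one state of each maximal dealternised subautomaton, where each such state is again encoded recursively in the same way; genuine subset blowup is only incurred at critical quantifiers, over state sets of size $\mathcal{O}(g(k,\ell))$. With this encoding every state fits into $\mathcal{O}(g(k,\ell))$ bits and successors are computable on the fly within the same space, which is exactly what legitimises the paper's terse appeal to "nonemptiness of Büchi automata is NLOGSPACE in the size of the automaton" applied to an automaton of size $\mathcal{O}(g(k{+}1,\ell))$. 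The rest of your argument (the reduction of $\mathcal{T}\models\exists\pi.\chi_j$ to nonemptiness over the trivial alphabet, closure under complement for the negated tests, and space reuse across the boolean combination) is sound and matches the paper.
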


\begin{proof}
	The formula arising from the transformation of $\varphi$ to our variant of negation normal form is a boolean combination of subformulas $\psi$ with an outermost existential quantifier. 
	Due to dealternation for the existential quantifier, the automata $\mathcal{A}_{\psi}$ are non-deterministic Büchi automata.
	We perform nonemptiness checks on these automata separately.
	It is well-known that the nonemptiness check for Büchi automata is possible in NLOGSPACE in the size of the automaton \cite{Lange2016}.
	We then combine the results in accordance with the structure of $\varphi$.
	This does not add to the complexity.
	Thus, by \autoref{complexitylemma}, we obtain an NSPACE$(g(k,|\varphi| + log(|\mathcal{T}|))$ model checking algorithm for criticality $k$ HyperPDL-$\Delta$ formulas.
\end{proof}

Since we can easily translate HyperCTL$^*$ formulas to HyperPDL-$\Delta$ while preserving the alternation depth as criticality, we can use known hardness results for HyperCTL$^*$ model checking \cite{Rabe2016} to obtain the following Theorem:

\begin{theorem}\label{hardness}
	Given a KTS $\mathcal{T}$ and a HyperPDL-$\Delta$ formula $\varphi$ with criticality $k$, the model checking Problem for HyperPDL-$\Delta$ is hard for NSPACE$(g(k, |\varphi|))$ and NSPACE$(g(k-1, |\mathcal{T}|))$.\footnote{Note that we have not defined $g(-1,n)$ in this paper.
	For $k=0$ and a fixed size formula $\varphi$, we use the definition from \cite{Finkbeiner2015}, where NSPACE$(g(-1, n))$ was defined as NLOGSPACE.
	Since one can see that our algorithm has NLOGSPACE complexity in this instance, the lower and upper bounds match in all cases.
	For $k = 0$ and a fixed size structure $\mathcal{T}$ or $k > 1$, we can use Savitch's Theorem to see that the problems are actually complete for the deterministic space classes.}
\end{theorem}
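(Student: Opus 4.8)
The plan is to obtain both lower bounds by a reduction from HyperCTL$^*$ model checking, whose complexity is classified in \cite{Rabe2016}: for formulas of alternation depth $k$ it is hard for NSPACE$(g(k,|\varphi|))$ in the size of the formula and for NSPACE$(g(k-1,|\mathcal{T}|))$ in the size of the structure. It therefore suffices to exhibit a translation $\text{tr}$ from HyperCTL$^*$ into HyperPDL-$\Delta$ that (a) preserves satisfaction over every KTS, (b) is linear in formula size and leaves $\mathcal{T}$ untouched, and (c) maps alternation depth onto criticality. Given such a $\text{tr}$, any procedure deciding HyperPDL-$\Delta$ model checking for criticality-$k$ formulas decides HyperCTL$^*$ model checking for alternation depth $k$ within the same space, so the two stated lower bounds transfer.

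First I would fix $\text{tr}$. Quantifiers, indexed atomic propositions and boolean connectives are kept unchanged, and the temporal modalities are replaced by the program modalities already used in \autoref{section:HyperPDL}: $\bigcirc\psi \mapsto \langle\bullet\rangle\,\text{tr}(\psi)$, $\psi_1\,\mathcal{U}\,\psi_2 \mapsto \langle(\text{tr}(\psi_1)?\cdot\bullet)^{*}\rangle\,\text{tr}(\psi_2)$, and dually $\psi_1\,\mathcal{R}\,\psi_2 \mapsto [(\lnot\text{tr}(\psi_1)?\cdot\bullet)^{*}]\,\text{tr}(\psi_2)$, where $\bullet$ abbreviates the wildcard tuple $\tau=(\cdot,\dots,\cdot)$. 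Property (a) follows by a routine induction on $\psi$ using the semantics of $R(\alpha)$: the program $(\text{tr}(\psi_1)?\cdot\bullet)^{*}$ matches exactly the even prefixes $(\Pi,0,2i)$ along which $\text{tr}(\psi_1)$ holds in every intermediate suffix, so by the semantics of $\langle\cdot\rangle$ the diamond encodes the unbounded search of $\mathcal{U}$, and $[\cdot]$ together with the negated test encodes $\mathcal{R}$ dually. Each temporal operator becomes a constant-size gadget, whence $|\text{tr}(\psi)| = \mathcal{O}(|\psi|)$ while $\mathcal{T}$ is unchanged; this yields a logarithmic-space many-one reduction preserving both size parameters up to the polynomial slack absorbed by $g$, which is what the transfer of hardness requires.

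The delicate part, and the step I expect to be the main obstacle, is (c): showing that the alternation-depth measure of \cite{Finkbeiner2015,Rabe2016} applied to $\psi$ equals the criticality of $\text{tr}(\psi)$ after transformation into our variant of negation normal form. This is precisely what the criticality definition was tailored for, but it must be verified against each translation case. Driving negations inward turns every $\forall$ into a negated existential, so genuine $\exists/\forall$ alternations reappear as critical quantifiers via clause (i). The delicate cases are quantifiers nested under temporal operators: a top-level quantifier of $\psi_1$ in $\psi_1\,\mathcal{U}\,\psi_2$ lands outermost inside the test $\text{tr}(\psi_1)?$ and is caught by clause (ii), while a top-level quantifier in the body of a release gadget lands outermost under a $[\alpha]$ and is caught by clause (iii). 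I would argue that these are exactly the quantifiers the HyperCTL$^*$ measure also counts, because their subautomata are embedded conjunctively and hence incur a further exponential blowup under the next dealternation; by contrast, a quantifier sitting in the sequential continuation of a $\langle\cdot\rangle$ (as in the translation of $\bigcirc$ or of the right argument of $\mathcal{U}$) causes no such blowup and is correctly left uncritical. Consequently no spurious criticality is introduced and none is lost, the outermost quantifier remaining uncounted on both sides; the exception carved out for negated tests under deterministic $[\alpha]$-modalities never applies to the star-based release gadgets, whose $M_\alpha$ is nondeterministic, and so does not perturb the count.

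Finally I would assemble the pieces: since $\text{tr}$ is a size- and parameter-preserving reduction sending alternation depth $k$ to criticality $k$, the hardness results of \cite{Rabe2016} carry over, giving NSPACE$(g(k,|\varphi|))$-hardness in the formula and NSPACE$(g(k-1,|\mathcal{T}|))$-hardness in the structure. The boundary conventions for $k=0$ and $k=-1$ are handled exactly as in the accompanying footnote, so the lower bounds match the upper bound of \autoref{complexity} in all cases.
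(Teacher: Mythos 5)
Your overall strategy is exactly the paper's: reduce from HyperCTL$^*$ model checking via the translation $\bigcirc\varphi \mapsto \langle\bullet\rangle\,tr(\varphi)$, $\varphi_1\,\mathcal{U}\,\varphi_2 \mapsto \langle(tr(\varphi_1)?\cdot\bullet)^*\rangle\,tr(\varphi_2)$, $\varphi_1\,\mathcal{R}\,\varphi_2 \mapsto [(tr(\lnot\varphi_1)?\cdot\bullet)^*]\,tr(\varphi_2)$, and then argue that alternation depth is mapped onto criticality. However, your justification of the step you yourself identify as delicate contains a genuine error, and it sits exactly where the paper is most careful. You claim that the exemption clause for ``negated outermost quantifiers in tests under $[\alpha]$ with deterministic $M_\alpha$'' never applies to the release gadget ``whose $M_\alpha$ is nondeterministic.'' This is backwards. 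For $\alpha = (\psi?\cdot\bullet)^*$, the automaton $M_\alpha$ produced by \autoref{construction:alpha} has exactly one non-$\varepsilon$ transition, namely the one contributed by $\bullet$; hence $|\rho_\alpha(q,\tau)|\le 1$ for every state and input tuple, which is precisely the sense of determinism the exemption refers to, and the exemption does fire. The paper's proof notes explicitly that for $\mathcal{R}$ the correspondence between alternation depth and criticality holds \emph{only because of} this exemption.

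Moreover, if your claim were true, the count would be perturbed, contrary to what you assert. Consider $\exists\pi_0.\,\bigl((\exists\pi.\chi)\,\mathcal{R}\,\psi\bigr)$ with $\chi,\psi$ quantifier-free. Its alternation depth is $0$: there is no quantifier alternation, and a quantifier heading the left argument of a release is not counted, since in the HyperCTL$^*$ automaton construction that subformula is entered disjunctively and causes no dealternation blowup. Under the translation, the test becomes the negated existential $\lnot\exists\pi.\,tr(\chi)$ inside $[\alpha]$, so clauses (i) and (ii) of the criticality definition apply; without the exemption this quantifier would be critical, giving criticality $1$. Your reduction would then map alternation-depth-$k$ instances to criticality-$(k{+}1)$ instances whenever such releases occur, so it would not be parameter-preserving and the claimed NSPACE$(g(k,|\varphi|))$ lower bound for the criticality-$k$ problem would not follow. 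A second, smaller instance of the same oversight is your fixed rule for $\bigcirc$: for $\forall\pi.\bigcirc\forall\pi'.\psi$ (alternation depth $0$), pushing negations inward turns your $\langle\bullet\rangle$ into $[\bullet]$, and the inner quantifier, now outermost in the body of a box, becomes critical by clause (iii). The paper avoids this by choosing between $\langle\bullet\rangle\,tr(\varphi)$ and $[\bullet]\,tr(\varphi)$ according to polarity so that in negation normal form the translation of $\bigcirc$ is always a diamond. Both defects are repairable, but as written your argument for depth preservation fails.
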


	\section{Satisfiability}\label{section:satisfiability}
While model checking of temporal logics is an essential technique for verification, it requires meaningful specifications in order to be useful.
For example, if a formula is fulfilled by every or no structure, it is useless for specification purposes.
To evaluate whether this is the case, satisfiability testing can be employed as a sanity check \cite{Rozier2007}. 
Since satisfiability checking is already undecidable for HyperLTL \cite{Finkbeiner2016} via a reduction from the Post Correspondence Problem (PCP) and HyperLTL can be embedded into HyperPDL-$\Delta$, we consider only restrictions of a fragment of  HyperPDL-$\Delta$ where quantified paths can be traversed linearly for the purpose of this section. 

\begin{definition}
	A linear HyperPDL-$\Delta$ formula is of the form $Q_1 \pi_1 \dots Q_n \pi_n . \psi$ for $Q_i \in \{\exists, \forall\}$ and $\psi$ a quantifier-free formula.
	A linear HyperPDL-$\Delta$ formula $\varphi$ is an $\mathcal \forall^*$-formula if $Q_i = \forall$ for all $ 1 \leq i \leq n$.
	The $\exists^*$-fragment and the $\exists^*\forall^*$-fragment are defined analogously.
\end{definition}

For linear HyperPDL-$\Delta$, we generalise the semantics to arbitrary sets of traces $T$ instead of only those induced by KTS.
More concretely, for a fixed set of traces $T$, we let the assignment functions $\Pi$ map variables to traces in $T$ instead of paths of a structure. 
Such a function $\Pi$ is called a \textit{trace assignment}; the set of all trace assignments is called $\textit{TA}$. 
Furthermore, we let all quantifiers range over $T$ instead of $\mathit{Paths}(\mathcal{T}, \Pi(\epsilon)(0))$. 
We further define $T \models \varphi$ to hold for a linear HyperPDL-$\Delta$ formula $\varphi$ iff $\{\} \models \varphi$ holds for the trace assignment $\{\}$ with empty domain over $T$.
For linear HyperPDL-$\Delta$ formulas and using $T = \mathit{Traces}(\mathcal{T}, s_0)$, this definition coincides with the definition in \autoref{section:HyperPDL}.
We then call a linear HyperPDL-$\Delta$ formula $\varphi$ \textit{satisfiable} iff there is a non-empty set of traces $T$ such that $T \models \varphi$ holds.
The satisfiability problem for linear HyperPDL-$\Delta$ is to check whether a linear HyperPDL-$\Delta$ formula $\varphi$ is satisfiable.
For full linear HyperPDL-$\Delta$, we obtain undecidability via a reduction from HyperLTL satisfiability \cite{Finkbeiner2016}:

\begin{theorem}
	The satisfiability problem for linear HyperPDL-$\Delta$ is undecidable. 
\end{theorem}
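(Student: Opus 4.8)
The plan is to reduce the satisfiability problem for HyperLTL, known to be undecidable by \cite{Finkbeiner2016}, to the satisfiability problem for linear HyperPDL-$\Delta$. Concretely, I would exhibit a computable, satisfiability-preserving translation $\mathit{tr}$ from HyperLTL formulas to linear HyperPDL-$\Delta$ formulas. Since both logics are interpreted in the generalised setting over arbitrary sets of traces $T$ with quantifiers ranging over all of $T$, the quantifier semantics already agree, which keeps the reduction clean.

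Every HyperLTL formula has the shape $Q_1 \pi_1 \dots Q_n \pi_n . \psi$ with $\psi$ a quantifier-free LTL-style body over indexed atomic propositions $a_\pi$, so it already matches the prefix shape of a linear HyperPDL-$\Delta$ formula. It thus suffices to translate the body $\psi$ homomorphically: I keep the prefix, the atoms $a_\pi$, and the Boolean connectives unchanged and replace the temporal operators exactly as indicated in \autoref{section:HyperPDL}, mapping a next-step $\bigcirc \chi$ to $\langle \bullet \rangle \mathit{tr}(\chi)$ and an until $\chi_1 \mathcal{U} \chi_2$ to $\langle (\mathit{tr}(\chi_1)? \cdot \bullet)^* \rangle \mathit{tr}(\chi_2)$; a release $\chi_1 \mathcal{R} \chi_2$ is then covered via the duality $\lnot(\lnot \chi_1 \mathcal{U} \lnot \chi_2)$ using the negation available in quantifier-free bodies. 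Because the translation uses only the wildcard program $\bullet$, it never constrains transition labels.

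The central step is a correctness lemma: for every trace set $T$ and every trace assignment $\Pi$, we have $\Pi \models_T \psi$ in the HyperLTL sense iff $\Pi \models_T \mathit{tr}(\psi)$ in the HyperPDL-$\Delta$ sense. This is proved by structural induction on $\psi$, the atomic and Boolean cases being immediate. The interesting case is until, where I would unwind the Kleene star: by the definition of $R((\mathit{tr}(\chi_1)? \cdot \bullet)^*)$, a pair $(\Pi,0,2i)$ lies in the relation exactly when $\mathit{tr}(\chi_1)$ holds at each even index $0,2,\dots,2(i-1)$, and the diamond additionally requires $\Pi[2i,\infty] \models_{T} \mathit{tr}(\chi_2)$. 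Matching the even-index convention of the program semantics (even positions carry states, odd positions carry programs) with the unit-step semantics of LTL, so that LTL position $i$ corresponds to even index $2i$, this is precisely the unfolding of $\chi_1 \mathcal{U} \chi_2$, and the induction hypotheses for $\chi_1$ and $\chi_2$ close the case.

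Finally I would reconcile the differing alphabets: HyperLTL traces range over $2^{AP}$, whereas HyperPDL-$\Delta$ traces alternate over $(2^{AP}\Sigma)^{\omega}$. For the forward direction, given a model $T$ of the HyperLTL formula I interleave every trace with a fixed program symbol to obtain an alternating trace set $T'$ modelling $\mathit{tr}(\psi)$; for the backward direction I project a model of $\mathit{tr}(\psi)$ onto its $2^{AP}$-components. Since the translation uses only $\bullet$, both operations preserve satisfaction, so the HyperLTL formula is satisfiable iff its translation is, yielding undecidability. I expect the main obstacle to be getting the index bookkeeping in the until case exactly right, in particular aligning the factor-of-two stepping of the relation $R$ with the unit steps of LTL and making the alphabet translation interact cleanly with nested quantifiers; neither difficulty is conceptually deep.
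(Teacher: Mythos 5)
Your proposal is correct and follows essentially the same route as the paper: the paper obtains undecidability precisely by a reduction from HyperLTL satisfiability \cite{Finkbeiner2016}, using the embedding of HyperLTL into linear HyperPDL-$\Delta$ already described in \autoref{section:HyperPDL} (with $\varphi_1 \mathcal{U} \varphi_2$ encoded as $\langle (\varphi_1 ? \cdot \bullet)^{*} \rangle \varphi_2$). Your write-up merely makes explicit the correctness lemma and the interleaving of traces with a fixed program symbol, details the paper leaves implicit.
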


However, since the encoding relies on the availability of arbitrary combinations of quantifiers, a natural question is whether fragments of linear HyperPDL-$\Delta$ with restricted quantifier combinations yield a decidable satisfiability problem.
For HyperLTL, several such restrictions were considered \cite{Finkbeiner2016}. 
In \cite{Finkbeiner2016}, the satisfiability problem for the restricted fragments in HyperLTL is solved via the transformation of a HyperLTL formula to an equisatisfiable LTL formula and solving the satisfiability problem for the latter.
Since there is no apparent connection of this form  between HyperPDL-$\Delta$ and LTL, we use a similar type of translation, but instead translate our formulas into suitable ABA and check those for emptiness.
In all cases, the lower complexity bound can be obtained via reduction from the satisfiability problem of the corresponding fragment of HyperLTL.

\begin{theorem}\label{forallsatisfiability}
	The satisfiability problem for the $\forall^*$-fragment of HyperPDL-$\Delta$ is PSPACE-complete.
\end{theorem}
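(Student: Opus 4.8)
The plan is to reduce satisfiability of a $\forall^*$-formula to single-trace satisfiability of its matrix, and to decide the latter by an emptiness test on an alternating Büchi automaton built as in \autoref{construction}. The key observation is that a $\forall^*$-formula $\varphi = \forall \pi_1 \dots \forall \pi_n . \psi$ is satisfiable if and only if there is a single trace $t$ such that the diagonal trace assignment $\Pi_t$ with $\Pi_t(\pi_i) = t$ for all $i$ (and $\Pi_t(\epsilon) = t$) satisfies $\psi$. Indeed, if $T \models \varphi$ for some non-empty $T$, then picking any $t \in T$ and instantiating every universally quantified variable with $t$ yields $\Pi_t \models \psi$; conversely, if $\Pi_t \models \psi$, then $\{t\}$ is a non-empty model of $\varphi$, since the only trace assignment over the singleton set $\{t\}$ is $\Pi_t$. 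As all quantifiers in the linear semantics range over the entire trace set, the auxiliary variable $\epsilon$ imposes no further constraint, so this equivalence follows directly from the semantics.

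For the upper bound I would adapt the construction of \autoref{construction} to the pure trace setting. Since $\psi$ is quantifier-free, building $\mathcal{A}_\psi$ only invokes the cases for atomic propositions, boolean connectives and the modalities $\langle\alpha\rangle$, $[\alpha]$ and $\Delta\alpha$; none of these performs a dealternation, and the sole complementation (for $\lnot\Delta\alpha$ in negation normal form) is merely quadratic by \autoref{complementation} and hence harmless under the standing bound on the nesting depth of negated $\Delta$ operators. Reading sets of propositions directly in place of labelled states, $\mathcal{A}_\psi$ is thus an ABA over $(2^{AP})^n \times \Sigma^n$ whose number of states is polynomial in $|\varphi|$. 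I would then \emph{diagonalise} it into an ABA $\mathcal{A}'$ over the single-trace alphabet $2^{AP}\times\Sigma$ that, on reading a letter $(P,\sigma)$, applies the transition function of $\mathcal{A}_\psi$ to the tuple $((P,\dots,P),(\sigma,\dots,\sigma))$; by construction $\mathcal{A}'$ accepts exactly those $t$ with $\Pi_t \models \psi$, while retaining polynomially many states. By the observation above, $\varphi$ is satisfiable iff $\mathcal{L}(\mathcal{A}') \neq \emptyset$. Emptiness of an ABA is decidable in space polynomial in its number of states: by \autoref{dealternation} it reduces to nondeterministic Büchi emptiness, which amounts to guessing a lasso on the fly; and although the alphabet $2^{AP}\times\Sigma$ is exponential, each letter can be guessed symbolically with polynomially many bits, keeping the whole procedure in PSPACE.

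For the matching lower bound I would reduce from the satisfiability problem for the $\forall^*$-fragment of HyperLTL, which is PSPACE-complete \cite{Finkbeiner2016}. The embedding of HyperLTL into HyperPDL-$\Delta$ (translating $\bigcirc\varphi$ to $\langle\bullet\rangle\varphi$ and $\varphi_1\mathcal{U}\varphi_2$ to $\langle(\varphi_1?\cdot\bullet)^*\rangle\varphi_2$ over a singleton program alphabet) leaves the quantifier prefix untouched and is semantics-preserving trace-set by trace-set, so it maps a $\forall^*$-HyperLTL formula to a $\forall^*$-HyperPDL-$\Delta$ formula that is satisfiable exactly when the original is. Together with the upper bound this yields PSPACE-completeness.

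The main obstacle I anticipate is not the reduction itself but tightening the complexity bookkeeping. I must verify that the trace-setting construction of $\mathcal{A}_\psi$ genuinely avoids any exponential blowup: the tests $\psi_i?$ inside programs spawn further quantifier-free subautomata, and the $\varepsilon$-closure relations $\xRightarrow{\tau}_X$ used in the $\langle\alpha\rangle$, $[\alpha]$ and $\Delta\alpha$ cases can a priori range over exponentially many marking sets $X$. This is precisely what the succinct alternative construction of the transition functions of \autoref{appendix:alphaalt} is designed to handle, giving formulas of size linear in $|\alpha|$. The second delicate point is ensuring that the emptiness test remains in PSPACE in spite of the exponential alphabet and the exponential dealternated automaton, which is achieved by never materialising either explicitly and instead guessing letters and the reachable configuration of $\textit{MH}(\mathcal{A}')$ one step at a time.
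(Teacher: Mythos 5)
Your proposal is correct and follows essentially the same route as the paper: both arguments rest on the observation that a $\forall^*$-formula is satisfiable iff some single trace satisfies the diagonal instantiation of its quantifier-free matrix, both decide this by an emptiness test (in PSPACE) on a polynomially sized ABA built as in \autoref{Section:ModelChecking}, and both obtain the matching lower bound by embedding the universally quantified fragment of HyperLTL. The only difference is where the diagonal collapse is carried out: the paper does it syntactically, substituting a single fresh variable $\pi$ and compressing each tuple $\tau$ into a $1$-tuple (using $\false ? \cdot (\cdot)$ when $\tau$ mentions two distinct atomic programs), whereas you do it on the automaton side by restricting the alphabet of $\mathcal{A}_\psi$ to diagonal letters $((P,\dots,P),(\sigma,\dots,\sigma))$ --- two interchangeable implementations of the same idea.
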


\begin{proof}(Sketch)
	Let $\varphi$ be a $\forall^*$-fragment formula, i.e. $\varphi \equiv \forall \pi_1 \dots \forall \pi_n . \psi$ for a quantifier-free formula $\psi$.
	We manipulate $\psi$ by substituting $\pi_1,...,\pi_n$ with a single fresh variable $\pi$ and obtain a formula $\psi'$.
	This is done by (i) replacing every occurence of an atomic proposition $a_{\pi_i}$ with $a_{\pi}$ and (ii) compressing each tuple of atomic programs $\tau$ into a program $\alpha$ with only $1$-tuples.
	The compression discriminates three cases.
	If $\tau$ only consists of wildcard programs, i.e. $\tau = \bullet$, then it is compressed to $(\cdot)$.
	If $\tau$ is composed from the set $\{\sigma,\cdot\}$ for some atomic program $\sigma$, then it is compressed to $(\sigma)$.
	Otherwise, i.e. if $\tau$ is composed of distinct non-wildcard atomic programs, it is compressed to $\false ? \cdot (\cdot)$.
	For example, for $\varphi \equiv \forall \pi_1 \forall \pi_2 \langle (\cdot,\sigma)\rangle a_{\pi_1} \land [\bullet + (\sigma,\sigma')] \neg a_{\pi_2}$, $\psi'$ is given by $\langle (\sigma) \rangle a_{\pi} \land [ (\cdot) + \false ? \cdot (\cdot)]\neg a_{\pi}$.
	Let $\mathcal{A}$ be the ABA for $\psi'$ as described in \autoref{Section:ModelChecking}.
	We can test $\mathcal{A}$ for emptiness in PSPACE \cite{Lange2016}. $\mathcal{A}$ is non-empty iff $\varphi$ is satisfiable:  any word $w$ accepted by $\mathcal{A}$ gives rise to the trace set $\{w\}$ satisfying $\varphi$. Analogously, a non-empty trace set $T$ with  $T \models \varphi$ can be used to obtain a word accepted by $\mathcal{A}$ since $\mathcal{A}$ accepts all traces satisfying $\psi'$ and we can thus pick any trace $t$ in $T$ as a witness. 
	
	The lower bound directly follows by a reduction from the satisfiability problem for the $\exists^* \forall^*$ fragment of HyperLTL \cite{Finkbeiner2016}.
\end{proof}

Similarly, we can show that the $\exists^*$ fragment is also PSPACE-complete.

\begin{theorem}\label{Theorem:ExistsSatisfiability}
	The satisfiability problem for the $\exists^*$-fragment of HyperPDL-$\Delta$ is PSPACE-complete.
\end{theorem}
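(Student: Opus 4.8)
The plan is to mirror the argument for \autoref{forallsatisfiability}, but to exploit the feature dual to universal quantification: whereas for the $\forall^*$-fragment we collapsed all path variables onto a single witness trace, for the $\exists^*$-fragment we instead bound the size of the satisfying model. Writing $\varphi \equiv \exists \pi_1 \dots \exists \pi_n . \psi$ with $\psi$ quantifier-free, the first observation is that $\varphi$ is satisfiable iff there is an $n$-tuple of traces $(t_1, \dots, t_n)$, not necessarily distinct, such that the trace assignment $\Pi$ with $\Pi(\pi_i) = t_i$ satisfies $\psi$. Indeed, given such a tuple we may take $T = \{t_1, \dots, t_n\}$, which is nonempty and satisfies $\varphi$; conversely, any nonempty $T$ with $T \models \varphi$ directly yields witnesses $t_i \in T$ for the existential quantifiers. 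Thus satisfiability of $\varphi$ reduces to the question whether the quantifier-free matrix $\psi$ admits any satisfying assignment over $n$ traces.

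To decide this, I would build the ABA $\mathcal{A}_\psi$ for the quantifier-free formula $\psi$ essentially as in \autoref{Section:ModelChecking}, but adapted to the KTS-free satisfiability semantics: the alphabet becomes $(2^{AP})^n \times \Sigma^n$, i.e.\ $n$-tuples of trace letters, and the clause for $a_{\pi_k}$ checks membership $a \in \mathpzc{s}|_k$ directly in the label component rather than via a labelling function $L$. Since $\psi$ contains no quantifiers, only the constructions of \autoref{construction:phi:basic} and \autoref{construction:phi:alpha} (together with the program automata $M_\alpha$) are used; in particular no dealternation and no folding of a structure around the automaton are required. Consequently $\mathcal{A}_\psi$ has size polynomial in $|\varphi|$, and by construction it accepts exactly the encodings $\nu(\Pi)$ of assignments $\Pi$ over $n$ traces with $\Pi \models \psi$.

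I would then test $\mathcal{A}_\psi$ for emptiness, which for ABA is possible in PSPACE \cite{Lange2016}. Nonemptiness is equivalent to satisfiability of $\varphi$: any accepted word decodes, in the absence of transition constraints, to a genuine $n$-tuple of traces whose assignment satisfies $\psi$, so $\varphi$ is satisfiable by the reduction above; conversely, a satisfying tuple yields an accepted word. This establishes the PSPACE upper bound. For the lower bound, I would reduce from satisfiability of the $\exists^*$-fragment of HyperLTL, which is PSPACE-hard \cite{Finkbeiner2016}; since the standard embedding of HyperLTL into HyperPDL-$\Delta$ leaves the quantifier prefix unchanged, an $\exists^*$ HyperLTL formula maps to an $\exists^*$ HyperPDL-$\Delta$ formula, completing the hardness argument.

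The step I expect to require the most care is the correctness of the model-bounding observation and its interaction with the automaton: one must verify that every word over $(2^{AP})^n \times \Sigma^n$ accepted by $\mathcal{A}_\psi$ really corresponds to a legal assignment of $n$ traces, which holds precisely because, unlike in model checking, there is no structure constraining successor labels, and that the finite set of witness traces genuinely serves as a satisfying trace set for the fully quantified formula. The remaining ingredients, namely the polynomial size bound for the quantifier-free construction and the PSPACE emptiness test, are routine given \autoref{Section:ModelChecking} and the cited complexity results.
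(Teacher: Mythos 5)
Your proposal is correct and follows essentially the same route as the paper's proof: construct the ABA for the quantifier-free matrix $\psi$ over the trace alphabet, observe that nonemptiness is equivalent to satisfiability via the finite witness set $T = \{t_1,\dots,t_n\}$ induced by an accepted word (and conversely), decide emptiness in PSPACE, and obtain hardness by reduction from the $\exists^*$ fragment of HyperLTL \cite{Finkbeiner2016}. The extra care you take in adapting the construction of \autoref{Section:ModelChecking} to the KTS-free semantics and in bounding the automaton size is exactly what the paper's sketch leaves implicit.
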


\begin{proof}(Sketch)
	Let $\varphi$ be a $\exists^*$-fragment formula, i.e. $\varphi \equiv \exists \pi_1 \dots \exists \pi_n . \psi$ for a quantifier-free formula $\psi$.
	We construct the alternating Büchi automaton $\mathcal{A}$ for $\psi$.
	Non-emptiness of $\mathcal{A}$ and satisfiability of $\varphi$ are  equivalent: every trace set $T$ fulfilling $\varphi$ contains  traces $t_1 \dots t_n$ fulfilling $\psi$ and these give rise to a word accepted by  $\mathcal{A}$. On the other hand, if $\mathcal{L}(\mathcal{A})$ is non-empty, the trace set $T$ induced by an arbitrary $w \in \mathcal{L}(\mathcal{A})$ fulfills $\varphi$.
	
	The lower bound follows straightforwardly by a reduction from the satisfiability problem for the $\exists^*$ fragment of HyperLTL \cite{Finkbeiner2016}.
\end{proof}

For the $\exists^* \forall^*$-fragment, we eliminate the universal quantifiers by taking the variables bound by existential quantifiers and replacing the variables bound by universal quantifiers by all possible combinations of them. Unlike the previous two fragments, this increases the complexity beyond PSPACE.
\begin{theorem}\label{theorem:existsforall}
	The satisfiability problem is EXPSPACE-complete for the $\exists^* \forall^*$-fragment of HyperPDL-$\Delta$.
\end{theorem}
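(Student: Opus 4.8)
The plan is to prove EXPSPACE-completeness for the $\exists^* \forall^*$-fragment by combining an upper-bound construction that eliminates the universal quantifiers at exponential cost with a matching lower bound reduced from HyperLTL.

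For the upper bound, let $\varphi \equiv \exists \pi_1 \dots \exists \pi_m \forall \pi_{m+1} \dots \forall \pi_n . \psi$ with $\psi$ quantifier-free. The key idea is that once the $m$ existentially quantified traces $t_1, \dots, t_m$ are fixed, each universally quantified variable $\pi_j$ (for $j > m$) ranges over the \emph{same} trace set as the existential witnesses, so in particular it may be instantiated by any of $t_1, \dots, t_m$. Following the $\forall^*$ strategy of \autoref{forallsatisfiability}, I would replace the universal variables by all functions mapping $\{\pi_{m+1}, \dots, \pi_n\}$ into $\{\pi_1, \dots, \pi_m\}$, producing a purely existential formula $\varphi' \equiv \exists \pi_1 \dots \exists \pi_m . \bigwedge_{f} \psi[f]$ where $\psi[f]$ substitutes $\pi_j$ by $f(\pi_j)$. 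The number of such functions $f$ is $m^{n-m}$, which is at most exponential in $|\varphi|$, so $\varphi'$ has size at most exponential in $|\varphi|$. This step relies on the fact that restricting universal quantifiers to range only over the chosen existential witnesses is sound and complete here: if the $m$ witnesses satisfy the conjunction over all instantiations among themselves, then the trace set $T = \{t_1, \dots, t_m\}$ satisfies $\varphi$ (every universal choice lands in $T$), and conversely any model can be shrunk to its existential witnesses.

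Then I would build the ABA $\mathcal{A}$ for the quantifier-free formula $\bigwedge_f \psi[f]$ as in \autoref{Section:ModelChecking} and apply the $\exists^*$-fragment argument of \autoref{Theorem:ExistsSatisfiability}: $\varphi'$ is satisfiable iff $\mathcal{L}(\mathcal{A}) \neq \emptyset$. Since $\bigwedge_f \psi[f]$ has size exponential in $|\varphi|$, the automaton $\mathcal{A}$ has exponentially many states, and the emptiness test for an ABA is feasible in polynomial space in the automaton size by \autoref{dealternation} together with the NLOGSPACE emptiness check for nondeterministic Büchi automata (\cite{Lange2016}). Composing these, the overall procedure runs in space polynomial in an exponential quantity, i.e.\ in EXPSPACE.

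For the matching lower bound, I would reduce from the satisfiability problem of the $\exists^* \forall^*$-fragment of HyperLTL, which is EXPSPACE-hard \cite{Finkbeiner2016}, using the standard embedding of HyperLTL into HyperPDL-$\Delta$ that encodes $\varphi_1 \mathcal{U} \varphi_2$ as $\langle (\varphi_1 ? \cdot \bullet)^* \rangle \varphi_2$ and preserves the quantifier prefix. I expect the main obstacle to be the soundness argument for the universal-elimination step: one must verify carefully that instantiating universal variables only by the existential witnesses loses no models, paying attention to the role of the distinguished path variable $\epsilon$ and the branching condition in the semantics of \autoref{section:HyperPDL}, and confirming that the linear (trace-based) semantics used throughout this section makes the instantiation genuinely sound despite the presence of regular modalities and tests inside $\psi$.
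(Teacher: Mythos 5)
Your proposal is correct and follows essentially the same route as the paper's proof: eliminate the universal quantifiers by conjoining $\psi$ over all instantiations of the universal variables by the existential witnesses (the paper writes this as nested conjunctions $\bigwedge_{j_1=1}^n \dots \bigwedge_{j_m=1}^n \psi[\pi_{j_1}/\pi_1']\dots[\pi_{j_m}/\pi_m']$, which is exactly your set of functions $f$), argue equisatisfiability in both directions as you do, apply the $\exists^*$ procedure of \autoref{Theorem:ExistsSatisfiability} to the exponentially larger formula to land in EXPSPACE, and obtain hardness by reduction from the $\exists^*\forall^*$ fragment of HyperLTL \cite{Finkbeiner2016}. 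The only detail the paper makes slightly more explicit is that the variable substitution reuses the tuple-compression mechanism from \autoref{forallsatisfiability}, restricted to merging two path variables at a time, which your reference to that proof already covers.
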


\begin{proof}(Sketch)
	Let $\varphi$ be a $\exists^* \forall^*$-formula, i.e. $\varphi \equiv \exists \pi_1 \dots \exists \pi_n  \forall \pi_1' \dots \forall \pi_m' . \psi$ for a quantifier-free formula $\psi$.
	For a formula $\psi$ and path variables $\pi, \pi'$, we define the substitution $\psi[\pi/\pi']$ to be the variant of $\psi$ in which all occurences of $\pi'$ have been replaced by $\pi$ similar to the substitution in the proof of \autoref{forallsatisfiability}.
	The main difference here is that only two instead of $n$ path variables are compressed into one.
	Thus only two instead of all atomic programs have to be considered when determining the replacement of a tuple.
	Let $\varphi' \equiv\exists \pi_1 \dots \exists \pi_n . \bigwedge_{j_1 = 1}^n   \dots  \bigwedge_{j_m = 1}^n \psi[\pi_{j_1} / \pi_1']\dots[\pi_{j_m} / \pi'_m].$ 
	For example, for $\varphi \equiv \exists \pi_1 . \exists \pi_2 . \forall \pi'_1 .$ $\langle (\cdot,\cdot,\sigma)\rangle a_{\pi_1} \land \neg a_{\pi_2} \land a_{\pi'_1}$, $\varphi' = \exists \pi_1 . \exists \pi_2 . (\langle (\sigma,\cdot)\rangle a_{\pi_1} \land \neg a_{\pi_2} \land a_{\pi_1}) \land (\langle (\cdot,\sigma)\rangle a_{\pi_1} \land \neg a_{\pi_2} \land a_{\pi_2})$.
	$\varphi'$ is an $\exists^*$ formula and is equisatisfiable to $\varphi$: any trace assignment satisfying $\varphi$ naturally induces a model of $\varphi'$.
	For the reverse direction, assume $\Pi \models \varphi'$.
	$\varphi'$ contains all possible combinations of assignments for the variables $\pi'_1 \dots \pi'_m$ with traces chosen for the existentially quantified variables $\pi_1 \dots \pi_n$.
	Then $T = \{\Pi(\pi_i) \mid 1 \leq i \leq n\} \models \varphi$. $\varphi'$ is constructible in EXPTIME.
	Therefore the satisfiability check is possible in EXPSPACE due to \autoref{Theorem:ExistsSatisfiability}.
	
	The lower bound easily follows by a reduction from the satisfiability problem for the $\exists^* \forall^*$ fragment of HyperLTL \cite{Finkbeiner2016}.
\end{proof}

As in \cite{Finkbeiner2016}, from \autoref{theorem:existsforall}, we obtain that it is an EXPSPACE-complete problem to decide whether one uncritical HyperPDL-$\Delta$ formula is implied by another. In particular, the uncritical fragment includes properties like variants of observational determinism enriched by regular predicates. 

	\section{Expressivity Results}\label{section:expressivity}
\label{expressivity}


As mentioned in the introduction, a desirable property of temporal logics is the ability to specify arbitrary $\omega$-regular properties.
We show that HyperPDL-$\Delta$ indeed has this property.

\begin{theorem}\label{omegaregular}
Let $\Pi$ be a trace assignment and $\pi_1 \dots \pi_n$ be the variables bound by $\Pi$.
Let $\nu_{AP}: \textit{TA} \to ((2^{AP})^n \times \Sigma^n)^{\omega}$ be the analog of $\nu$ for trace assignments.
For a given $\omega$-regular language $\mathcal{L}$ over $(2^{AP})^n \times \Sigma^n$, there is a quantifier-free HyperPDL$-\Delta$ formula $\varphi$ with path variables $\pi_1 \dots \pi_n$ such that $\Pi \models \varphi$ iff $\nu_{AP}(\Pi) \in \mathcal{L}$.
\end{theorem}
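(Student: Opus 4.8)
The plan is to build $\varphi$ from a Büchi automaton for $\mathcal{L}$ via the classical $\omega$-regular normal form, translating finite-word regular languages into HyperPDL-$\Delta$ programs and using the $\langle \cdot \rangle$ modality together with $\Delta$ to capture the $U\cdot V^\omega$ shape. First I would fix a nondeterministic Büchi automaton $\mathcal{B} = (Q, q_0, \hat{\Sigma}, \rho, F)$ over $\hat{\Sigma} = (2^{AP})^n \times \Sigma^n$ recognising $\mathcal{L}$, and invoke the standard decomposition $\mathcal{L} = \bigcup_{q \in F} U_q \cdot (V_q)^\omega$, where $U_q$ is the regular language of finite words driving $\mathcal{B}$ from $q_0$ to $q$ and $V_q$ is the regular language of \emph{nonempty} finite words driving $\mathcal{B}$ from $q$ back to $q$. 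The point that every word in $V_q$ is nonempty will later guarantee that the $\Delta$-decomposition exhausts the whole suffix.

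Next I would define a translation $h$ sending each letter $(A,\sigma) \in \hat{\Sigma}$, with $A = (A_1,\dots,A_n)$ and $\sigma = (\sigma_1,\dots,\sigma_n)$, to the program $h((A,\sigma)) = \chi_A ? \cdot (\sigma_1,\dots,\sigma_n)$, where $\chi_A = \bigwedge_{j=1}^n \bigl( \bigwedge_{a \in A_j} a_{\pi_j} \wedge \bigwedge_{a \in AP \setminus A_j} \lnot a_{\pi_j} \bigr)$ pins down the state labels on all paths while the tuple pins down the programs. Extending $h$ homomorphically over $+$, $\cdot$ and $*$ turns any regular expression for a finite-word language $R$ over $\hat{\Sigma}$ into a program $\beta_R$. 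The core technical lemma, proved by induction on the regular expression using the clauses defining $R(\cdot)$, states that $(\Pi, i, k) \in R(\beta_R)$ iff the factor of $\nu_{AP}(\Pi)$ spanning letter positions $i/2$ through $k/2$ lies in $R$. The base case uses that $R(\chi_A?)$ forces $i=k$ together with $\Pi[i,\infty] \models \chi_A$ (i.e. the labels at position $i/2$ equal $A$), and that the tuple step from $i$ to $i+2$ checks exactly the programs $\sigma$; the inductive cases are immediate since the semantics of $R(\alpha_1 + \alpha_2)$, $R(\alpha_1 \cdot \alpha_2)$ and $R(\alpha^*)$ mirror union, concatenation and Kleene star on these factors.

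Finally I would set $\varphi = \bigvee_{q \in F} \langle \beta_{U_q} \rangle\, \Delta \beta_{V_q}$ and verify $\Pi \models \varphi$ iff $\nu_{AP}(\Pi) \in \mathcal{L}$. By the semantics of $\langle \cdot \rangle$ and the lemma, the disjunct for $q$ holds iff some prefix of $\nu_{AP}(\Pi)$ (up to some even index $j$) lies in $U_q$ and the suffix $\Pi[j,\infty]$ satisfies $\Delta \beta_{V_q}$; by the semantics of $\Delta$ and the lemma this last condition holds iff the suffix factors into consecutive segments each in $V_q$. Because every word in $V_q$ is nonempty, the cut points $0 = k_1 \le k_2 \le \dots$ are strictly increasing and hence tend to infinity, so the decomposition covers the entire suffix, i.e. the suffix lies in $(V_q)^\omega$. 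Matching this against the normal form for $\mathcal{L}$ yields the equivalence, and the resulting $\varphi$ is quantifier-free as required.

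The main obstacle I anticipate is the $\Delta$/$V^\omega$ correspondence: ensuring that the existential ``there are $0 = k_1 \le k_2 \le \dots$'' in the semantics of $\Delta$ genuinely produces a decomposition of the whole infinite suffix rather than one stalling at a finite prefix --- which is exactly what insisting on nonempty $V_q$ repairs --- and, more routinely, keeping the alternating even-index bookkeeping consistent throughout the translation lemma, namely that a letter position $m$ corresponds to path index $2m$ and that tests live at the even indices between tuple steps.
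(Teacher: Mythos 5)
Your proposal is correct and follows essentially the same route as the paper's proof: the paper likewise invokes the decomposition $\mathcal{L} = \bigcup_{i=1}^{k} \mathcal{L}_{i,0}\,\mathcal{L}_{i,1}^{\omega}$, replaces each letter $((P_1,\dots,P_n),\tau)$ by the program $(\bigwedge_{l=1}^{n}\bigwedge_{a \in P_l} a_{\pi_l} \land \bigwedge_{a \notin P_l} \lnot a_{\pi_l})? \cdot \tau$, and takes $\varphi \equiv \bigvee_{i=1}^{k} \langle \alpha_{i,0} \rangle \Delta \alpha_{i,1}$, which is exactly your $\bigvee_{q \in F} \langle \beta_{U_q} \rangle \Delta \beta_{V_q}$. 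Your explicit insistence that the words in $V_q$ be nonempty, so that the non-decreasing cut points in the semantics of $\Delta$ cannot stall and must exhaust the suffix, is a genuine detail that the paper's terse proof leaves implicit, and your translation lemma makes precise the even-index bookkeeping the paper takes for granted.
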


\begin{proof}	
	Let $\mathcal{L}$ be an $\omega$-regular language over $(2^{AP})^n \times \Sigma^n$.
	It is well-known \cite{Lange2016} that $\mathcal{L} = \bigcup_{i=1}^k \mathcal{L}_{i,0} \mathcal{L}_{i,1}^{\omega}$ holds for some regular languages $\mathcal{L}_{i,0}$, $\mathcal{L}_{i,1}$.
	Let $r_{i,j}$ be a regular expression for $\mathcal{L}_{i,j}$.
	Every symbol in $r_{i,j}$ has the form $((P_1,...,P_n),\tau)$ for $P_k \subseteq AP$ and $\tau \in \Sigma^n$.
	Let $\alpha_{i,j}$ be the regular expression obtained by replacing each such symbol in $r_{i,j}$ by $(\bigwedge_{l = 1}^{n}\bigwedge_{a \in P_l} a_{\pi_l} \land \bigwedge_{a \not\in P_l} \lnot a_{\pi_l})? \cdot \tau$.
	Then $\varphi \equiv \bigvee_{i=1}^k \langle \alpha_{i,0} \rangle \Delta \alpha_{i,1}$ yields the desired formula.
\end{proof}

It follows from this theorem that HyperPDL-$\Delta$ can express an infinitary version of the regular hyperlanguages recently proposed in \cite{Bonakdarpour2020}.


We now compare our logic to other hyperlogics.
For this purpose we introduce a logic 
that adds to HyperCTL$^*$ the ability to quantify over atomic propositions.

\begin{definition}[\cite{Coenen2019}]
The logic HyperQCTL$^*$ is obtained by adding to the syntax of HyperCTL$^*$ the rules $\varphi ::= q \mid \exists q . \varphi$ and to the semantics the rules
\begin{align*}
  \Pi \models_{\mathcal{T}}  q  &\quad\textit{ iff }\quad  q \in \Pi(\pi_q)(0) \\
  \Pi \models_{\mathcal{T}} \exists q . \varphi &\quad\textit{ iff }\quad \exists t \in (2^{\{q\}})^{\omega}.\Pi[\pi_q \rightarrow t] \models_{\mathcal{T}} \varphi
\end{align*}
The sub-logic HyperQPTL consists of the HyperQCTL$^*$ formulas where both path quantifiers $Q \pi. \varphi$ and propositional quantifiers $Q q. \varphi$ only occur at the front of the formula.
\end{definition} 

\begin{theorem}\label{qptl}
\begin{enumerate}
\item HyperCTL$^* <$  HyperPDL-$\Delta \leq$ HyperQCTL$^*$
\item HyperLTL $<$ Linear HyperPDL-$\Delta \leq $ HyperQPTL
\end{enumerate}
\end{theorem}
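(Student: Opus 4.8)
The plan is to prove each chain $L_1 < L_2 \le L_3$ by separately establishing the embedding $L_1 \le L_2$, the strict separation $L_1 \ne L_2$, and the embedding $L_2 \le L_3$; the two chains share the same ideas and differ only in whether branching quantification is present. For the embeddings $\text{HyperCTL}^* \le \text{HyperPDL-}\Delta$ and $\text{HyperLTL} \le \text{Linear HyperPDL-}\Delta$, I would give a structural translation that leaves path quantifiers and boolean connectives unchanged and replaces the temporal operators by the PDL-modalities already used in the proof of \autoref{hardness}, i.e. $\bigcirc\varphi \equiv \langle\bullet\rangle\varphi$, $\varphi_1\,\mathcal{U}\,\varphi_2 \equiv \langle(\varphi_1?\cdot\bullet)^*\rangle\varphi_2$ and $\varphi_1\,\mathcal{R}\,\varphi_2 \equiv [(\lnot\varphi_1?\cdot\bullet)^*]\varphi_2$. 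An induction on formula structure, using the semantics of $\langle\cdot\rangle$, $[\cdot]$ and $R(\cdot)$, shows that the translation preserves satisfaction under every path assignment; since it maps quantifier-prefix formulas to quantifier-prefix formulas, it simultaneously witnesses $\text{HyperLTL}\le\text{Linear HyperPDL-}\Delta$.

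For the strict separations I would exhibit a single-trace $\omega$-regular property that the weaker logic cannot capture. By \autoref{omegaregular} (with $n=1$), for any $\omega$-regular language $\mathcal L$ over $2^{AP}\times\Sigma$ there is a quantifier-free $\varphi$ with $\Pi\models\varphi$ iff $\nu_{AP}(\Pi)\in\mathcal L$, so $\forall\pi_1.\varphi$ expresses ``every trace lies in $\mathcal L$'' already in Linear HyperPDL-$\Delta$. Choosing $\mathcal L$ to be an $\omega$-regular language provably inexpressible in HyperLTL resp.\ HyperCTL$^*$ \cite{Rabe2016} (for instance ``$a$ holds at every even position''), and noting that this property refers to a single trace, yields the strict separation in both the linear and the branching case.

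The interesting direction is $L_2 \le L_3$. For $\text{Linear HyperPDL-}\Delta\le\text{HyperQPTL}$ I would exploit that a linear formula has the shape $Q_1\pi_1\dots Q_n\pi_n.\psi$ with $\psi$ quantifier-free; then every test inside $\psi$ is quantifier-free as well, so $\psi$ defines an $\omega$-regular property of the jointly quantified traces over $((2^{AP})^n\times\Sigma^n)^\omega$, which is exactly what the quantifier-free automaton construction of \autoref{Section:ModelChecking} (containing no oracle subformulas) recognises. Since QPTL-style propositional quantification can express any $\omega$-regular property by guessing and verifying a Büchi run, I would keep the prefix $Q_1\pi_1\dots Q_n\pi_n$ and replace $\psi$ by $\exists\bar q.\chi$, where the fresh propositions $\bar q$ encode the automaton states and the LTL body $\chi$ checks the transition and acceptance conditions while reading the $a_{\pi_i}$ and the transition labels. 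For $\text{HyperPDL-}\Delta\le\text{HyperQCTL}^*$ the same idea is applied inductively to the full branching logic: quantifiers, atoms and connectives translate directly, and each modality $\langle\alpha\rangle\varphi$, $[\alpha]\varphi$, $\Delta\alpha$ is translated by introducing propositional quantifiers that encode a run of $M_\alpha$ along the synchronous steps (the temporal operators of HyperQCTL$^*$ advance all assigned paths simultaneously, matching the synchronous advancement $\Pi[i,\infty]$ of HyperPDL-$\Delta$), with each test $\psi?$ replaced by the recursively translated subformula asserted at the corresponding marked state, and with $\Delta\alpha$ using an ``infinitely often'' (Büchi) condition on the restart marker.

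The main obstacle is the branching embedding $\text{HyperPDL-}\Delta\le\text{HyperQCTL}^*$, where one must reconcile three features: propositional quantifiers range over time sequences, the program modalities advance several paths synchronously, and tests $\psi?$ may themselves open fresh path quantifiers that must branch from the advanced position of $\epsilon$. The delicate point is to keep the run-encoding propositions synchronised with the correct step index, so that a translated test is evaluated exactly at the position where $M_\alpha$ passes through the marked state, and to ensure that nested path quantifiers branch from the right state of $\mathcal T$; here the agreement between the synchronous semantics of HyperQCTL$^*$ temporal operators and HyperPDL-$\Delta$ modalities is precisely what makes the argument go through. A minor technicality to settle first is the treatment of the transition labels $\Sigma$, which must be encoded as mutually exclusive atomic propositions so that HyperQPTL and HyperQCTL$^*$ can be interpreted over KTS in the first place.
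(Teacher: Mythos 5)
Your proposal is correct, and on three of its four components it coincides with the paper's own proof: the embeddings of HyperLTL and HyperCTL$^*$ into (linear) HyperPDL-$\Delta$ via the $\bigcirc$, $\mathcal{U}$, $\mathcal{R}$ translations; the strictness via the inability of HyperLTL and HyperCTL$^*$ to express all $\omega$-regular properties \cite{Rabe2016}; and the HyperQPTL embedding, where the paper likewise compiles the quantifier-free body into a Büchi automaton as in \autoref{Section:ModelChecking} and then obtains a QPTL formula for that automaton by citing \cite{Kesten2002} before reattaching the path prefix --- your explicit ``guess and verify a run with fresh propositions'' is exactly the content of that citation, and as you note the resulting formula is in prefix form as HyperQPTL requires. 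Where you genuinely diverge is the branching embedding HyperPDL-$\Delta \leq$ HyperQCTL$^*$: the paper never translates into HyperQCTL$^*$ directly, but instead encodes the semantics of HyperPDL-$\Delta$ into MSO[E] (a Büchi-style translation of programs in which the quantified paths are synchronised via the $E$ predicate, spelled out for $\Delta\alpha$ in the appendix) and then invokes the equivalence of MSO[E] and HyperQCTL$^*$ established in \cite{Coenen2019}. Your route --- an inductive translation in which propositional quantifiers encode runs of $M_{\alpha}$ step-synchronously along the assigned paths, translated tests are asserted at the marked positions, and $[\alpha]\varphi$ is handled by duality or universal propositional quantification --- is also sound, because HyperQCTL$^*$'s propositional quantifiers range over arbitrary $\omega$-sequences that advance synchronously under temporal operators and quantifying a proposition does not reset $\epsilon$. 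The trade-off is clear: the paper's detour is short but rests on the nontrivial equivalence theorem of \cite{Coenen2019} and says nothing about the shape of the resulting HyperQCTL$^*$ formula; your translation is self-contained and constructive, at the cost of the longer correctness induction whose delicate points (evaluating tests at the correct index, branching of nested path quantifiers from the advanced $\epsilon$-position, recasting $\Sigma$-labels as mutually exclusive propositions) you correctly identify as the places where the work lies.
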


\begin{proof}
Part one of both claims is straightforward:
Embedding HyperLTL and HyperCTL$^*$ into (linear) HyperPDL-$\Delta$ works as described in \autoref{section:HyperPDL}.
An embedding in the other direction is impossible due to the inability of HyperLTL and HyperCTL$^*$ to express arbitrary $\omega$-regular properties \cite{Rabe2016}.

For the second part of the first claim, we observe that the semantics of HyperPDL-$\Delta$ can straightforwardly be encoded in MSO[E], which is equally expressive as HyperQCTL$^*$ by \cite{Coenen2019}.
The last claim can be shown by a direct translation via Büchi automata:
A quantifier-free HyperPDL-$\Delta$ formula can be translated into a Büchi Automaton as described in \autoref{Section:ModelChecking}.
By \cite{Kesten2002}, there is a QPTL formula for that automaton, where the quantifiers can be reattached to yield the desired formula.
\end{proof}

By the results of \cite{Coenen2019}, we obtain that linear HyperPDL-$\Delta$ is stricly less expressive than S1S[E], which, as mentioned, has an undecidable model checking problem.
We leave a more precise localisation of linear and unrestricted HyperPDL-$\Delta$ in the hierarchies of hyperlogics from \cite{Coenen2019} for future work.
This includes comparisons with FO[$<$,E] and MPL[E] and an answer to the question if the second inequalities from the claims in \autoref{qptl} are indeed strict.

	\section{Conclusion}\label{section:conclusion}
We introduced the logic HyperPDL-$\Delta$ as a variant of Propositional Dynamic Logic for hyperproperties that can express all $\omega$-regular properties.
Our model checking algorithm  has the same complexity as model checking HyperCTL$^*$, despite the increased expressive power.
Finally, we showed that satisfiability checking for certain fragments has the same complexity as for structurally similar, but less expressive fragments of HyperLTL.

Future work includes implementing a model checker for HyperPDL-$\Delta$.
It would also be interesting to explore alternative model checking and satisfiability testing algorithms for subfragments of HyperPDL-$\Delta$, possibly by exploiting classical techniques for PDL. 

	\bibliographystyle{plainurl}
	\bibliography{sections/conclusion/citations}
	\clearpage
\appendix

\section{Detailed Proofs from \autoref{Section:ModelChecking}}\label{appendix:complexitylemma}
	
\begin{proof}[Proof of \autoref{constructionlemma}]
	By structural induction over the structure of $\alpha$.
	
	The fact that $i$ and $k$ are even ensures that the path assignment $\Pi[i,k]$ starts and ends with states rather than atomic programs.
	
	\boldmath$\alpha = \tau$\unboldmath: Assume $(\Pi,i,k) \in R(\alpha)$. 
	By definition, this implies $\tau|_l = \cdot$ or $\tau_{i+1}|_l = \tau|_l$ for all $l$ (*) and $k = i + 2$.
	We construct a state sequence in $M_{\alpha}$ satisfying (i) to (iv).
	Such a state sequence can only consist of two states and only a single symbol can be read due to the length restriction.
	The state sequence $q_0 q_3$ with empty sets $X_0,X_1$ satisfies this constraint as well as (i) and (ii) by construction, (iii) by (*) and (iv) trivially.
	
	Let there be a state sequence in $M_{\alpha}$ satisfying conditions (i) to (iv).
	By construction and conditions (i) to (iii), we get that the state sequence can only consist of $q_0$ and $q_3$, thus $k = i+2$ and $\tau_{i+1}$ is the only symbol read.
	Additionally, we get that for all $l$ $\tau|_l = \cdot$ or $\tau|_l = \tau_{i+1}|_l$ which establishes $(\Pi,i,k) \in R(\alpha)$ by definition.
	
	\boldmath$\alpha = \varepsilon$\unboldmath: Assume, that $(\Pi,i,k) \in R(\alpha)$ holds.
	This implies, by definition, that $k = i$ and the state sequence we are looking for can only consist of a single state.
	Indeed, the singleton state sequence $q_0$ with an empty set $X_0$ satisfies conditions (i) to (iv): it is the initial state and the final state is $\varepsilon$-reachable which establishes (i) and (ii), whereas the remaining conditions (iii) and (iv) are established trivially.
	
	We assume there is a state sequence in $M_{\alpha}$ satisfying (i) to (iv).
	Since there are only $\varepsilon$-transitions in $M_{\alpha}$, it can only consist of a single state, the initial state $q_0$.
	Therefore, since $m = 0$, we have $i = k$, which by definition establishes $(\Pi,i,k) \in R(\alpha)$.
	
	\boldmath$\alpha = \alpha_{1} + \alpha_{2}$\unboldmath: First, assume that $(\Pi,i,k) \in R(\alpha)$ holds.
	By definition, this implies either $(\Pi,i,k) \in R(\alpha_{1})$ or $(\Pi,i,k) \in R(\alpha_{2})$.
	We distinguish two cases where the first (resp. the second) condition holds and consider $(\Pi,i,k) \in R(\alpha_{1})$.
	The other case is analogous.
	By the induction hypothesis, we get a state sequence $q_{0,\alpha_{1}} ... q_{m,\alpha_{1}}$ with $m = \frac{k-i}{2}$ and sets $X_0,...,X_m$ satisfying conditions (i) to (iv).
	We replace $q_{0,\alpha_{1}}$ by $q_0$ and $q_{m,\alpha}$ by $q_f$ to obtain a state sequence of the same length in $M_{\alpha}$.
	Condition (i) is already established.
	$\varepsilon$-transitions from $q_0$ to $q_{0,\alpha_{1}}$ and from $q_{m,\alpha}$ to $q_f$ make sure that (ii) and (iii) still hold.
	Finally, since $q_{0,\alpha_{1}} ... q_{m,\alpha_{1}}$ satisfies (iv) and neither $q_0$ nor $q_f$ are marked, (iv) is established as well with the same sets $X_i$.
	
	Now, assume there is a state sequence with sets $X_0,...,X_m$ in $M_{\alpha}$ satisfying (i) to (iv).
	Since, by construction, $M_{\alpha_{1}}$ and $M_{\alpha_{2}}$ are unconnected subautomata of $M_{\alpha}$, the state sequence has to move from $q_0$ to $q_f$ either through a state sequence in $M_{\alpha_{1}}$ or in $M_{\alpha_{2}}$.
	We differentiate the two cases and assume the first one.
	The second case is analogous.
	We replace $q_0$ with $q_{0,\alpha_1}$ and $q_f$ with $q_{f,\alpha_1}$ in the full sequence.
	Since states in $M_{\alpha_1}$ are only reachable from $q_0$ via $q_{0,\alpha_1}$ and $q_f$ is only reachable from states in $M_{\alpha_1}$ via $q_{f,\alpha_1}$, conditions (i) to (iii) still hold for this state sequence in $M_{\alpha_1}$.
	Condition (iv) is established with the same sets $X_i$ by the property of our construction that initial and final states can never be marked.
	By induction hypothesis, we get that $(\Pi,i,k) \in R(\alpha_{1})$ which satisfies the requirements for $(\Pi,i,k) \in R(\alpha)$.
	
	\boldmath$\alpha = \alpha_{1} \cdot \alpha_{2}$\unboldmath: Assume that $(\Pi,i,k) \in R(\alpha)$.
	By the definition of the semantics of $\alpha$, there is $j$ with $i \leq j \leq k$ such that $(\Pi,i,j) \in R(\alpha_{1})$ and $(\Pi,j,k) \in R(\alpha_{2})$.
	Using the induction hypothesis twice, we get a state sequence $q_{1,0} ... q_{1,m_1}$ with $m_1 = \frac{j-i}{2}$ and sets $X_{1,0},...,X_{1,m_1}$ in $M_{\alpha_1}$ and a state sequence $q_{2,0} ... q_{2,m_2}$ with $m_2 = \frac{k-j}{2}$ and sets $X_{2,0},...,X_{2,m_2}$ in $M_{\alpha_2}$, both satisfying conditions (i) to (iv).
	Since $q_{f,1}$ is $\varepsilon$-reachable from $q_{1,m_1}$ by (ii) and a new $\varepsilon$-transition was introduced from $q_{f,1}$ to $q_{0,2}$, $q_{2,0}$ is $\varepsilon$-reachable from $q_{1,m_1}$ as well.
	Thus, $q_{1,m_1}$ inherits all $\varepsilon$- and $\tau$-reachable states from $q_{2,0}$ and we can concatenate the two state sequences, leaving out $q_{2,0}$.
	One can easily see that this is indeed a proper state sequence in $M_{\alpha}$ fulfilling (i) to (iii).
	Since the merged state sequences consist of $m_1 + 1$ resp. $m_2 + 1$ states and one state was removed, the new state sequence has $m + 1 = m_1 + 1 + m_2 + 1 - 1$ states for $m = m_1 + m_2 = \frac{j-i}{2} + \frac{k-j}{2} = \frac{k-i}{2}$.
	Condition (iv) is established with condition (iv) of the initial state sequences and by recognising that the removed state was not marked with a formula.
	Therefore one only has to merge the sets $X_{1,m_1}$ and $X_{2,0}$.
	
	For the reverse direction, assume that there is a state sequence with sets $X_0,...,X_m$ in $M_{\alpha}$ satisfying (i) to (iv).
	Since the final state of $M_{\alpha}$ is in $Q_2$ and the initial state of $M_{\alpha}$ is in $Q_1$, the state sequence has to transition from $Q_1$ to $Q_2$ at some point.
	Due to the form of $\rho$, this can only happen once by taking the $\varepsilon$-transition from $q_{f,1}$ to $q_{0,2}$.
	By inserting the state $q_{0,2}$ into the sequence where the $\varepsilon$-transition was taken, we obtain two state sequences $q_{1,0} ... q_{1,m_1}$ and $q_{2,0} ... q_{2,m_2}$ in the subautomata $M_{\alpha_{1}}$ and $M_{\alpha_{2}}$ respectively.
	Corresponding sets $X_{1,0},...,X_{1,m_1}$ and $X_{2,0},...,X_{2,m_2}$ are obtained by splitting the set $X_{m_1}$ in an appropriate way.
	One can easily see that these state sequences fulfill conditions (i) to (iii).
	Their length is $m_1 + 1$ resp. $m_2 + 1$ for some $m_1,m_2$ with $m_1 + m_2 = \frac{k-i}{2}$.
	Thus, there is a $j$ such that $m_1 = \frac{j-i}{2}$ and $m_2 = \frac{k-j}{2}$.
	As condition (iv) for the original state sequence implies condition (iv) for the new state sequences (with indices shifted by $j-i$ for the second path), this enables us to use the induction hypothesis twice to directly obtain the semantics definition of $(\Pi,i,k) \in R(\alpha)$.
	
	\boldmath$\alpha = \alpha_{1}^{*}$\unboldmath: Assume $(\Pi,i,k) \in R(\alpha)$.
	Using the semantics definition, we obtain indices $i = j_0 \leq ... \leq j_l = k$ with $(\Pi,j_p,j_{p+1}) \in R(\alpha_1)$ for $0 \leq p < l$.
	For $l = 0$, $i = k$ has to hold and the singleton state sequence $q_0$ with the empty set $X_0$ establishes our claim.
	If $l > 0$, one can use the induction hypothesis on each pair $j_p,j_{p+1}$ to obtain a state sequence in $M_{\alpha_1}$.
	For $l = 1$, the state sequence and sets in $M_{\alpha_{1}}$ trivially translates to a state sequence and sets in $M_{\alpha}$.
	For $l > 1$, we concatenate every two consecutive state sequences just as in the previous case $\alpha = \alpha_{1} \alpha_{2}$.
	This leaves us with the desired state sequence in $M_{\alpha}$ with the same arguments as given there.
	
	Now assume that there is a state sequence with sets $X_0,...,X_m$ in $M_{\alpha}$ fulfilling conditions (i) to (iv).
	This state sequence can take the $\varepsilon$-transition between $q_{f}$ and $q_{0}$ arbitrarily many times.
	If it does not do so, it is either the state sequence $q_0$ or a state sequence in $M_{\alpha_{1}}$.
	The singleton state sequence $q_0$ with $m = 0$ implies $i = k$ and a single index $j = i = k$ fulfills the semantics definition of $(\Pi,i,k) \in R(\alpha)$.
	For a state sequence in $M_{\alpha_1}$, the claim is directly established by the induction hypothesis for two indices $j_1 = i, j_2 = k$.
	Any other state sequence can be divided into sub-sequences by inserting states $q_{0,1}$ and splitting sets just like in the previous case $\alpha = \alpha_{1} \cdot \alpha_{2}$.
	Conditions (i) to (iii) are easily established for these state sequences.
	Condition (iv) is obtained by introducing appropriate indices $j_1,...,j_l$ as substitutions for $i$ and $k$ for each state sequence.
	Using the induction hypothesis on each of the state sequences gives us the semantics definition of $(\Pi,i,k) \in R(\alpha)$ with indices $j_1,...,j_l$.
	
	\boldmath$\alpha = \varphi?$\unboldmath: Assume that $(\Pi,i,k) \in R(\alpha)$.
	By the definition of semantics, we obtain $i = k$ and $\Pi[i, \infty] \models_{\mathcal{T}} \varphi$.
	This directly implies that the state sequence $q_0$ with the set $X_0 = \{\psi\}$ in $M_{\alpha}$ fulfills conditions (i) to (iv).
	
	For the other direction, assume that there is a state sequence with sets $X_0,...,X_m$ in $M_{\alpha}$ fulfilling (i) to (iv).
	Since there are only $\varepsilon$-transitions, the state sequence can only be $q_0$.
	Therefore, we know that $k = i$ (by $m = 0$) and that $\Pi[i,\infty] \models_{\mathcal{T}} \varphi$ (by (iv) and the fact that $q_2$ can only be reached by passing through $q_1$, thus $\psi \in X_0$), which is exactly the definition for $(\Pi,i,k) \in R(\alpha)$.
\end{proof}

\begin{proof}[Proof of \autoref{construction}]
	By structural induction over the structure of $\varphi$.
	In each case, let $\Pi$ be a path assignment with $\nu(\Pi) = (\mathpzc{s}_0, \tau_0) (\mathpzc{s}_1, \tau_1) ...$ where $\mathpzc{s}_j = (s_0^j,...,s_n^j)$ and $\tau_j = (\sigma_0^j,...,\sigma_n^j)$.
	
	\boldmath$\varphi = a_{\pi_k}$\unboldmath: 
	By construction, $\mathcal{L}(\mathcal{A}_{\varphi}) = \{(\mathpzc{s}_0', \tau_0') (\mathpzc{s}_1', \tau_1') ... \in (S^n \times \Sigma^n)^{\omega} | a \in L(\mathpzc{s}_0' |_{k})\}$.

	On the one hand, assume $\nu(\Pi) \in \mathcal{L}(\mathcal{A}_{\varphi})$.
	Then, by the above characterisation of $\mathcal{L}(\mathcal{A}_{\varphi})$, we have $a \in L(s_k^0)$ which by definition of $\nu$ translates to $a \in L(\Pi(\pi_k)(0))$.
	Therefore, $\Pi \models_{\mathcal{T}} \varphi$.

	On the other hand, assume $\Pi \models_{\mathcal{T}} \varphi$.
	By the definition of $\varphi$'s semantics, this implies $a \in L(\Pi(\pi_k)(0))$ which by definition of $\nu$ translates to $a \in L(s_k^0)$.
	Therefore, we have $\nu(\Pi) \in \mathcal{L}(\mathcal{A}_{\varphi})$ by the above characterisation of $\mathcal{L}(\mathcal{A}_{\varphi})$.
	
	\boldmath$\varphi = \lnot a_{\pi_k}$\unboldmath: 
	analogous to $a_{\pi_k}$.
	
	\boldmath$\varphi = \varphi_1 \land \varphi_2$\unboldmath:
	By construction, we have $\mathcal{L}(\mathcal{A}_{\varphi}) = \mathcal{L}(\mathcal{A}_{\varphi_1}) \cap \mathcal{L}(\mathcal{A}_{\varphi_2})$.
	
	On the one hand, assume that $\nu(\Pi) \in \mathcal{L}(\mathcal{A}_{\varphi})$.
	By the above characterisation of $\mathcal{L}(\mathcal{A}_{\varphi})$, we have that $\nu(\Pi) \in \mathcal{L}(\mathcal{A}_{\varphi_1})$ and $\nu(\Pi) \in \mathcal{L}(\mathcal{A}_{\varphi_2})$.
	Using the induction hypothesis twice, we get that $\mathcal{A}_{\varphi_{i}}$ is $\mathcal{T}$-equivalent to $\varphi_{i}$ for $i = 1,2$.
	Therefore, $\Pi \models_{\mathcal{T}} \varphi_1$ and $\Pi \models_{\mathcal{T}} \varphi_2$ which implies $\Pi \models_{\mathcal{T}}$ by $\varphi$'s semantics.
	
	On the other hand, assume that $\Pi \models_{\mathcal{T}} \varphi$.
	By the definition of $\varphi$'s semantics we have $\Pi \models_{\mathcal{T}} \varphi_1$ and $\Pi \models_{\mathcal{T}} \varphi_2$.
	Using the induction hypothesis twice we get that $\mathcal{A}_{\varphi_{i}}$ is $\mathcal{T}$-equivalent to $\varphi_{i}$ for $i = 1,2$.
	Therefore, we have that $\nu(\Pi) \in \mathcal{L}(\mathcal{A}_{\varphi_1})$ and $\nu(\Pi) \in \mathcal{L}(\mathcal{A}_{\varphi_2})$ which by the above characterisation of $\mathcal{L}(\mathcal{A}_{\varphi})$ implies $\nu(\Pi) \in \mathcal{L}(\mathcal{A}_{\varphi})$.
	
	\boldmath$\varphi = \varphi_1 \lor \varphi_2$\unboldmath:
	By construction, we see that $\mathcal{L}(\mathcal{A}_{\varphi}) = \mathcal{L}(\mathcal{A}_{\varphi_1}) \cup \mathcal{L}(\mathcal{A}_{\varphi_2})$.
	
	On the one hand, assume that $\nu(\Pi) \in \mathcal{L}(\mathcal{A}_{\varphi})$.
	By the above characterisation of $\mathcal{L}(\mathcal{A}_{\varphi})$, we have that $\nu(\Pi) \in \mathcal{L}(\mathcal{A}_{\varphi_1})$ or $\nu(\Pi) \in \mathcal{L}(\mathcal{A}_{\varphi_2})$.
	We discriminate the cases and assume the former.
	The latter case is analogous.
	Then, by induction hypothesis, since $\mathcal{A}_{\varphi_{1}}$ is $\mathcal{T}$-equivalent to $\varphi_{1}$, we get $\Pi \models_{\mathcal{T}} \varphi_1$ which implies $\Pi \models_{\mathcal{T}} \varphi$ by the definition of $\varphi$'s semantics.
	
	On the other hand, assume that $\Pi \models_{\mathcal{T}} \varphi$.
	By the definition of $\varphi$'s semantics, we get $\Pi \models_{\mathcal{T}} \varphi_1$ or $\Pi \models_{\mathcal{T}} \varphi_2$.
	We discriminate the cases and assume the former.
	The other case is analogous.
	Then, by induction hypothesis, since $\mathcal{A}_{\varphi_{1}}$ is $\mathcal{T}$-equivalent to $\varphi_{1}$, we get $\nu(\Pi) \in \mathcal{L}(\mathcal{A}_{\varphi_1})$ which shows the claim by the above characterisation of $\mathcal{L}(\mathcal{A}_{\varphi})$.
	
	\boldmath$\varphi = \langle \alpha \rangle \varphi_1$\unboldmath:
	By construction, $\mathcal{L}(\mathcal{A}_{\varphi})$ can be characterised as follows:
	$w = (\mathpzc{s}_0', \tau_0') (\mathpzc{s}_1', \tau_1') ... \in \mathcal{L}(\mathcal{A}_{\varphi})$ iff there is $k$ such that: \\ (i) there is a path $q_0 q_1 ... q_k$ from the initial state $q_{0,\alpha}$ to a $q_k$ with $q_k \xRightarrow{\varepsilon}_{X} q_{f,\alpha}$ for some $X$ obtained by reading $\tau_0' \tau_1' ... \tau_{k-1}'$\\
	(ii) for every state $q_j$ with $j < k$ on the path from (i), $\psi_l \in X$ with $q_j \xRightarrow{\tau}_X q_{j+1}$ implies $(\mathpzc{s}_j', \tau_j') (\mathpzc{s}_{j+1}', \tau_{j+1}') ... \in \mathcal{L}(\mathcal{A}_{\psi_l})$ \\
	(iii) $\psi_l \in X$ with $q_k \xRightarrow{\varepsilon}_X q_{f,\alpha}$ implies $(\mathpzc{s}_k', \tau_k') (\mathpzc{s}_{k+1}', \tau_{k+1}') ... \in \mathcal{L}(\mathcal{A}_{\psi_l})$ \\
	(iv) $(\mathpzc{s}_k', \tau_k') (\mathpzc{s}_{k+1}', \tau_{k+1}') ... \in \mathcal{L}(\mathcal{A}_{\varphi_1})$
	
	On the one hand, assume that $\nu(\Pi) \in \mathcal{L}(\mathcal{A}_{\varphi})$.
	By using the induction hypothesis on all subautomata of $\mathcal{A}_{\varphi}$, we get that they all are $\mathcal{T}$-equivalent to their respective formula.
	With the above characterisation and the $\mathcal{T}$-equivalences, we get that there is a $k$ such that: \\
	(i) there is a path $q_0 q_1 ... q_k$ from $q_{0,\alpha}$ to a $q_k$ with $q_k \xRightarrow{\varepsilon}_X q_{f,\alpha}$ for some $X$ obtained by reading $\tau_0 ... \tau_{k-1}$ and \\
	(ii) for every state $q_j$ with $j < k$ on the path from (i), $\psi_l \in X$ with $q_j \xRightarrow{\tau}_X q_{j+1}$ implies $\Pi[j,\infty] \models_{\mathcal{T}} \psi_l$ \\
	(iii) $\psi_l \in X$ with $q_k \xRightarrow{\varepsilon}_X q_{f,\alpha}$ implies $\Pi[k,\infty] \models_{\mathcal{T}} \psi_l$ \\
	(iv) $\Pi[k,\infty] \models_{\mathcal{T}} \varphi_{1}$ \\
	Since the path in (i) is a state sequence in $M_{\alpha}$ and has appropriate length, we can use \autoref{constructionlemma} to obtain that $(\Pi,0,k) \in R(\alpha)$.
	This, together with (iv), implies $\Pi \models_{\mathcal{T}} \varphi$.
	
	On the other hand, assume that $\Pi \models_{\mathcal{T}} \varphi$.
	By using the induction hypothesis on all subformulas of $\varphi$, we get that they all are $\mathcal{T}$-equivalent to their respective automaton.
	The definition of $\varphi$'s semantics implies that there is a $k$ such that $(\Pi,0,k) \in R(\alpha)$ and $\Pi[k,\infty] \models_{\mathcal{T}} \varphi_{1}$.
	Using \autoref{constructionlemma}, we obtain a state sequence $q_0 ... q_m$ with $m = \frac{k}{2}$ and sets $X_0,...,X_m$ in $M_{\alpha}$ such that \\
	(i) $q_0$ is the initial state of $M_{\alpha}$, \\
	(ii) $q_m \xRightarrow{\varepsilon}_{X_m} q_{f,\alpha}$ for the final state $q_{f,\alpha}$ of $M_{\alpha}$, \\
	(iii) $q_l \xRightarrow{\tau_{2l+1}}_{X_l} q_{l+1}$ for all $l < m$, \\
	(iv) $\psi_i \in X_l$ implies $\Pi[2l,\infty] \models_{\mathcal{T}} \psi_i$ for all $l \leq m$. \\
	Using the $\mathcal{T}$-equivalence of $\varphi$'s subformulas with their respective automata, we can translate these conditions such that there is $k$ with: \\
	(i) there is a path $q_0 q_1 ... q_k$ from the initial state $q_{0,\alpha}$ to a $q_k$ with $q_k \xRightarrow{\varepsilon}_X q_{f,\alpha}$ for some $X$ obtained by reading $\tau_0 ... \tau_{k-1}$ \\
	(ii) for every state $q_j$ with $j < k$ on the path, $\psi_l \in X $ with $q_j \xRightarrow{\tau}_X q_{j+1}$ implies $\nu(\Pi[j,\infty]) \in \mathcal{L}(\mathcal{A}_{\psi_l})$ \\
	(iii) $\psi_l \in X$ with $q_k \xRightarrow{\varepsilon}_X q_{f,\alpha}$ implies $\nu(\Pi[k,\infty]) \in \mathcal{L}(\mathcal{A}_{\psi_l})$ \\
	(iv) $\nu(\Pi[k,\infty]) \in \mathcal{L}(\mathcal{A}_{\varphi_{1}})$ \\
	By the above definition of $\mathcal{L}(\mathcal{A}_{\varphi})$, this implies that $\nu(\Pi) \in \mathcal{L}(\mathcal{A}_{\varphi})$.
	
	\boldmath$\varphi = [ \alpha ] \varphi_1$\unboldmath:
	By construction, $\mathcal{L}(\mathcal{A}_{\varphi})$ satisfies $w = (\mathpzc{s}_0', \tau_0') (\mathpzc{s}_1', \tau_1') ... \in \mathcal{L}(\mathcal{A}_{\varphi})$ if and only if \\
	(*) for all paths $q_0 q_1 ... q_k$ with (i) $q_0 = q_{0,\alpha}$ and $q_k \xRightarrow{\varepsilon}_{X_k} q_{f,\alpha}$, (ii) $q_j \xRightarrow{\tau_j'}_{X_j} q_{j+1}$ for all $j < k$, and (iii) $\psi_l \not\in X_j$ or $(\mathpzc{s}_j', \tau_j') (\mathpzc{s}_{j+1}', \tau_{j+1}') ... \not\in \mathcal{L}(\mathcal{A}_{\lnot\psi_l})$ for all $j \leq k$ we have $(\mathpzc{s}_k', \tau_k') (\mathpzc{s}_{k+1}', \tau_{k+1}') ... \in \mathcal{L}(\mathcal{A}_{\varphi_1})$

	On the one hand, assume that $\nu(\Pi) \in \mathcal{L}(\mathcal{A}_{\varphi})$.
	By using the induction hypothesis on all subautomata of $\mathcal{A}_{\varphi}$, we conclude that they all are $\mathcal{T}$-equivalent to their respective formula.
	With the above characterisation of $\mathcal{L}(\mathcal{A}_{\varphi})$ and making use of $\mathcal{T}$-equivalences, we obtain that: \\
	(*) for all paths $q_0 q_1 ... q_k$ with (i) $q_0 = q_{0,\alpha}$ and $q_k \xRightarrow{\varepsilon}_{X_k} q_{f,\alpha}$ (ii) $q_j \xRightarrow{\tau_j}_{X_j} q_{j+1}$ for all $j < k$ (iii) $\psi_l \in X_j$ implies $\Pi[j,\infty] \not\models_{\mathcal{T}} \lnot\psi_l$ for all $j \leq k$ we have $\Pi[k,\infty] \models_{\mathcal{T}} \varphi_1$ \\
	Eliminating the double negation in (iii) of (*), we can use \autoref{constructionlemma} to obtain that for all $k$, $(\Pi,0,k) \in R(\alpha)$ implies $\Pi[k,\infty] \models_{\mathcal{T}} \varphi_{1}$, which establishes the claim.
	
	On the other hand, assume that $\Pi \models_{\mathcal{T}} \varphi$.
	By using the induction hypothesis on all subformulas of $\varphi$, we get that they all are $\mathcal{T}$-equivalent to their respective automaton.
	The definition of $\varphi$'s semantics implies that for all $k$, $(\Pi,0,k) \in R(\alpha)$ implies $\Pi[k,\infty] \models_{\mathcal{T}} \varphi$.
	Using \autoref{constructionlemma}, we can reformulate the premise of the implication to be the existence of a state sequence $q_0 ... q_k$ in $M_{\alpha}$ such that $q_0$ is the initial state, $q_j \xRightarrow{\tau_j}_{X_j} q_{j+1}$ for $j < k$, $q_k \xRightarrow{\varepsilon}_{X_k} q_{f,\alpha}$ for the final state $q_{f,\alpha}$ of $M_{\alpha}$ and for all $j \leq k$, $\psi_l \in X_j$ implies $\Pi[j,\infty] \models_{\mathcal{T}} \psi_l$.
	Making use of $\mathcal{T}$-equivalences and introducing a double negation, we get that: \\
	(*) for all paths $q_0 q_1 ... q_k$ with (i) $q_0 = q_{0,\alpha}$ and $q_k \xRightarrow{\varepsilon}_{X_k} q_{f,\alpha}$ (ii) $q_j \xRightarrow{\tau_j'}_{X_j} q_{j+1}$ for all $j < k$ (iii) $\psi_l \not\in X_j$ or $\nu(\Pi[j,\infty]) ... \not\in \mathcal{L}(\mathcal{A}_{\lnot\psi_l})$ for all $j \leq k$ we have $\nu(\Pi[k,\infty]) ... \in \mathcal{L}(\mathcal{A}_{\varphi_1})$\\
	This, by the above characterisation of $\mathcal{L}(\mathcal{A}_{\varphi})$ implies $\nu(\Pi) \in \mathcal{L}(\mathcal{A}_{\varphi})$.
	
	\boldmath$\varphi = \Delta \alpha$\unboldmath:
	By construction, $\mathcal{L}(\mathcal{A}_{\varphi})$ satisfies
	$w = (\mathpzc{s}_0', \tau_0') (\mathpzc{s}_1', \tau_1') ... \in \mathcal{L}(\mathcal{A}_{\varphi})$ if and only if \\
	(i) there are infinitely many indices $0 = k_1 \leq k_2 \leq ...$ such that for all $i$, $q_{k_i} = q_0$ and there is a path $q_{k_i} ... q_{k_{i+1}}$ induced by a state sequence $q_0 q_1 ... q_m$ with $q_0 = q_{0,\alpha}$, $q_j \xRightarrow{\tau_{k_i + j}'}_{X_j} q_{j+1}$ for $j < m$ and $q_m \xRightarrow{\varepsilon}_{X_m} q_{f,\alpha}$ and \\
	(ii) for each $0 \leq j \leq m$ in such a state sequence $\psi_{l} \in X_j$ implies \\ $(\mathpzc{s}_{k_i + j}', \tau_{k_i + j}') (\mathpzc{s}_{k_i + j+1}', \tau_{k_i + j+1}') ... \in \mathcal{L}(\mathcal{A}_{\psi_l})$
	
	On the one hand, assume that $\nu(\Pi) \in \mathcal{L}(\mathcal{A}_{\varphi})$.
	By using the induction hypothesis on all subautomata of $\mathcal{A}_{\varphi}$, we see that they all are $\mathcal{T}$-equivalent to their respective formula.
	The above characterisation of $\mathcal{L}(\mathcal{A}_{\varphi})$ implies that: \\
	(i) there are infinitely many indices $0 = k_1 \leq k_2 \leq ...$ such that for all $i$, $q_{k_i} = q_0$ and there is a path $q_{k_i} ... q_{k_{i+1}}$ induced by a state sequence $q_0 q_1 ... q_m$ with $q_0 = q_{0,\alpha}$, $q_j \xRightarrow{\tau_{k_i + j}'}_{X_j} q_{j+1}$ for $j < m$ and $q_m \xRightarrow{\varepsilon}_{X_m} q_{f,\alpha}$ and \\
	(ii) for each $0 \leq j \leq m$ in such a state sequence $\psi_{l} \in X_j$ implies $\nu(\Pi[k_i + j, \infty]) ... \in \mathcal{L}(\mathcal{A}_{\psi_l})$ \\
	Using $\mathcal{T}$-equivalences, we can translate (ii) to a form where \autoref{constructionlemma} is applicable to the paths in (i) and (ii) to directly obtain the semantics definition of $\Pi \models_{\mathcal{T}} \varphi$.
	
	On the other hand, assume that $\Pi \models_{\mathcal{T}} \varphi$.
	Using the induction hypothesis on all subformulas of $\varphi$, we get that they all are $\mathcal{T}$-equivalent to their respective automaton.
	The definition of $\varphi$'s semantics implies that there are infinitely many indices $0 = k_1 \leq k_2 \leq ...$ such that for all $i \geq 1$, $(\Pi,k_i,k_{i+1}) \in R(\alpha)$ holds.
	For each of these triples, we can use \autoref{constructionlemma} to receive a state sequence $q_{0} ... q_{m}$ with sets $X_0,...,X_m$ in $M_{\alpha}$ such that $q_0 = q_{o,\alpha}$, $q_{j} \xRightarrow{\tau_{k_i + j}}_{X_j} q_{j+1}$ for $j < m$ and $q_k \xRightarrow{\varepsilon}_{X_m} q_{f,\alpha}$ for the final state $q_{f,\alpha}$ of $M_{\alpha}$ where $\psi_l \in X_j$ implies $\Pi[k_i + j,\infty] \models_{\mathcal{T}} \psi_l$.
	By replacing the first and last state of each sequence with $q_0$ and using $\mathcal{T}$-equivalences, we directly obtain the above characterisation for $\nu(\Pi) \in \mathcal{L}(\mathcal{A}_{\varphi})$.
	
	\boldmath$\varphi = \lnot \Delta \alpha$\unboldmath:
	This case is directly implied by the induction hypothesis and \autoref{complementation}.
	
	\boldmath$\varphi = \exists \pi. \varphi_1$\unboldmath:
	Since $\mathcal{A}_{\varphi}$ is a Büchi automaton, its language satisfies $w \in \mathcal{L}(\mathcal{A}_{\varphi})$ iff there is a path $q_0 (q_1,s_1,\sigma_1)(q_2,s_2,\sigma_2)...$ in $\mathcal{A}_{\varphi}$ witnessing the acceptance of $w$.
	
	On the one hand, assume that $\nu(\Pi) \in \mathcal{L}(\mathcal{A}_{\varphi})$.
	By using the induction hypothesis on $\mathcal{A}_{\varphi_{1}}$, we get that it is $\mathcal{T}$-equivalent to $\varphi_{1}$.
	Let $q_0 (q_1,s_1,\sigma_1)(q_2,s_2,\sigma_2)...$ be the path in $\mathcal{A}_{\varphi}$ witnessing the acceptance of $\nu(\Pi)$ as in the above characterisation.
	Note that by construction of $\mathcal{A}_{\varphi}$, $p = s_0 \sigma_0 s_1 \sigma_1...$ for appropriate $s_0, \sigma_0$ is also a path of $\mathcal{T}$, starting at $\Pi(\epsilon) = \mathpzc{s}|_n$, or more formally, $p \in Paths(\mathcal{T},\Pi(\epsilon)(0))$.
	Let $\Pi'$ be the path assignment $\Pi[\pi_{n+1} \to p, \epsilon \to p]$.
	Then, the construction also makes sure that $\nu(\Pi') \in \mathcal{L}(\mathcal{A}_{\varphi_1})$, which implies that $\Pi' \models_{\mathcal{T}} \varphi_{1}$ by the use of a $\mathcal{T}$-equivalence.
	Therefore, we directly obtain $\Pi \models_{\mathcal{T}} \varphi$.
	
	On the other hand, assume that $\Pi \models_{\mathcal{T}} \varphi$.
	By the induction hypothesis, $\varphi_{1}$ is $\mathcal{T}$-equivalent to $\mathcal{A}_{\varphi_{1}}$.
	The definition of $\varphi$'s semantics implies that there is a path $p \in Paths(\mathcal{T},\Pi(\epsilon)(0))$ such that $\Pi' := \Pi[\pi \to p, \epsilon \to p] \models_{\mathcal{T}} \varphi_{1}$.
	Using $\mathcal{T}$-equivalence, we obtain that $\nu(\Pi') \in \mathcal{L}(\mathcal{A}_{\varphi_{1}})$.
	The fact that we use a dealternised version of $\mathcal{A}_{\varphi_{1}}$ equips us with a path $q_0 q_1 ...$ in $\mathcal{A}_{\varphi_{1}}$ witnessing the acceptance of $\nu(\Pi')$.
	Since $p = s_0 \sigma_0 s_1 \sigma_1 ...$ is a path in $\mathcal{T}$, starting at $\Pi(\epsilon) = \mathpzc{s}|_n$, we can construct a path $q_0 (q_1,s_1,\sigma_1)(q_2,s_2,\sigma_2)...$ in $\mathcal{A}_{\varphi}$ witnessing the acceptance of $\nu(\Pi)$ in $\mathcal{A}_{\varphi}$ which by the above characterisation implies $\nu(\Pi) \in \mathcal{L}(\mathcal{A}_{\varphi})$.
	
	\boldmath$\varphi = \lnot \exists \pi. \varphi_1$\unboldmath:
	This case is directly implied by the induction hypothesis, \autoref{dealternation} and \autoref{complementation}.
\end{proof}

\begin{proof}[Proof of \autoref{complexitylemma}]
	By induction on the criticality $k$.
	
	\textbf{Base case: }
	$\varphi$ has criticality $0$. 
	We show inductively that $\mathcal{A}_{\varphi}$ has size $2^{\mathcal{O}(p(n)+p'(log(m)))}$ in the size $n$ of $\varphi$ and the size $m$ of $\mathcal{T}$ for some polynomials $p,p'$.
	First, notice that $|M_{\alpha}|$ is linear in the size of $\alpha$.
	This can easily be shown by a structural induction, where each construction adds a constant number of states to its subautomata only.

	Basic constructions $\mathcal{A}_{a_{\pi}}$ and $\mathcal{A}_{\lnot a_{\pi}}$ have constant size.
	For boolean connectives as well as $\langle \alpha \rangle \varphi, [\alpha]\varphi$ and $\Delta \alpha$, the construction of $\mathcal{A}_{\varphi}$ again just adds a constant number of states to the automata for the subformulas.
	The construction for $\lnot \Delta \alpha$ introduces a quadratic increase by \autoref{complementation}.
	Throughout the whole construction, this results in an exponential increase in the nesting depth of negated $\Delta \alpha$ constructs at most.
	More precisely, when bounding this nesting depth to a constant $d$, the polynomial $p$ on top of the exponential tower has degree at most $2d$.
	Existential quantifiers increase the size of the automaton exponentially in the size of the formula $\varphi$ and add a factor polynomial in the size of the structure $\mathcal{T}$.
	Using logarithmic laws, this translates to the form above.
	Note that the factor depending on $|\mathcal{T}|$ is added after the exponential blowup from the dealternation construction $\textit{MH}(\mathcal{A})$.
	
	It remains to show that the dealternation construction $\textit{MH}(\mathcal{A})$ introduces an exponential blowup of the structure's size at most once for formulas of criticality $0$, regardless of how many quantifiers the formula contains.
	In order to do this, we have to look closer at the proof of \autoref{dealternation}.
	We show that once the dealternation construction $\textit{MH}(\mathcal{A})$ is done for the innermost quantifiers, at most one state of each dealternised automaton has to be tracked in further dealternations.
	Thus, the exponential size of the subautomaton is added as a factor rather than in an exponent when determining the size of the state space of the full automaton.
	
	Our claim can be shown by an induction over the number of constructions on top of the dealternised automaton.
	In the base case, no construction is done on top of a dealternised automaton $\mathcal{A}_{\varphi}$.
	Since $\mathcal{A}_{\varphi}$ is a Büchi automaton, a run of the resulting automaton is a path rather than a tree on every word.
	Thus, only one state has to be tracked.
	In the inductive step, we discriminate cases for the outermost construction.
	By the induction hypothesis, at most one state of each dealternised automaton has to be tracked in each subautomaton.
	For the construction $\varphi_{1} \lor \varphi_{2}$, a run tree moving into $\mathcal{A}_{\varphi_1}$ or $\mathcal{A}_{\varphi_2}$ never returns to the initial state.
	Thus, since $\mathcal{A}_{\varphi_{1}}$ and $\mathcal{A}_{\varphi_2}$ are unconnected, we track states of only one of the automata.
	Then, the claim is implied by the induction hypothesis.
	The construction for $\mathcal{A}_{\varphi_{1} \land \varphi_2}$ works similarly, with the difference that we have to track states of both subautomata when a run moves into this automaton over the initial state.
	This does,  however, not lead to an increase in states of each dealternised automaton that have to be tracked, since these subautomata are unconnected.
	The next construction we have to consider is $\mathcal{A}_{\langle \alpha \rangle \varphi}$.
	Here, a run has the property that at most one state of $M_{\alpha}$ has to be tracked, which can be replaced by states of $\mathcal{A}_{\varphi}$ at some point.
	Additionally, arbitrarily many states of $\mathcal{A}_{\psi}$ for subformulas $\psi$ of $\alpha$ can be tracked.
	However, this is no contradiction to our claim, since $\alpha$ may not contain any quantified subformulas and thus $\mathcal{A}_{\psi}$ may not contain dealternised automata in uncritical formulas.
	Thus, since the induction hypothesis states that at most one state of each dealternised automaton of $\mathcal{A}_{\varphi}$ has to be tracked at any point, this shows our claim.
	As another case, we consider the construction for $\exists \pi . \varphi$.
	Here, since only a disjunctive transition is added on top of $\mathcal{A}_{\varphi}$, we can argue similar as in the case for $\varphi_1 \lor \varphi_2$ with the difference that we consider only a single subautomaton.
	Due to the exemption rule in the definition of criticality, there is an additional construction to be considered: $[\alpha] \varphi$, where $\alpha$ is deterministic and the outermost quantifier inside $\alpha$ is negated.\footnote{There are additional forms of $\alpha$, where explosion through a  negated quantifier inside the modality $[\alpha]$ can be avoided.
	This includes all forms where in any run in $M_{\alpha}$, a test for $\lnot \psi$ occurs only in a situation where all states occurring at the same level of the run can transition into $\mathcal{A}_{\lnot \psi}$.
	Then, when a transition into $\mathcal{A}_{\lnot \psi}$ can be taken in one of the states, it can be taken in all of the states.
	Since they are on the same level, the same continuation in $\mathcal{A}_{\lnot \psi}$ can be used for all these branches, such that only a single state of each dealternised subautomaton of $\mathcal{A}_{\lnot \psi}$ needs to be tracked.}
	Since tests $\psi$ in $\alpha$ are handled by disjunctively transitioning into the automaton for $\lnot \psi$ in the $[\alpha] \varphi$ construction, this cancels out the negation of the quantifier.
	Therefore, no critical negation construction has to be performed on a dealternised subautomaton during the construction of $\mathcal{A}_{\lnot \psi}$.
	Then, since due to the fact that $M_{\alpha}$ is deterministic, conjunctions of transitions in $M_{\alpha}$ behave the same as disjunctions and we can argue just as in the case for $\mathcal{A}_{\langle \alpha \rangle \varphi}$.
	Finally, observe that we do not have to consider constructions for $\Delta \alpha$, $\lnot \Delta \alpha$, or general $[\alpha] \varphi$, since the resulting formula is not uncritical when any of these contain a quantified subformula.

	\textbf{Inductive step: }
	$\varphi$ has criticality $k+1$.
	On the path in $\varphi$'s syntax tree inducing the criticality, we inspect the outermost critical quantifier.
	Its subformulas $\varphi_i$ have criticality at most $k$.
	Using the induction hypothesis on all subformulas, we obtain automata of size at most $\mathcal{O}(g(k+1,|\varphi_i|+log(|\mathcal{T}|)))$.
	The next dealternation will result in an automaton of size $2^{\mathcal{O}(|\mathcal{A}_{\varphi_i}|)}$ (by \autoref{dealternation}) which can be bounded by $2^{\mathcal{O}(g(k+1,|\varphi_i|+log(|\mathcal{T}|)))} = \mathcal{O}(g(k+2,|\varphi|+log(|\mathcal{T}|)))$.
	Since we inspected the outermost critical quantifier on the path inducing the criticality of the formula, any of the subsequent constructions will not cause a further exponential blowup of the automaton's size, as argued in the base case.
\end{proof}

\begin{proof}[Proof of \autoref{hardness}]	
	We reduce the model checking problem of alternation depth $k$ HyperCTL*-formulas, which by \cite{Clarkson2014} is hard for NSPACE$(g(k,|\varphi|))$ and for NSPACE$(g(k-1,|\mathcal{T}|))$, to the criticality $k$ model checking problem in our logic.
	For this, we translate a given Kripke structure $K$ and a HyperCTL* formula $\varphi$ with alternation depth $k$ to a KTS $tr(K)$ and a HyperPDL-$\Delta$ formula $tr(\varphi)$ of criticality $k$ such that $K \models \varphi$ iff $tr(K) \models tr(\varphi)$.
	
	
	The translation of the Kripke structure $K = (S, s_0, \delta, L)$ is the structure itself, seen as a KTS:
	We keep $S,s_0, L$ and define $\delta_{\sigma} := \delta$ for $\Sigma = \{\sigma\}$.
	If one wants to define the KTS over a larger set $\Sigma$, this would be possible by choosing the same $\delta_{\sigma}$ for all $\sigma \in \Sigma$.
	This would however unnecessarily complicate this proof, since the set of transitions that can be taken is not affected by the choice of atomic programs $\sigma$ in any state $s \in S$.
	
	
	For formulas $\varphi$, only the cases $\bigcirc \varphi$, $\varphi_1 \mathcal{U} \varphi_2$ and $\varphi_1 \mathcal{R} \varphi_2$ are non-trivial.
	We translate them in the following way: $tr(\bigcirc \varphi) := \langle \bullet \rangle tr(\varphi)$ or $tr(\bigcirc \varphi) := [ \bullet ] tr(\varphi)$, such that $\langle \bullet \rangle tr(\varphi)$ is used in the negation normal form (this way a translation of $\bigcirc \varphi$ never induces criticality on its own).
	Furthermore, we translate $tr(\varphi_1 \mathcal{U} \varphi_2) := \langle (tr(\varphi_1) ? \bullet)^{*} \rangle tr(\varphi_2)$ and $tr(\varphi_1 \mathcal{R} \varphi_2) := [(tr(\lnot \varphi_{1}) ? \bullet)^{*}] tr(\varphi_2)$.
	Since existential and universal quantifiers are not changed, $tr(\bigcirc \varphi)$ does not contain a $\psi ?$ construct, and $\mathcal{U}$ (resp. $\mathcal{R}$) modalities are translated to a single use of a $\langle \alpha \rangle \varphi$ (resp. $[\alpha]$) construct with use of $\psi ?$ where $\varphi_i$ and $tr(\varphi_i)$ have the same role in the definition of alternation depth and criticality, $\varphi$ and $tr(\varphi)$ correspond in alternation depth resp. criticality.
	Note, that in the case of $\mathcal{R}$ this only holds due to the exemption of negated outermost quantifiers in tests occuring in modalities $[\alpha]$ for deterministic $\alpha$.
	
	
	We now show that $K \models \varphi$ iff $tr(K) \models tr(\varphi)$ indeed holds.
	First, notice that a path assignment $\Pi$ on $K$ with $\Pi(\pi_i) = s_0^i s_1^i ...$ can be directly translated to a path assignment $tr(\Pi)$ on $tr(K)$ with $tr(\Pi)(\pi_i) = s_0^i \sigma s_1^i \sigma ...$.
	Then, the assumption immediately follows from the statement $\Pi[i,\infty] \models_{K} \varphi$ iff $tr(\Pi)[2i, \infty] \models_{tr(K)} tr(\varphi)$ which can be shown by structural induction on $\varphi$.
	In all cases where the translation $tr$ is trivial, the proof is as well.
	Let us consider the non-trivial cases:
	
	$\bigcirc \varphi$: Note that for $\alpha = \bullet = (\cdot,...,\cdot)$, $(\Pi,0,2) \in R(\alpha)$ trivially holds for any path assignment $\Pi$.
	Thus, checking $tr(\Pi) \models_{tr(K)} tr(\bigcirc \varphi)$ reduces to checking whether $tr(\Pi)[2,\infty] \models_{tr(K)} tr(\varphi)$ holds.
	Then, the assumption immediately follows from the induction hypothesis.
	
	$\varphi_1 \mathcal{U} \varphi_2$: Using our semantics definition and our above argument about wildcard programs in $\alpha$, the model checking problem for $\langle (tr(\varphi_1) ? \bullet)^{*} \rangle tr(\varphi_2)$ translates to 
	\begin{align*}
		&\exists i \geq 0. tr(\Pi)[i,\infty] \models_{tr(K)} tr(\varphi_{2}) \land \\
		&\exists n \geq 0, 0 = k_1 \leq k_2 \leq ... \leq k_n = i. \\
		&\forall 1 \leq l < n . k_{i+1} = k_i + 2 \land tr(\Pi)[k_l,\infty] \models_{tr(K)} tr(\varphi_1)
	\end{align*}
	In this condition, notice that both $i$ and all $k_l$ are even.
	Together with the structure of $tr(\Pi)$, this enables us to use the induction hypothesis on $tr(\Pi)[i,\infty] \models_{tr(K)} tr(\varphi_{2})$ and $tr(\Pi)[k_l,\infty] \models_{tr(K)} tr(\varphi_1)$ to show that the above translation is equivalent to the semantics of $\mathcal{U}$ on $\Pi$, which shows the assumption.
	
	$\varphi_1 \mathcal{R} \varphi_2$: similar to case $\varphi_1 \mathcal{U} \varphi_2$
\end{proof}

\section{Missing proofs from \autoref{section:expressivity}}

\begin{proof}[Proof of \autoref{qptl}, encoding into \text{MSO[E]}]
We construct an inductive translation, using the induction hypothesis that (i) for every program $\alpha$, there is an MSO[E] formula $\psi_{\alpha}(\Pi, v_i, v_j)$ equivalent to $(\Pi, i, j) \in R(\alpha)$ when $i$ and $j$ are the indices of $v_i$ and $v_j$ on $\pi_1$ in $\Pi$ and (ii) for every HyperPDL-$\Delta$ formula $\varphi$, there is an equivalent MSO[E] formula $tr(\varphi$).
Atomic propositions, logical conjunctions and disjunctions and negations are trivial and quantified paths can be translated exactly as in the translation of HyperQCTL$^*$ to MSO[E] in \cite{Coenen2019}.
We therefore only consider regular expressions.
These can be translated analogous to Büchi's classical proof of the equivalence of MSO and regular expressions on finite words with the only difference being that multiple paths (as sets) are considered simultaneously and synchronised with the $E$ predicate.
 We describe the case of $\Delta \alpha$ only and for all other cases, similar modifications of the Büchi approach can be used.
$\Delta \alpha$ is translated as 
\begin{align*}
\exists X_{\pi_1} \dots \exists X_{\pi_n}. &\bigwedge_{1 \leq i \leq n} (X_{\pi_i}  \subseteq \pi_i \land E(min(X_{\pi_i}), min(X_{\pi_1})) \\
&\land \forall k \in X_{\pi_1}. \forall l \in X_{\pi_i} . E(k, l) \rightarrow E(next(X_{\pi_1}, k),next(X_{\pi_i}, l))) \\
&\land \forall v_1 \in X_{\pi_1} \dots \forall v_n \in X_{\pi_n}.  \\
&\qquad(\bigwedge_{1 \leq i \leq n} E(v_1, v_i)) \rightarrow \psi_{\alpha}(\Pi, v_1, next(X_{\pi_1}, v_1)).
\end{align*}
Here, $min(X)$ (denoting the minimal element of $X$) and $next(X, v)$ (denoting the next element in $X$ after $v$) are used in a functional notation, but can obviously be converted to a purely predicative notation.
Intuitively, the sets $X_{\pi_1},...,X_{\pi_n}$ are the subsets of nodes from $\pi_1,...,\pi_n$, where indices $k_1,k_2,...$ from the definition of semantics are matched.
\end{proof}

\section{Alternative Construction for the Transition Function of $\mathcal{A}_{\varphi}$}\label{appendix:alphaalt}


In the main body of the paper, we have argued that the transition function in $\mathcal{A}_{\varphi}$ for $\varphi$ containing $\alpha$ can be exponential in the size of $\alpha$.
Consider the automaton in \autoref{fig:exponentialtransition} where states annotated with $\psi_{i,j}$ are marked with the corresponding formula and each unannotated edge stands for an $\varepsilon$-transition.
Such an automaton can occur when $\alpha$ has the form $(\psi_{0,1}? + \psi_{0,2}?)(\psi_{1,1}? + \psi_{1,2}?)...(\psi_{n,1}? + \psi_{n,2}?)\tau$ and should serve as an illustrative example.
We consider the case where this $\alpha$ is used in a $\langle . \rangle$ formula.
For ease of presentation, we assume that $\psi_{i,j}$ can be tested by moving into a state $p_{i,j}$.
It is possible to reach $q_f$ from $q_0$ with an exponential number of $\varepsilon$-paths, each with a different combination of markings.
Therefore we have $q_0 \xRightarrow{\tau}_X q$ for exponentially many $X$, transferring into the size of the transition function when constructing $\rho(q_0,\tau) \equiv \bigvee \{q \land \bigwedge_{\psi_{i,j} \in X} p_{i,j} \mid q_0 \xRightarrow{\tau}_X q\}$.

\begin{figure}
	\centering
	\begin{tikzpicture}
		\node[draw,circle] at (-4,0) (q0){$q_0$};
		\node[draw,circle] at (-3,1) (q01){$\psi_{0,1}$};
		\node[draw,circle] at (-3,-1) (q02){$\psi_{0,2}$};
		\node[draw,circle] at (-2,0) (q1){$q_1$};
		\node[draw,circle] at (-1,1) (q11){$\psi_{1,1}$};
		\node[draw,circle] at (-1,-1) (q12){$\psi_{1,2}$};
		\node[draw,circle] at (0,0) (q2){$q_2$};
		
		\node[draw,circle] at (2,0) (qn){$q_n$};
		\node[draw,circle] at (3,1) (qn1){$\psi_{n,1}$};
		\node[draw,circle] at (3,-1) (qn2){$\psi_{n,2}$};
		\node[draw,circle] at (4,0) (qf){$q_f$};
		\node[draw,double,circle] at (5,0) (q){$q$};
		
		\node[] at (1,1) {...};
		\node[] at (1,0) {...};
		\node[] at (1,-1) {...};
		\node[] at (4.5,0.3) {$\tau$};
		
		\path[->] (-5,0) edge (q0);
		\path[->] (q0) edge (q01);
		\path[->] (q0) edge (q02);
		\path[->] (q01) edge (q1);
		\path[->] (q02) edge (q1);
		\path[->] (q1) edge (q11);
		\path[->] (q1) edge (q12);
		\path[->] (q11) edge (q2);
		\path[->] (q12) edge (q2);
		\path[->] (q2) edge (0.5,0.5);
		\path[->] (q2) edge (0.5,-0.5);
		\path[->] (1.5,0.5) edge (qn);
		\path[->] (1.5,-0.5) edge (qn);
		\path[->] (qn) edge (qn1);
		\path[->] (qn) edge (qn2);
		\path[->] (qn1) edge (qf);
		\path[->] (qn2) edge (qf);
		\path[->] (qf) edge (q);
	\end{tikzpicture}
	\caption{Automaton $M_{\alpha}$ with an exponential number of test combinations}
	\label{fig:exponentialtransition}
\end{figure}

For this example, it is easy to see that these exponentially many different combinations of transitions could equally be represented by a formula of a much smaller size, namely $(p_{0,1} \lor p_{0,2}) \land (p_{1,1} \lor p_{1,2}) \land ... \land (p_{n,1} \lor p_{n,2})$.
This is due to the fact that conjunction and disjunction closely resemble the behaviour of concatenation and sum constructions in $M_{\alpha}$ when considering $\varepsilon$-paths for the construction of $\rho$.
We will show here, that using these ideas it is possible to construct such a formula of size not greater than $3 \cdot |\alpha| + 2$ for every $\alpha$.

We proceed by constructing a function $\varepsilon p$ such that $\varepsilon p (q,q',\alpha) = \vartheta$ for a formula $\vartheta$ equivalent to $\bigvee \{ \bigwedge_{\psi_i \in X} v_i \mid q \xRightarrow{\varepsilon}_X q'\}$ for $\xRightarrow{\varepsilon}_X$ constructed from $M_{\alpha}$.
Here we use variables $v_i$ as placeholders for formulas $\psi_i$ to be later replaced by a transition into the corresponding automaton $\mathcal{A}_{\psi_i}$.
For $\tau$-transitions, this can straightforwardly be extended to a function $\tau p$ which can then be used to construct the $\tau$ transition function for a state $q$ in a more succinct way.
For a $\langle \alpha \rangle \varphi$ formula and some $q \in Q_{\alpha}$ we then have $\rho(q,(\mathpzc{s},\tau)) = \bigvee \{q' \land \tau p (q,q',\alpha)[\rho_{\psi_i}(q_{0,\psi_i},(\mathpzc{s},\tau)) / v_i] \mid q' \in Q\}$.
Some remarks are in order for this transition function construction: (i) in this construction, opposed to the one used before, each $q' \in Q$ can occur at most once in the disjunction, (ii) for $q' \in Q$ such that there is no $X$ with $q \xRightarrow{\tau}_X q'$, i.e. $q'$ is not reachable from $q$ with $\tau$, we have $\tau p (q,q',\alpha) \equiv false$ and thus the state can be eliminated from the disjunction and (iii) for $[.]$ formulas, $\tau p$ and $\rho$ can similarly be constructed in a dual way.

Since there is no direct way to create the desired formula for $\alpha = \alpha_1^*$ while meeting the size constraints, we cannot do a direct inductive construction for $\varepsilon p$.
Instead we perform an inductive construction $dp$ that is similar to $\varepsilon p$ but does not take \textit{backwards edges} $(q_f,q_0)$ originating from $\alpha^*$-constructions into account.
Then, $\varepsilon p$ can be constructed from $dp$ by only considering a single backwards edge for each pair of states.
The idea behind this is that each path $p_*$ considering more than one backwards edge is \textit{subsumed} by some path $p_1$ considering only one backwards edge in the sense that if $p_*$ visits the set $X_*$ of markings and $p_1$ visits the set $X_1$ of markings, then $X_1 \subseteq X_*$.
Then, the conjunction over $X_*$ is implied by the conjunction over $X_1$.
Since in the construction of $\rho$, we perform a disjunction over all paths with a conjunction over all seen markings inside, we can then omit $p_*$ from the disjunction.

\textbf{Construction of $dp$ and $\varepsilon p$:}
First, we construct $dp$ inductively.

\begin{align*}
	\alpha = \tau \quad& dp(q,q',\alpha) = \begin{cases}
	\false &\text{if } q \neq q' \\
	\true &\text{else}
	\end{cases} \\
	\alpha = \varepsilon \quad& dp(q,q',\alpha) = \begin{cases}
	\false &\text{if } q = q_1 \text{ and } q' = q_0 \\
	\true &\text{else}
	\end{cases} \\
	\alpha = \alpha_1 + \alpha_2 \quad& dp(q,q',\alpha) = \begin{cases}
	dp (q,q',\alpha_i) &\text{if } q,q' \in Q_i \\
	dp (q_{0,i},q',\alpha_i) &\text{if } q = q_0 \text{ and } q' \in Q_i \\
	dp (q,q_{f,i},\alpha_i) &\text{if } q \in Q_i \text{ and } q'= q_f \\
	dp (q_{0,1},q_{f,1},\alpha_1) \lor dp (q_{0,2},q_{f,2},\alpha_2) &\text{if } q = q_0 \text{ and } q' = q_f \\
	\false &\text{else}
	\end{cases} \\
	\alpha = \alpha_1 \cdot \alpha_2 \quad& dp(q,q',\alpha) = \begin{cases}
	dp (q,q',\alpha_i) &\text{if } q,q' \in Q_i \\
	dp (q,q_{f,1},\alpha_1) \land dp (q_{0,2},q',\alpha_2)&\text{if } q \in Q_1 \text{ and } q' \in Q_2 \\
	\false &\text{else}
	\end{cases} \\
	\alpha = (\alpha_1)^* \quad& dp(q,q',\alpha) = \begin{cases}
	dp (q,q',\alpha_1) &\text{if } q,q' \in Q_1 \\
	\true &\text{if } q = q_0 \text{ and } q' = q_f \\
	dp (q_{0,1},q',\alpha_1) &\text{if } q = q_0 \text{ and } q' \in Q_1 \\
	dp (q,q_{f,1},\alpha_1) &\text{if } q \in Q_1 \text{ and } q' = q_f \\
	\false &\text{else}
	\end{cases} \\
	\alpha = \psi_k? \quad& dp(q,q',\alpha) = \begin{cases}
	\true &\text{if } q = q' \neq q_1 \\
	\false &\text{if } q = q_i, q' = q_j, i > j \\
	v_k &\text{else}
	\end{cases}
\end{align*}
Using $dp$, we are now able to construct $\varepsilon p$ directly for all $q,q'$ and $\alpha$.
\begin{align*}
	\varepsilon p (q,q',\alpha) = dp(q,q',\alpha) \lor (dp(q,q_{f,\bar{\alpha}},\alpha) \land dp(q_{0,\bar{\alpha}},q',\alpha))
\end{align*}
Here $\bar{\alpha}$ is the innermost $*$-construction that contains both $q$ and $q'$.
In case no such $\bar{\alpha}$ exists, both $dp(q,q_{f,\bar{\alpha}},\alpha)$ and $dp(q_{0,\bar{\alpha}},q',\alpha)$ are given by $\false$ instead.

\textbf{Theoretical justification:}

In order to use this succinct alternative in our construction, we have to argue that it indeed has the desired properties.
Therefore we establish a number of theorems:

\begin{theorem}
	$|dp (q,q',\alpha)| \leq |\alpha|$ and $|\varepsilon p (q,q',\alpha)| \leq 3 \cdot |\alpha| + 2$ for all $q,q'$ and $\alpha$.
\end{theorem}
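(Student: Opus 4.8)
The plan is to prove the $dp$ bound by structural induction on $\alpha$ and then read off the $\varepsilon p$ bound directly from its defining equation. Throughout I use the standard size measure in which $\true$, $\false$ and each placeholder $v_k$ have size $1$, and $|\vartheta_1 \lor \vartheta_2| = |\vartheta_1 \land \vartheta_2| = |\vartheta_1| + |\vartheta_2| + 1$; correspondingly $|\alpha_1 + \alpha_2| = |\alpha_1 \cdot \alpha_2| = |\alpha_1| + |\alpha_2| + 1$ and $|\alpha_1^*| = |\alpha_1| + 1$, while the atomic programs $\tau$, the expression $\varepsilon$ and the tests $\psi_k?$ have size $1$. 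The induction establishes $|dp(q,q',\alpha)| \le |\alpha|$ simultaneously for all pairs of states $q,q'$, so that in the inductive cases the hypothesis may be applied to the subexpressions $\alpha_i$ at whatever state pairs the definition requires.

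In the three base cases $\alpha \in \{\tau, \varepsilon, \psi_k?\}$ the value of $dp$ is always one of $\true$, $\false$ or $v_k$, hence of size $1 = |\alpha|$. The interesting inductive cases are sum and concatenation. For $\alpha = \alpha_1 + \alpha_2$ every branch of the definition is either a single term $dp(\cdot,\cdot,\alpha_i)$ of size at most $|\alpha_i| \le |\alpha|$, the constant $\false$, or the disjunction $dp(q_{0,1},q_{f,1},\alpha_1) \lor dp(q_{0,2},q_{f,2},\alpha_2)$, whose size is bounded by $|\alpha_1| + |\alpha_2| + 1 = |\alpha|$; the decisive point is that the extra connective is exactly paid for by the $+1$ that the sum operator contributes to $|\alpha|$. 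The concatenation case is identical with $\land$ in place of $\lor$. For $\alpha = \alpha_1^*$ and for the remaining default branches, $dp$ reduces to a single recursive call on $\alpha_1$ (or to a constant), so its size is at most $|\alpha_1| = |\alpha| - 1 < |\alpha|$, and the bound holds with room to spare.

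Given $|dp(q,q',\alpha)| \le |\alpha|$, the bound on $\varepsilon p$ is immediate: its definition $\varepsilon p(q,q',\alpha) = dp(q,q',\alpha) \lor (dp(q,q_{f,\bar{\alpha}},\alpha) \land dp(q_{0,\bar{\alpha}},q',\alpha))$ combines three $dp$-formulas, each taken over the full expression $\alpha$ and hence of size at most $|\alpha|$, using exactly one $\lor$ and one $\land$; thus $|\varepsilon p(q,q',\alpha)| \le 3|\alpha| + 2$, and the degenerate case where no enclosing star $\bar{\alpha}$ exists only makes the formula smaller. There is no deep obstacle here — the argument is a routine structural induction — but care is needed on two points: the induction must be stated uniformly over all state pairs so that the hypothesis is available for the subexpression calls, and the size accounting in the sum and concatenation cases must exploit the fact that each binary program operator contributes a $+1$ to $|\alpha|$ that precisely absorbs the $\lor$ or $\land$ introduced by $dp$. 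Checking that the definition of $\varepsilon p$ never introduces more than these two additional connectives then completes the bound.
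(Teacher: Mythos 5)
Your proof is correct and follows essentially the same route as the paper's own argument: a structural induction on $\alpha$ showing that each case of $dp$ adds at most one connective (paid for by the operator's contribution to $|\alpha|$) and uses each recursive call at most once, after which the $\varepsilon p$ bound is read off directly from its defining equation as three $dp$-terms joined by two connectives. You merely spell out the case analysis and size accounting that the paper leaves as "easy to see," which is a faithful completion rather than a different approach.
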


\begin{proof}
	The first claim can be established by a straightforward structural induction on $\alpha$.
	It is easy to see that in each case of the construction, at most one operator is added to $dp (q,q',\alpha)$ and each partial term is used at most once.
	
	The second claim follows directly from the first claim and the definition of $\varepsilon p$.
\end{proof}

\begin{lemma}\label{altconstructionlemma}
	We have $dp(q,q',\alpha) \equiv \bigvee \{\bigwedge_{\psi_i \in X} v_i \mid q \xRightarrow{\varepsilon}_X q'\}$ for $\xRightarrow{\varepsilon}_X$ constructed from $M_{\alpha}$ where all backwards edges from $*$-constructions are removed.
\end{lemma}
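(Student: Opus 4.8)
The plan is to prove the equivalence by structural induction on $\alpha$, following exactly the inductive clauses that define $dp$; throughout, $\xRightarrow{\varepsilon}_X$ is read in the automaton $M_\alpha$ from which every backward edge $(q_f,q_0)$ of a $*$-construction has been deleted, and I abbreviate the claimed right-hand side by $P_\alpha(q,q') := \bigvee\{\bigwedge_{\psi_i \in X} v_i \mid q \xRightarrow{\varepsilon}_X q'\}$. Before the induction I would record one preliminary observation used in every step: in each $M_\beta$ the initial state $q_{0,\beta}$ and the final state $q_{f,\beta}$ carry the empty marking. Consequently all \emph{bridge} $\varepsilon$-edges introduced by the compound constructions --- $q_0 \to q_{0,i}$ and $q_{f,i}\to q_f$ for $+$, $q_{f,1}\to q_{0,2}$ for $\cdot$, and $q_0\to q_{0,1}$, $q_0\to q_f$ for $*$ --- join unmarked endpoints, so traversing such an edge contributes nothing to the marking set $X$. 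This is what lets a constant-$\true$ bridge traversal disappear from the marking formula.

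For the base cases $\alpha \in \{\tau, \varepsilon, \psi_k?\}$ I would simply enumerate all pairs $(q,q')$. These automata contain no $\varepsilon$-cycles and only two or three states, so forward $\varepsilon$-reachability is immediate: either there is a unique path (whose marking set is then fixed) or none. Matching the resulting collapsed disjunction against the case split of $dp$ is then a finite check; the only nontrivial point is $\psi_k?$, where a forward path reaches the marked state $q_1$ precisely when it ends at or beyond $q_1$, producing the literal $v_k$, whereas the reflexive paths at the unmarked $q_0$ and $q_2$ produce the empty conjunction $\true$.

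For the inductive cases I would exploit the coarse connectivity of each construction. In the $+$ case the two subautomata meet only through the source $q_0$ and the sink $q_f$, so any forward path either stays inside a single $Q_i$ or factors through exactly one (transparent) bridge edge, and $P_\alpha$ reduces to the appropriate $P_{\alpha_i}$ (or their disjunction when $q=q_0,q'=q_f$), which the induction hypothesis identifies with $dp$. The $\cdot$ case is the heart of the argument and I expect it to be the main obstacle: since the only link from $Q_1$ to $Q_2$ is the single unmarked edge $q_{f,1}\to q_{0,2}$ and nothing returns, every forward path from $q\in Q_1$ to $q'\in Q_2$ crosses this edge exactly once and hence factors uniquely as a $Q_1$-path $q\to q_{f,1}$ with markings $X_1$ followed by a $Q_2$-path $q_{0,2}\to q'$ with markings $X_2$, its total marking being $X_1\cup X_2$. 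Using $\bigwedge_{\psi_i\in X_1\cup X_2} v_i \equiv \bigl(\bigwedge_{\psi_i\in X_1} v_i\bigr)\wedge\bigl(\bigwedge_{\psi_i\in X_2} v_i\bigr)$ together with the distributive law $(\bigvee_a A_a)\wedge(\bigvee_b B_b)\equiv\bigvee_{a,b}(A_a\wedge B_b)$, the disjunction over all such paths equals $P_{\alpha_1}(q,q_{f,1})\wedge P_{\alpha_2}(q_{0,2},q')$, matching the corresponding $dp$ clause by the induction hypothesis. I would isolate this series/parallel manipulation as a small algebraic sublemma, since the same distributive step recurs.

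Finally, in the $*$ case the deletion of the loop edge $q_f\to q_0$ makes $q_f$ a sink and forbids iterating the body, so every forward path makes at most one pass through $M_{\alpha_1}$; the four reachable path shapes reduce to the appropriate restriction of $P_{\alpha_1}$, exactly as prescribed by $dp$. The clause $dp(q_0,q_f,(\alpha_1)^*)\equiv\true$ is explained by the direct unmarked edge $q_0\to q_f$, which contributes the empty conjunction $\true$ as a disjunct and so collapses the entire disjunction regardless of the body. The remaining work is routine bookkeeping: checking that the case split in each clause is exhaustive over all $(q,q')$ and that every pair not matched by an explicit clause genuinely admits no connecting forward path in the backward-edge-free automaton, hence correctly yields $\false$.
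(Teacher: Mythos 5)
Your proposal is correct and takes essentially the same route as the paper's own proof: a structural induction on $\alpha$ with a case analysis on where $q$ and $q'$ lie, exploiting that the bridge $\varepsilon$-edges of the $+$, $\cdot$ and $*$ constructions join unmarked initial/final states, that paths factor uniquely through these bridges once backward edges are removed, and that the distributive law turns the resulting sum over factored paths into the conjunction/disjunction shape of the $dp$ clauses. The paper differs only in presentation, inlining the two observations you isolate up front (unmarked bridge endpoints, the series/parallel distributivity step) into the individual induction cases.
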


\begin{proof}
	We show this claim by a structural induction on $\alpha$.
	
	\textbf{Case \boldmath$\alpha = \tau$\unboldmath:}
	There are two unmarked states $q_0,q_1$ in $M_{\alpha}$ with a $\tau$-transition connecting them.
	There are no $\varepsilon$-transitions.
	Thus, we have $q \xRightarrow{\varepsilon}_X q'$ iff $q = q'$ and $X = \emptyset$.
	Therefore we have $\bigvee \{\bigwedge_{\psi_i \in X} v_i \mid q \xRightarrow{\varepsilon}_X q'\} \equiv \false$ for $q \neq q'$ and $\bigvee \{\bigwedge_{\psi_i \in X} v_i \mid q \xRightarrow{\varepsilon}_X q'\} \equiv \true$ for $q = q'$, establishing the claim.
	
	\textbf{Case \boldmath$\alpha = \varepsilon$\unboldmath:}
	There are two unmarked states $q_0,q_1$ in $M_{\alpha}$ with an $\varepsilon$-transition connecting $q_0$ to $q_1$.
	Thus, we have $q \xRightarrow{\varepsilon}_X q'$ iff $X = \emptyset$ and either $q \neq q_1$ or $q' \neq q_0$.
	Therefore we have $\bigvee \{\bigwedge_{\psi_i \in X} v_i \mid q \xRightarrow{\varepsilon}_X q'\} \equiv \false$ for $q = q_1$,$q' = q_0$ and $\bigvee \{\bigwedge_{\psi_i \in X} v_i \mid q \xRightarrow{\varepsilon}_X q'\} \equiv \true$ else, establishing the claim.
	
	\textbf{Case \boldmath$\alpha = \alpha_1 + \alpha_2$\unboldmath:}
	By induction hypothesis, the claim holds for $\alpha_1$ and $\alpha_2$.
	To obtain $M_{\alpha}$ from $M_{\alpha_1}$ and $M_{\alpha_2}$, a new starting and final state are added with $\varepsilon$-transitions to the old starting states and from the old final states, respectively.
	We consider different cases how a path inducing $q \xRightarrow{\varepsilon}_X q'$ could have been constructed.
	In the first case, where both $q$ and $q'$ are in the same automaton $M_{\alpha_i}$, no additional paths could have been introduced by the new transitions.
	Thus, the claim follows immediately from the induction hypothesis.
	In the second case, where $q = q_0$ and $q'$ is in $M_{\alpha_i}$ a path must take the $\varepsilon$-transition to $q_{0,i}$ and then take a path between $q_{0,i}$ and $q'$.
	Since $q_0$ is not marked, we have $q_0 \xRightarrow{\varepsilon}_X q'$ iff $q_{0,i} \xRightarrow{\varepsilon}_X q'$, establishing the claim by induction hypothesis.
	The third case, where $q$ is in $M_{\alpha_i}$ and $q' = q_f$ is analogous to the second one.
	Another case is $q = q_0$ and $q' = q_f$.
	Since $M_{\alpha_1}$ and $M_{\alpha_2}$ are not connected, we have $q \xRightarrow{\varepsilon}_X q'$ iff $q_{0,1} \xRightarrow{\varepsilon}_X q_{f,1}$ or $q_{0,2} \xRightarrow{\varepsilon}_X q_{f,2}$ with the same argument as used in cases two and three.
	Since no paths are added from $q_{0,i}$ to $q_{f,i}$ when going over from $M_{\alpha_i}$ to $M_{\alpha}$, $q_{0,i} \xRightarrow{\varepsilon}_X q_{f,i}$ holds for $\xRightarrow{\varepsilon}_X$ constructed from $M_{\alpha_i}$ iff it holds for $\xRightarrow{\varepsilon}_X$ constructed from $M_{\alpha}$.
	Therefore we have $\bigvee \{\bigwedge_{\psi_i \in X} v_i \mid q \xRightarrow{\varepsilon}_X q' \} \equiv \bigvee \{\bigwedge_{\psi_i \in X} v_i \mid q_{0,1} \xRightarrow{\varepsilon}_X q_{f,1} \} \lor \bigvee \{\bigwedge_{\psi_i \in X} v_i | q_{0,2} \xRightarrow{\varepsilon}_X q_{f,2} \} \equiv dp (q_{0,1},q_{f,1},\alpha_1) \lor dp (q_{0,2},q_{f,2},\alpha_2) = dp (q,q',\alpha)$ using the induction hypothesis.
	The remaining cases include $q = q_f$ with $q' = q_0$ and $q$ being in $M_{\alpha_i}$ with $q'$ being in $M_{\alpha_{1-i}}$.
	In both cases, $q'$ is not reachable from $q$ using only $\varepsilon$-transitions.
	Thus, the claim is established with similar arguments as in previous cases.
	
	\textbf{Case \boldmath$\alpha = \alpha_1 \cdot \alpha_2$\unboldmath:}
	We consider three cases.
	In the first one, $q$ and $q'$ are both in $M_{\alpha_i}$.
	Since only transitions from $M_{\alpha_1}$ into $M_{\alpha_2}$ are possible but not backwards, a path inducing $q \xRightarrow{\varepsilon}_X q'$ has to stay in $M_{\alpha_i}$ the whole time.
	The claim then follows from the induction hypothesis.
	In the second case, $q$ is in $M_{\alpha_1}$ and $q'$ is in $M_{\alpha_2}$.
	A path from $q$ to $q'$ has to transition through $q_{f,1}$ and $q_{0,2}$ to be able to switch automata, thus we have $q \xRightarrow{\varepsilon}_X q'$ iff $q \xRightarrow{\varepsilon}_Y q_{f,1}$ and $q_{0,2} \xRightarrow{\varepsilon}_Z q'$ for some $Y,Z$ with $X = Y \cup Z$.
	Since for state pairs inside one of the subautomata it does not matter whether $\xRightarrow{\varepsilon}_X$ was constructed from $M_{\alpha}$ or $M_{\alpha_i}$, the claim follows from the induction hypothesis.
	In the last case, $q$ is in $M_{\alpha_2}$ and $q'$ is in $M_{\alpha_1}$.
	Since $q'$ is not reachable from $q$, $q \xRightarrow{\varepsilon}_X q'$ can not hold for any $X$ and the claim follows immediately.
	
	\textbf{Case \boldmath$\alpha = \alpha_1^*$\unboldmath:}
	We consider five cases.
	In the first case, we have $q,q' \in Q_1$.
	Since the backwards edge that was added during the construction is ignored for this lemma, no new paths from $q$ to $q'$ are added compared to $M_{\alpha_1}$.
	Therefore the claim follows from the induction hypothesis.
	In the second case, we have $q = q_0$ and $q' = q_f$.
	The claim follows immediately from the fact that there is an $\varepsilon$-edge in between the two states.
	The third case, where $q = q_0$ and $q' \in Q_1$, and the fourth case, where $q \in Q_1$ and $q' = q_f$ work in a similar way by considering the added $\varepsilon$ edges between old and new starting and final states and by using the induction hypothesis.
	In the last case, we have $q = q_f$ and $q' = q_0$.
	The claim holds since the backwards edge connecting the two states is ignored for this lemma.
	
	\textbf{Case \boldmath$\alpha = \psi_k ?$\unboldmath:}
	We consider the different cases how $q'$ can be reached by  $\varepsilon$-transitions from $q$ in $M_{\alpha}$.
	In the first case, $q = q'$ with $q \neq q_1$, we have trivial reachability without encountering a state marking.
	Here, the claim is established as in previous cases.
	In the second case, where $q = q_i, q' = q_j$ with $i > j$, $q'$ is not reachable from $q$ since the $\varepsilon$-transitions only point in the other direction.
	The claim is again established as in previous cases.
	In all other cases, $q'$ can be reached from $q$ with $\varepsilon$-transitions, but only with encountering the state marking in $q_1$.
	Since this is the only state marking in $M_{\alpha}$, we have $\bigvee \{\bigwedge_{\psi_i \in X} v_i \mid q \xRightarrow{\varepsilon}_X q'\} \equiv v_i = dp (q,q',\alpha)$, establishing the claim.
\end{proof}

\begin{theorem}
	We have $\varepsilon p(q,q',\alpha) \equiv \bigvee \{\bigwedge_{\psi_i \in X} v_i \mid q \xRightarrow{\varepsilon}_X q'\}$ for $\xRightarrow{\varepsilon}_X$ constructed from the automaton $M_{\alpha}$.
\end{theorem}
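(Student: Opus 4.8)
The plan is to read both sides as monotone Boolean functions over the placeholder variables $v_i$ and to prove the two implications separately, reducing everything to \autoref{altconstructionlemma}. Writing $D(q,q') := \bigvee\{\bigwedge_{\psi_i \in X} v_i \mid q \xRightarrow{\varepsilon}_X q'\}$ for the right-hand side (with $\xRightarrow{\varepsilon}$ taken in the \emph{full} $M_\alpha$), and recalling that by \autoref{altconstructionlemma} $dp(q,q',\alpha)$ equals the analogous disjunction restricted to $\varepsilon$-paths avoiding all backward edges $q_{f,\beta}\to q_{0,\beta}$ of $*$-subconstructions, the definition $\varepsilon p(q,q',\alpha) = dp(q,q',\alpha) \lor \big(dp(q,q_{f,\bar\alpha},\alpha) \land dp(q_{0,\bar\alpha},q',\alpha)\big)$ already splits the work.

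First I would treat the easy direction, that every assignment satisfying $\varepsilon p(q,q',\alpha)$ satisfies $D(q,q')$. For the first disjunct this is immediate: a backward-edge-free path is in particular a path of the full automaton, so $dp(q,q',\alpha)$ implies $D(q,q')$. For the second disjunct, an assignment satisfying $dp(q,q_{f,\bar\alpha},\alpha) \land dp(q_{0,\bar\alpha},q',\alpha)$ yields, via \autoref{altconstructionlemma}, a backward-free path $q \xRightarrow{\varepsilon}_Y q_{f,\bar\alpha}$ and one $q_{0,\bar\alpha} \xRightarrow{\varepsilon}_Z q'$; since $\bar\alpha$ is a $*$-construction, $M_\alpha$ contains the backward edge $q_{f,\bar\alpha}\to q_{0,\bar\alpha}$, and concatenating the two segments through it produces a full path $q \xRightarrow{\varepsilon}_{Y\cup Z} q'$, so the assignment satisfies the minterm $\bigwedge_{\psi_i\in Y\cup Z} v_i$ of $D(q,q')$.

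The substance is the converse: every full $\varepsilon$-path must be subsumed by $\varepsilon p$. Given $q \xRightarrow{\varepsilon}_X q'$, I would reduce the witnessing path to one using at most the single backward edge $b_{\bar\alpha} = q_{f,\bar\alpha}\to q_{0,\bar\alpha}$ while only shrinking the marking set (monotonicity then finishes). The reduction rests on two structural facts about $M_\alpha$, each proved by induction over the construction: (i) every subautomaton $M_{\alpha_i}$ is entered by external $\varepsilon$-edges only at its initial state $q_{0,i}$ and left only from its final state $q_{f,i}$ (single entry/single exit), so each $*$-region $R_\beta$ is a single-entry/single-exit subgraph whose only internal cycle is $b_\beta$; and (ii) \emph{collapse at region boundaries}: from the top $q_{0,\beta}$ of a region, full $\varepsilon$-reachability coincides with backward-edge-free reachability up to shrinking the markings, and dually every state reaching the bottom $q_{f,\beta}$ does so backward-free up to shrinking markings. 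Fact (ii) holds because any backward edge merely returns control to the top of some (inner) region that was already reached going forward, so looping never extends the set of reachable states nor removes markings.

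With these in hand the endgame is a case analysis on the witnessing path $p$, where $\bar\alpha$ is the innermost $*$-construction whose region contains both $q$ and $q'$. If $p$ reaches $q'$ without ever leaving $R_{\bar\alpha}$ and without taking $b_{\bar\alpha}$, then $q$ and $q'$ lie in forward-ordered subregions and fact (ii) (backward to the bottom of $q$'s subregion, then forward, then forward from the top of $q'$'s subregion) yields a backward-free $q\xRightarrow{\varepsilon}q'$ with markings $\subseteq X$, matching the first disjunct. Otherwise $p$ visits $q_{f,\bar\alpha}$ and, after it, $q_{0,\bar\alpha}$ — either through $b_{\bar\alpha}$ itself or by exiting and re-entering $R_{\bar\alpha}$, the latter excursion being removable by (i); the prefix up to the first $q_{f,\bar\alpha}$ gives, by (ii), a backward-free $q\xRightarrow{\varepsilon}_Y q_{f,\bar\alpha}$ and the suffix from the last $q_{0,\bar\alpha}$ a backward-free $q_{0,\bar\alpha}\xRightarrow{\varepsilon}_Z q'$, both with $Y,Z\subseteq X$, matching the second disjunct. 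The main obstacle is exactly the boundary-collapse fact (ii): making precise that nested backward edges can always be traded for the single innermost edge $b_{\bar\alpha}$ without accruing new markings, for which I expect the single-entry/single-exit property (i) to be the decisive lever.
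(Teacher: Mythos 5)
Your overall architecture is the same as the paper's: read both sides as monotone Boolean functions of the $v_i$; get the easy direction by concatenating two backward-free witnesses through the edge $q_{f,\bar\alpha} \to q_{0,\bar\alpha}$; and get the hard direction by showing every $\varepsilon$-path is \emph{subsumed} (its marking set only shrinks) by either a backward-free path or a path using that single backward edge exactly once, with the single-entry/single-exit structure of $M_\alpha$ doing the real work. Your endgame case split (excursions that leave and re-enter $R_{\bar\alpha}$ replaced by $b_{\bar\alpha}$ itself; prefix up to the \emph{first} visit of $q_{f,\bar\alpha}$; suffix from the \emph{last} visit of $q_{0,\bar\alpha}$) matches the paper's proof, and isolating fact (i) as an explicit lemma is a genuine improvement in rigor over the paper, which uses it only implicitly.

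The gap is in your justification of fact (ii). You justify \emph{both} halves of (ii) by a single entry-cut argument: ``any backward edge merely returns control to the top of some (inner) region that was already reached going forward.'' This is fine for the first half and for your suffix (which starts at $q_{0,\bar\alpha}$, hence enters every nested star-region through its top), but it is false for the dual half, and the dual half is exactly what your prefix needs. Since $\bar\alpha$ is only required to be the innermost star containing \emph{both} $q$ and $q'$, the state $q$ may itself lie strictly inside a deeper star-region $R_\gamma$ with $\gamma$ properly nested in $\bar\alpha$; the prefix can then run from $q$ to $q_{f,\gamma}$, take $b_\gamma$, and land on $q_{0,\gamma}$ --- a state it has \emph{never} visited, so there is no loop at $q_{0,\gamma}$ to cut. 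The repair is the dual, exit-cut argument, which is the one the paper's proof actually gives: because the prefix's target $q_{f,\bar\alpha}$ lies outside $R_\gamma$, single exit forces the prefix to revisit $q_{f,\gamma}$ after taking $b_\gamma$, so one cuts the loop running from the first to the last visit of $q_{f,\gamma}$, removing the backward edge while only shrinking markings. (Symmetrically, the paper states \emph{only} the exit-cut, which by itself does not handle a suffix ending deep inside a nested region, where $q_{f,\gamma}$ need never be revisited; a fully rigorous proof needs both cuts, entry-cuts for the suffix and exit-cuts for the prefix, paired exactly as your case analysis pairs them --- your proof just supplies the wrong one of the two as the universal justification.)
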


\begin{proof}
	Compared to the claim made about $dp$ in \autoref{altconstructionlemma}, backwards edges must now be considered in our claim about $\varepsilon p$.
	The central observation is that the contribution of all $\varepsilon$-paths from $q$ to $q'$ is already captured by two particular types of $\varepsilon$-paths: either by going from $q$ to $q'$ directly without taking a backwards edge, or by taking only the backwards edge from the construction of $M_{\bar{\alpha}}$ exactly once (where $\bar{\alpha}$ is the innermost *-construction that contains both $q$ and $q'$).
	As was shown in \autoref{altconstructionlemma}, the first type is captured by $dp(q,q',\alpha)$.
	It is also straightforward to see from \autoref{altconstructionlemma} that the second type is captured by $dp(q,q_{f,\bar{\alpha}},\alpha) \land dp(q_{0,\bar{\alpha}},q',\alpha)$.
	
	We now argue that  further backwards edges need not be considered and thus all paths are subsumed by these two cases.
	In order to use a backwards edge outside of $M_{\bar{\alpha}}$, a path has to leave $M_{\bar{\alpha}}$ via $q_{f,\bar{\alpha}}$ and finally reenter it via $q_{0,\bar{\alpha}}$.
	The contribution of such paths to the disjunction is subsumed by paths taking the backwards edge from $q_{f,\bar{\alpha}}$ to $q_{0,\bar{\alpha}}$ directly.
	Backwards edges on the paths from $q$ to $q_{f,\bar{\alpha}}$ or on the paths from $q_{0,\bar{\alpha}}$ to $q'$ on the other hand that originate in a final state $q_f$ of some subautomaton only lead to paths that later visit $q_f$ a second time. Thus their contribution is again subsumed by the contribution of the path where loops from $q_f$ to itself are cut out.
\end{proof}

\end{document}